\documentclass{amsart}
\usepackage{amssymb,amsmath,amsthm,mathtools, geometry, amssymb, verbatim, esint, soul}
\usepackage{a4wide}
\usepackage{float}
\usepackage{graphicx}
\usepackage[utf8]{inputenc} 
\usepackage{a4wide}
\usepackage{tikz}

\usepackage{tkz-euclide}
\usetikzlibrary{calc}
\usepackage{ulem}
\usepackage{stackengine,scalerel,graphicx}
\stackMath
\usepackage{hyperref}
\usepackage{amsfonts} 
\usepackage{latexsym}
\usepackage[font=small,format=hang,labelfont={sf,bf}]{caption}
\usepackage{epsfig}
\usepackage{enumitem}
\usepackage{subfig}
\usepackage{url}
\usepackage{varioref}

\usepackage{bm}
\usepackage{color}
\usepackage{mathrsfs}
\usepackage{latexsym}
\usepackage{bbm}
\usepackage{faktor}
\usepackage{yhmath}
\usepackage{empheq}
\usepackage{mathtools}
\usepackage{color}
\usepackage{marginnote}
\usetikzlibrary{patterns}
\usepackage{comment}
\usepackage{cancel}
\usetikzlibrary{arrows.meta}

\allowdisplaybreaks[1]

\newcommand{\dist}{\mathrm{dist}\,}          

\DeclareMathOperator{\p}{\partial}

\DeclareMathOperator{\id}{Id}

\DeclareMathOperator{\R}{\mathbb{R}}

\DeclareMathOperator{\Z}{\mathbb{Z}}
\DeclareMathOperator{\N}{\mathbb{N}}

\newcommand*{\mc}[1]{\mathcal{#1}}
\DeclareMathOperator{\lra}{\Leftrightarrow}
\newcommand{\vertiii}[1]{{\left\vert\kern-0.25ex\left\vert\kern-0.25ex\left\vert #1 
		\right\vert\kern-0.25ex\right\vert\kern-0.25ex\right\vert}}
\def\avint{\mathop{\,\rlap{-}\kern-1.1ex\int}\nolimits}

\newtheorem{Theorem}{Theorem}[section]
\newtheorem{Def}[Theorem]{Definition}

\newtheorem{prop}[Theorem]{Proposition}
\newtheorem{lem}[Theorem]{Lemma}
\newtheorem{cor}[Theorem]{Corollary}
\newtheorem{rem}[Theorem]{Remark}

\newcommand{\eps}{\varepsilon}

\newcommand{\EEE}{\color{black}}

\makeatletter
\renewcommand{\theequation}{\thesection.\arabic{equation}}
\@addtoreset{equation}{section}
\makeatother

\newtheorem{theorem}{Theorem}[section]
\newtheorem*{theorem*}{Theorem}

\theoremstyle{definition}
\newtheorem{definition}[theorem]{Definition}
\newtheorem{remark}[theorem]{Remark}

\title[Emergence of Polycrystals]{Emergence of rigid polycrystals from atomistic systems with general interactions}

\author[L.~Kreutz]{Leonard~Kreutz}
\address[Leonard Kreutz]{Technische Universit\"at M\"unchen, Boltzmannstrasse 3, 85748 Garching, Germany	}
\email[]{kleo@cit.tum.de}
\author[T.~Ziereis]{Timo~Ziereis}
\address[Timo Ziereis]{Technische Universit\"at M\"unchen, Boltzmannstrasse 3, 85748 Garching, Germany	}
\email[]{timo.ziereis@tum.de}

\date{\today}	

\newcounter{fig}
\newcounter{thm}
\begin{document}

\begin{abstract}
We investigate the formation of polycrystalline structures in a class of particle systems. The atomistic energy is modeled as a sum of particle energies that favor atoms being locally isometric to a reference lattice. The discrete frame invariant energy allows for particle configurations in which no underlying lattice is assumed a priori.

We prove a discrete-to-continuum limit for configurations with finite surface-energy scaling by means of $\Gamma$-convergence. The resulting continuum theory is described by piecewise constant fields encoding the local orientation of the configuration. The limiting energy is concentrated on grain boundaries, corresponding to the interfaces between regions where the microscopic configuration has constant orientation. The associated energy density depends on the orientations of the two grains as well as on the normal to the interface.

Due to our assumptions on the rigid interactions, solid–solid phase transitions with interpolating boundary layers are not energetically favorable; the energy density therefore decomposes into twice the energy density for solid–vacuum transitions.
\end{abstract}

\maketitle

	\vskip5pt
	\noindent
	\textsc{Keywords: Polycrystals, Crystallization, atomic interaction potentials,\\ $\Gamma$-convergence} 
	\vskip5pt
	\noindent
	\textsc{AMS subject classifications: 49J45, 82D25, 82B24}

\tableofcontents

\section{Introduction}

Crystallization \cite{BlancLewin:15} is a fundamental phenomenon in condensed matter physics and materials science, describing the spontaneous organization of atoms into ordered lattice structures. At the microscopic level, this process is governed by quantum-mechanical interactions between atoms, which determine effective interatomic potentials \cite{Molecular}. Understanding how these interactions give rise to specific atomic arrangements can be formulated mathematically as an energy minimization problem: stable configurations correspond to arrangements of atoms that minimize the total interaction energy. From a mathematical perspective, the study of such minimization problems provides insight into how microscopic interaction laws generate macroscopic structures, such as crystalline lattices and polycrystalline materials. In particular, analyzing low-energy configurations and their associated surface energies is essential for understanding the emergence of defects, grain boundaries, and other large-scale features observed in crystalline solids.

\medskip

At zero temperature, thermal fluctuations are absent, and the system’s behavior is determined entirely by the geometry of the atomic configuration. In this regime, stable states are expected to correspond to configurations that minimize the total interaction energy. From a mathematical standpoint, given a configuration of $n$ atoms with positions $\{x_1,\ldots,x_n\} \subset \mathbb{R}^d$ (typically $d=3$, corresponding to the physical space dimension) and a suitable configurational energy $\mathcal{E}$, the crystallization problem can be formulated as the minimization problem
\[
\min_{x_1,\ldots,x_n} \mathcal{E}(x_1,\ldots,x_n)\,.
\]

\medskip

Several rigorous results address crystallization in atomistic interaction models by analyzing the asymptotic behavior of ground states as the number of particles grows. For two-dimensional systems with pair potentials, \cite{Theil:06} and \cite{BeterminDeLucaPetrache} show that the ground-state energy per particle converges to that of a periodic lattice configuration, triangular and square, respectively. Similar results have been obtained for models including angular or multi-body interactions, where the limiting optimal structure is the hexagonal lattice, see e.g. \cite{ELi:09,Farmer-Esedoglu-Smereka}.  

Regarding finite crystallization, the rigorous results available in the literature are so far limited to two dimensions and typically require rather rigid interaction potentials. For instance, crystallization of two-dimensional hard disks on the triangular lattice was proved in \cite{HeitmannRadin:80} (see also \cite{DeLuca-DelNin,Radin:81} for generalizations), and recently revisited in \cite{Lucia} using an approach based on discrete differential geometry. For energies involving both two-body pair interactions and three-body angular terms, crystallization on the square and honeycomb lattices has been established in \cite{MaininiPiovanoStefanelli:14,Mainini}. The analysis has also been extended to ionic compounds, where crystallization on the square and honeycomb lattices was proved in \cite{FriedrichKreutz:19,FriedrichKreutz:20}. More recently, a new approach based on stratification has been developed, leading to crystallization results for the square lattice \cite{FriedrichKreutz:23}, as well as to results on relative crystallization \cite{FriedrichKreutzStefanelli}.

\medskip

In this article, we are interested in \textbf{next-order effects}. More precisely, assuming interatomic interactions that favor crystallization on a lattice $\mathcal{L}$ with energy per atom $\mathrm{m}$, we study configurations whose energy has a given excess relative to the ground state, of the form
\begin{align}\label{intro-ineq:excess-scaling}
\mathcal{E}(\{x_1,\ldots,x_n\}) - n \cdot \mathrm{m} \leq C n^{\frac{d-1}{d}}\,.
\end{align}
This \textbf{surface-scaling regime} allows for singularities along $(d-1)$-dimensional surfaces and is particularly relevant for the formation of \textbf{polycrystalline structures}, consisting of many microscopic crystallites or grains of varying size and orientation. Within each grain, atoms are periodically arranged in a crystalline pattern, while different grains meet at \textbf{grain boundaries}, i.e., $(d-1)$-dimensional interfaces across which the lattice orientation is discontinuous.

\medskip

We now describe the \textbf{model under investigation} and the main goals. Let $X := \{x_1,\ldots,x_n\} \subset \mathbb{R}^d$. We describe the normalized (i.e., relative to the ground-state energy) energy in terms of \textbf{cell energies}, a convenient way to encode local interactions, including possible multi-body terms such as triple-point angular interactions. The excess energy is given by
\begin{align*}
E_1(X) := \sum_{x \in X} E_{\mathrm{cell}}(x,X)\,,
\end{align*}
where the cell energy is normalized so that $E_1(\mathcal{L}) = 0$ and accounts for the local interactions of each point and possibly its neighbors. The precise assumptions on the cell energy are collected in Subsection~\ref{subsec:lattices}, and examples of possible interactions are given in the Appendix.  

To derive a \textbf{continuum description} of low-energy configurations, we introduce the scaled energies
\begin{align*}
E_\varepsilon(X) = \varepsilon^{d-1} \sum_{x \in X} E_{\mathrm{cell}}^\varepsilon(x,X)\,,
\end{align*}
with $E_{\mathrm{cell}}^\varepsilon(x,X) = E_{\mathrm{cell}}(\frac{1}{\varepsilon}x, \frac{1}{\varepsilon} X)$ and $\varepsilon = n^{-1/d}$ representing the typical interparticle distance. This scaling ensures consistency with the surface-scaling energy regime \eqref{intro-ineq:excess-scaling}. We are then interested in the limit $\varepsilon \to 0$ (equivalently $n \to \infty$) and in describing the continuum excess energy via \textbf{\(\Gamma\)-convergence}, see \cite{Braides:02,DalMaso:93}. Our goal is to understand the formation of polycrystals for general lattices in this regime. As a starting point, we follow the strategy of \cite{FriedrichKreutzSchmidt}, which provides a rigorous description of polycrystals on the triangular lattice. Motivated by a similar analysis for the honeycomb lattice, in which well-separated crystallites arise for strong angle potentials, we extract the main features of that approach to extend it to general lattices and energy systems.

\medskip

Our main results include a full \textbf{\(\Gamma\)-convergence proof} for the functionals $E_\varepsilon$ to a surface energy (Theorem~\ref{thm:Gamma-convergence}), and a detailed analysis of the limiting surface energy density (Proposition~\ref{prop:density}, Theorem~\ref{thm:properties-of-phi}) expressed via a \textbf{cell formula}. We also prove a \textbf{compactness result} for sequences with equi-bounded energy (Theorem~\ref{thm:compactness}). The lower bound and explicit characterization of the energy density rely on a structural result for grain boundaries under our specific, very rigid interaction potentials. The continuum description tracks both the orientation and (micro-)translations of grains. However, because interpolation between grains is energetically unfavorable, solid-solid boundaries effectively decompose into two solid-vacuum boundaries, and the surface energy ultimately depends only on the rotational mismatch, not on (micro-)translations.

\medskip

Finally, we comment on the \textbf{proof strategy}. Following standard variational techniques for interfacial energies, the limiting density $\varphi$ is expressed via a \textbf{cell formula} minimizing the asymptotic surface energy between two grains separated by a flat boundary. To establish the $\Gamma$-liminf and $\Gamma$-limsup inequalities, it is crucial to replace $L^1$-converging boundary data with fixed boundary values. This is achieved in three main steps:

\begin{enumerate}
\item Using the \textbf{fundamental estimate} (Section~\ref{sec:L1-to-converging-data}), we perform a cut-off construction to pass from $L^1$-convergence to converging boundary values. Due to the rigidity of the setup, small modifications must be handled carefully to preserve admissibility, see \ref{H1}.  
\item We analyze the minimization problem for interpolating between two grains separated by a flat boundary (Section~\ref{sec:reduction-two-lattices}) and show that, under additional rigidity assumptions, it suffices to consider only two well-separated lattices.  
\item Finally, using the reduction to well-separated lattices, we analyze the \textbf{cell problem for solid-vacuum grain boundaries} (Section~\ref{sec:vacuum-energy}). Combining this with the previous reduction allows us to treat general solid-solid grain boundaries by reducing them to the solid-vacuum case.
\end{enumerate}

\medskip

This paper is organized as follows. Section~\ref{Sec:Setting} introduces the model and states the main results precisely. Section~\ref{sec:3} is devoted to the proofs of compactness and $\Gamma$-convergence. Section~\ref{subsec:auxiliary-results} collects auxiliary results used throughout the paper. Section~\ref{sec:L1-to-converging-data} addresses the fundamental estimate, Section~\ref{sec:reduction-two-lattices} contains the reduction to two well-separated lattices, and Section~\ref{sec:vacuum-energy} proves the passage from converging to fixed boundary values for solid-vacuum grain boundaries. Finally, Section~\ref{sec:converging-fixed-bdryvalues} treats solid-solid grain boundaries and establishes properties of the limiting surface energy density $\varphi$.

\section{Setting and Main Results}\label{Sec:Setting}
In this section, we introduce our model, provide the basic definitions, and present our main results.

\noindent {\bf Notation.}
Throughout the article, we fix an integer $d \ge 2$.
We denote by $e_1, \ldots, e_d$ the standard orthonormal basis of $\R^d$.
For $x, y \in \R^d$, we write $\langle x, y \rangle$ for their scalar product and $|x|$ for the Euclidean norm.
The unit sphere is denoted by
\[
\mathbb{S}^{d-1} := \{ x \in \R^d \mid |x| = 1 \}\,.
\]
For $\nu \in \mathbb{S}^{d-1}$ we write $H^\nu :=\{x \in \mathbb{R}^d\mid \langle x,\nu\rangle=0\}$ and $H^\nu_\pm:=\{x\in\R^d\mid\pm\langle x,\nu\rangle\geq0\}$. For $t \in \R$, we write $\lfloor t \rfloor = \max\{ k \in \Z \mid k \le t \}$.
We denote by $\mathcal{L}^d$ the $d$-dimensional Lebesgue measure and by $\mathcal{H}^k$ (for $k \in \N$) the $k$-dimensional Hausdorff measure.
For a measurable set $E \subset \R^d$, we write $\chi_E$ for its characteristic function, equal to $1$ on $E$ and $0$ elsewhere.
If $E$ is a set of finite perimeter, $\partial^* E$ denotes its {\it essential boundary}, see \cite[Definition 3.60]{Ambrosio-Fusco-Pallara:2000}. We denote by $\mathrm{d}_H$ the Hausdorff distance of sets.

For $r > 0$ and $x \in \R^d$, we denote by $B_r(x)$ the open ball of radius $r$ centered at $x$; when $x = 0$ we simply write $B_r$.
We denote by $\omega_d$ the Lebesgue measure of the unit ball in $\R^d$.

Given $A \subset \R^d$, $\tau \in \R^d$, and $\lambda \in \R$, we set
\begin{align*}
A + \tau &:= \{ x + \tau \mid x \in A \}\,, \quad 
\lambda A &:= \{ \lambda x \mid x \in A \}\,, \quad
(A)_\varepsilon &:= A + B_\varepsilon = \{ x + y \mid x \in A,\, y \in B_\varepsilon \}\,.
\end{align*}
\subsection{Lattices and configurational energy} \label{subsec:lattices}  In this subsection, we introduce our configurations, the configurational energy, and its (local) optimal configurations. \\
\noindent {\bf Lattices and isometries.} In the following we will consider ideal crystals, see \cite{Dolbilin-Lagarias-Senechal}, which are sets $\mc{L}\subset\R^d$ such that there exists $l\in\N,\{x_1,\dots,x_l\}\subset\R^d$ and $L\in \mathrm{GL}(d,\R)$ such that
\begin{align}\label{def:lattice}
\mc{L}:=\bigcup_{i=1}^l(x_i+L\Z^n)\,.
\end{align}
Note that this implies, see \cite{Dolbilin-Lagarias-Senechal, Vince} for reference,
\begin{enumerate}[label=(L\arabic*)]
\item\label{L1} (Discreteness) There exists $r>0$ such that for all $x\in\R^d$ it holds
$$\#\{B_{r}(x)\cap\mc{L}\}\leq 1\,.$$
\item\label{L2} (No large empty regions) There exists $R>0$ such that for all $x\in\R^d$ it holds
$$\#\{B_{R}(x)\cap\mc{L}\}\geq1\,.$$
\item\label{L3} (Periodicity) Defining $v_i=Le_i$  ($i=1,\dots,d$) it holds
$$\mc{L}+v_i=\mc{L}\quad\text{ for all } i=1,\dots,d\,.$$
\item\label{L4} (Symmetry) There exists a finite subgroup $G\subset SO(d)$ such that for all $Q\in G$ it holds
$$Q\mc{L}=\mc{L}\,.$$
\end{enumerate} 
Given $x \in \mathcal{L}$ we denote the Voronoi cell with respect to $\mathcal{L}$ by
\begin{align}\label{def:Voronoi-cell}
V_{\mc{L}}(x):=\{y\in\R^d\mid \vert x-y\vert\leq\vert z-y\vert\ \forall z\in\mc{L}\}\,.
\end{align}
Note, that by \ref{L1}-\ref{L2} for every $x\in\mc{L}$ there exists $0 <r_V <R_V$ such that
\begin{align}\label{eq:ball-inclusion}
\overline{B}_{r_V}(x)\subseteq V_{\mc{L}}(x)\subseteq\overline{B}_{R_V}(x)\,.
\end{align} 
Here, $r_V>0$ is the inradius of $V_{\mathcal{L}}(x)$ and $R_V>0$ is the outerradius of $V_{\mathcal{L}}(x)$ (which we assume to be independent of $x \in \mathcal{L}$). Further, we define the Voronoi relevant distance as $S_V:=\sup\{\vert x-y\vert\mid x,y\in\mc{L},\mc{H}^{d-1}(V_\mc{L}(x)\cap V_\mc{L}(y))\neq0\}$. Note that this definition includes Bravais lattices ($l=1,x_1=0$), such as the triangular lattice in the plane, the integer lattice $\mathbb{Z}^d$, or the face-centered cubic lattice (FCC) in three dimensions, but also multi-lattice structures ($l\geq2$) such as the hexagonal closed-packed lattice (HCP) in three dimensions. Furthermore, note that for a rotated, translated and scaled lattice the Voronoi cell coincides with the rotated, translated and scaled Voronoi cell of our base lattice $\mc{L}$ and the above relations still hold up to scaling all three sets with the same parameter $\varepsilon$, i.e., it holds for all $x\in\mc{L}$:
\begin{align}\label{eq:rotated-Voronoi}
V_{\varepsilon R(\mc{L}+\tau)}(\varepsilon R(x+\tau))=\varepsilon R(V_{\mc{L}}(x)+\tau)\,, \quad \text{ and } \quad
\overline{B}_{r_V\varepsilon}(\varepsilon x)\subseteq V_{\varepsilon\mc{L}}(\varepsilon x)\subseteq \overline{B}_{R_V\varepsilon}(\varepsilon x)\,.
\end{align}
The vectors $v_i$ in \ref{L2} are called lattice vectors. For a fixed lattice $\mathcal{L}$, its lattice isometries can be characterized as follows.
Each distinct translation of $\mathcal{L}$ corresponds to a translation in 
\begin{align*}
\mathbb{T}:=\R^d/\mc{L}=\left\{\sum_{i=1}^d\lambda_iv_i\mid 0\leq\lambda_i<1\text{ for }i=1,\dots,d\right\}
\end{align*}
 while each rotation is determined by an element of
 \begin{align*}
 \mathbb{A}:=SO(d)/G\,.
 \end{align*}
The spaces $\mathbb{A}:=SO(d)/G$ and $\mathbb{T}$ are smooth manifolds by \cite[Theorem 21.29]{LeeIntroToManifolds}, which are also compact as they are given as the image of a compact set under their respective quotient maps. As the quotient maps are also smooth covering maps by \cite[Theorem 21.29]{LeeIntroToManifolds}, they are also local diffeomorphisms, which allows us to lift converging sequences. This will be needed for the proof of Lemma \ref{lem:fixed-bdryvalues}. Further, by the Whitney Embedding Theorem (see \cite[Theorem 6.15]{LeeIntroToManifolds}), both manifolds can be compactly embedded into some $\R^N$ for some $N\in \mathbb{N}$. Defining all lattice isometries as the action of $SO(d)\times\R^n$ on $\mc{L}$ via $(R,\tau)\mapsto R(\mc{L}+\tau)$, then there is a unique element $[R]\in\mathbb{A},[\tau]\in\mathbb{T}$ such that $R(\mc{L}+\tau)=[R](\mc{L}+[\tau])$. From now on, we will omit the equivalence classes and just write $R$ and $\tau$ instead of $[R]$ and $[\tau]$. We now introduce the set of lattice isometries
\begin{align}\label{def:isometryspace}
\mc{Z}:=(\mathbb{A}\times\mathbb{T}\times\{1\})\cup\{\textbf{0}\}\,, 
\end{align}
which can be compactly embedded into $\R^N$ for some $N\in\N$ with the product topology, that is $z_j=(R_j,\tau_j,1)\to z=(R,\tau,1)$ if and only if $R_j\to R$ and $\tau_j\to\tau$. Moreover, $z_j\to\textbf{0}$ if and only if $z_j=\textbf{0}$ for all $j$ large enough. Here, the $1$ as the third argument encodes that a lattice is present, whereas $\textbf{0}:=(0,0,0)$ represents vacuum. In particular for a given scale $\varepsilon$ and orientation $z=(R,\tau,1)$ we define the scaled, rotated and translated lattice as
\begin{align}
\mc{L}_\varepsilon(z):=\mc{L}_\varepsilon(R,\tau,1):=\varepsilon R(\mc{L}+\tau)\,,
\end{align}
whereas the vacuum corresponding to $\textbf{0}$ is defined as
\begin{align*}
\mc{L}_\varepsilon(\textbf{0}):=\emptyset\,.
\end{align*} 
If $\varepsilon=1$ we omit the dependence on $\varepsilon$ and write $\mathcal{L}(z) = \mathcal{L}_1(z)$.

\noindent {\bf Configurational energy.} We call a finite subset $X \subset \mathbb{R}^d$
 a configuration, and we denote by
 \begin{align*}
 \mathcal{X}:= \{ X \subset \mathbb{R}^d \mid X \text{ is finite}\}
 \end{align*}
 the set of all configurations. We then define a cell energy $E_{\mathrm{cell}} \colon \mathbb{R}^d \times \mathcal{X} \to [0,+\infty]$ satisfying the following properties: 
\begin{enumerate}[label=(E\arabic*)]
\item\label{H1}  (Discreteness) For every $X \in \mathcal{X}$ and $x \in X$ we have
\begin{align*}
E_{\mathrm{cell}}(x,X)<+\infty\quad \iff \quad\dist(\{x\},X\setminus\{x\})\geq1\,.
\end{align*}
\item\label{H2} (Boundedness) There exists $C>0$ such that  for every $X \in \mathcal{X}$ and every   $x\in X$ with $E_{\mathrm{cell}}(x,X)<+\infty$ we have
$
0\leq E_{\mathrm{cell}}(x,X)\leq C.
$ 
\item\label{H3} (Crystallization) There exists a radius  $r_{\mathrm{crys}} \geq\max\{R_V,S_V\}> 0$ with the following property:  For every $X \in \mathcal{X}$ and every $x \in X$, we have
\begin{align*}
E_{\mathrm{cell}}(x,X) = 0
\quad \iff \quad \begin{split}
&\text{there exists } (R,\tau) \in SO(d) \times \mathbb{R}^d  \\
&\text{such that } 
X \cap \overline{B}_{r_{\mathrm{crys}}}(x) = R(\mathcal{L} + \tau) \cap \overline{B}_{r_{\mathrm{crys}}}(x)\,.
\end{split}
\end{align*}
\item\label{H4} (Locality) There exists $r_{\mathrm{int}}  \geq r_{\mathrm{crys}} > 0$ with the following property. For every $X,Y \in \mathcal{X}$ with $X\cap\overline{B}_{r_{\mathrm{int}}}(x)=Y\cap\overline{B}_{r_{\mathrm{int}}}(x)$ we have $E_{\mathrm{cell}}(x,X)=E_{\mathrm{cell}}(x,Y)$.
\item\label{H5} (Isometry invariance) For every $X \in \mathcal{X}$, $x \in X$ we have
\begin{align*}
E_{\mathrm{cell}}(x,X)=E_{\mathrm{cell}}(Rx+\tau,RX+\tau) \quad \text{for all } R\in SO(d) \text{ and } \tau\in\R^d\,.
\end{align*}
\item\label{H6} (Coercivity) There exists $c>0$ such that for all $X \in \mathcal{X}$ and $x\in X$ we have
\begin{align*}
E_{\mathrm{cell}}(x,X)>0 \quad\implies \quad E_{\mathrm{cell}}(x,X)\geq c\,.
\end{align*}
\item\label{H7} (Local rigidity) Let $x,y\in X$ satisfy $\vert x-y\vert\leq r_{\mathrm{crys}}$ and $E_{\mathrm{cell}}(x,X)+E_{\mathrm{cell}}(y,X)=0$. Then $R_x=QR_y$ for some $Q \in G$  and $\tau_x =Q^{-1}\tau_y +R L z$ for some $z \in \mathbb{Z}^d$, where $L \in \mathrm{GL}(d,\mathbb{R})$ is given in \eqref{def:lattice} and  $(R_x,\tau_x), (R_y,\tau_y) \in SO(d)\times \mathbb{R}^d$ are  given in \ref{H3}. In particular, we can choose $(R_x,\tau_x) =(R_y,\tau_y)$.
To describe the remaining conditions, we define
\begin{align*}
\mathcal{N}(x) = \{y \in X \setminus \{x\} \mid |x-y|\leq r_{\mathrm{int}}\}\,.
\end{align*}
 Recall \eqref{eq:scaled-energy-local} for $\eps=1$.
\item\label{H8} (Unique interpolation) There exists $d_{\mathrm{unique}} \in \mathbb{N}$ such that if $\{x\}\cup\mc{N}(x)\subset\mc{L}(z)$ and $\#\mathcal{N}(x)\geq d_{\mathrm{unique}}$, then $z$ is unique. Furthermore, for all $ x\in X$ with $\#\mc{N}(x)<d_{\mathrm{unique}}$ we have
\begin{align*}
E_1(X\setminus\{x\})\leq E_1(X)\,.
\end{align*}
\item\label{H9} (Non-crystallized neighborhoods) For all $ x\in X$ such that for all $ z\in\mc{Z}$ it holds $\{x\}\cup\mc{N}(x)\nsubseteq\mc{L}(z)$ we have
\begin{align*}
E_1(X\setminus\{x\})\leq E_1(X)\,.
\end{align*} 
\item\label{H10} (Highly coordinated neighbors) For all $x,y\in X$ with $y\in\mc{N}(x)$, $\min(\#\mc{N}(x),\#\mc{N}(y))\geq d_{\mathrm{unique}}$, and $\mc{L}(z(x))\neq\mc{L}(z(y))$ we have
\begin{align*}
E_1(X\setminus\{x\})\leq E_1(X)\,.
\end{align*}
\end{enumerate}
We denote the constant $C > 0$ from \ref{H2} the upper bound of the cell energy, and by $c > 0$ (obtained from \ref{H6} its lower bound on non-optimal configurations. Introducing a scaling parameter $\varepsilon > 0$, 
we define the rescaled cell energy by
\begin{align}\label{eq:cell-energy-scaled}
E_{\mathrm{cell}}^\varepsilon(x, X)
:= E_{\mathrm{cell}}\!\left(\frac{x}{\varepsilon}, \frac{X}{\varepsilon}\right)\,.
\end{align}
The configurational energy in a Borel set $A \subset \mathbb{R}^d$ is
\begin{align}\label{eq:scaled-energy-local}
E_\varepsilon(X, A)
:= \varepsilon^{d-1} \sum_{x \in X \cap A} E_{\mathrm{cell}}^\varepsilon(x, X)\,,
\end{align}
and we write $E_\varepsilon(X) := E_\varepsilon(X, \mathbb{R}^d)$ for the total energy. We gather some elementary properties of the cell and configurational energies in Lemma~\ref{lem:elementary-properties}. Note that the upper and lower bounds in \ref{H2} and \ref{H6} are independent of $\varepsilon$, whereas the bounds in \ref{H3} and \ref{H4} transfer to $E_{\mathrm{cell}}^\varepsilon$ scale to $\varepsilon r_{\mathrm{crys}}, \varepsilon r_{\mathrm{int}}$ instead of $ r_{\mathrm{crys}}, r_{\mathrm{int}}$. 
We comment on the assumptions. The first assumption ensures a minimal distance at finite energy and excludes clustering of atoms for finite energy configurations. The second condition ensures non-negativity, which is necessary for our analysis, as well as the upper bound, which is required but usually given in applications. The third assumption ensures that the lattice $\mathcal{L}$ is a (local) ground state. The fourth assumption is a locality assumption, and we will call $r_{\mathrm{int}}$ the interaction radius. The fifth assumption accounts for a frame invariance in our discrete setting. The sixth assumption states that non-crystallized atoms must overcome a minimal energy barrier. Combined with the upper bound, this allows one to derive counting estimates for the number of non-crystallized atoms. The seventh assumption allows us to show that it is not possible to pass from one lattice to another without paying energy. We will call $x\in X$ crystallized if its cell energy vanishes. In Section~\ref{sec:reduction-two-lattices} we need to impose the additional assumptions \ref{H8}-\ref{H10}, which are crucial for relating the solid–solid transition energy to the solid–vacuum transition energy. This also allows us to derive a cell formula that is useful for matching upper and lower bounds for the $\Gamma$-limit. These conditions ensure that atoms with a {\it sparse} or non-lattice-like neighborhood are energetically not favorable and can be removed without increasing the energy. The last condition ensures that well-connected configurations are rigid or energetically inconvenient.

\begin{remark}
From now on, we will work with the stronger assumption $r_{\mathrm{crys}}\geq2R_V$ (Notice that in general $2R_V\geq S_V$). This greatly simplifies the proofs of the statements, as fewer technical arguments are needed. We will quickly comment on the changes needed to recover the results in the general case. First, one defines the set of interpolation points as 
\begin{align*}
X_{\varepsilon,\mathrm{int}}:=\{x\in X\mid E_{\mathrm{cell}}^\varepsilon(x,X)=0,\exists(R,\tau): X\cap\overline{B}_{4R_V\varepsilon}(x)=\mc{L}_\varepsilon(R,\tau,1)\cap\overline{B}_{4R_V\varepsilon}(x)\},
\end{align*}
which entails that the Voronoi cells of interpolation points in the configuration coincide with the Voronoi cells of the lattice.
Next, one shows that for $r\geq R_V$ one has the following: If $x\in X$ and $E_{\mathrm{cell}}^\varepsilon(z,X)=0$ for all $z\in X\cap\overline{B}_{4r\varepsilon}(x)$, then there is a unique $z\in\mc{Z}$ such that $X\cap\overline{B}_{(4r-4R_V)\eps}(x)=\mc{L}_\varepsilon(z)\cap\overline{B}_{(4r-4R_V)\eps}(x)$. This shows that in a region with vanishing energy in a smaller patch, the configuration coincides with the lattice. Lastly, one shows that if $x$ is crystallized but not an interpolation point, then there is $y\in\overline{B}_{8R_V\eps}(x)\cap X$ with $E_{\mathrm{cell}}^\varepsilon(y,X)>0$.
\end{remark}

\subsection{Identification of configurations with piecewise constant functions} We will now define functions that capture the local orientation of a grain.
\begin{Def}[Local orientation and associated lattice]\label{def:orientation} Let $E_{\mathrm{cell}}$ satisfy {\rm\ref{H1}}--{\rm\ref{H5}}.
 Let $X$ be a configuration with $E_\varepsilon(X)<+\infty$ and $x\in X$ such that $E_{\mathrm{cell}}^\varepsilon(x,X)=0$ and let $R_x\in\mathbb{A},\tau_x\in\mathbb{T}$ be the unique arguments that exist due to \rm{\ref{H3}}.  Then, its local orientation is defined as 
$$z(x):=(R_x,\tau_x,1)$$
and its associated lattice is given by $\mc{L}_\varepsilon(z(x))$.
We also write $X=\mc{L}_\varepsilon(z)$ on $A\subset\R^d$ if $X\cap A=\mc{L}_\varepsilon(z)\cap A$. 
\end{Def}
To each configuration $X \in \mathcal{X}$ with $E_\varepsilon(X)<+\infty$ we associate a piecewise constant function $u_\varepsilon^X \in PC(\R^d;\mc{Z})$. Namely, we define
\begin{align} \label{def:interpolation-points}
X_{\varepsilon,\mathrm{int}}:&=\{x\in X\mid  E_{\mathrm{cell}}^\varepsilon(x,X)=0\}\,.
\end{align}
 We define a piecewise constant orientation field associated to $X$ via
\begin{align}\label{def:interpolation}
u_{\varepsilon}^X(y):=\begin{cases}
z(x) & \text{if } y\in V_{\mc{L}_\varepsilon(z(x))}(x) \text{ and } x\in X_{\varepsilon,\mathrm{int}}\,,\\
\textbf{0} & \text{else.}
\end{cases}
\end{align} 
In the following, if no confusion arises, we will write $u_{\varepsilon}$ instead of $u_{\varepsilon}^X$.

The following lemma states that the functions defined in \eqref{def:interpolation} are well defined as $L^1$-functions.

\begin{lem}[Relation between Voronoi cells]\label{lem:Voronoi-relation}
Let $E_{\mathrm{cell}}$ satisfy {\rm \ref{H1}}-{\rm \ref{H5}}. Let $X \in \mathcal{X}$ be such that $E_\varepsilon(X)<+\infty$. Then the function defined in \eqref{def:interpolation} is well-defined.
\end{lem}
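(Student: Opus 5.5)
Well-definedness has two parts. First, each $z(x)$ for $x\in X_{\varepsilon,\mathrm{int}}$ must be well defined as an element of $\mathcal{Z}$. Second, whenever a point $y$ lies in two Voronoi cells $V_{\mathcal{L}_\varepsilon(z(x))}(x)$ and $V_{\mathcal{L}_\varepsilon(z(x'))}(x')$ with $x\neq x'$ in $X_{\varepsilon,\mathrm{int}}$, the assigned values must agree, or else this happens only on a Lebesgue-null set. Since $u_\varepsilon$ is meant as an $L^1$ function, the aim is the following claim: the cells $V_x:=V_{\mathcal{L}_\varepsilon(z(x))}(x)$, $x\in X_{\varepsilon,\mathrm{int}}$, have pairwise disjoint interiors, and their boundaries are null sets since they are convex polytopes. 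Then $u_\varepsilon$ is defined a.e., and it can be made pointwise well defined by any fixed convention on the boundaries.

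\emph{Step 1 ($z(x)$ is well defined).} By \ref{H3} and $E^\varepsilon_{\mathrm{cell}}(x,X)=0$, there is $(R,\tau)$ with $X\cap\overline B_{\varepsilon r_{\mathrm{crys}}}(x)=\mathcal{L}_\varepsilon(R,\tau,1)\cap \overline B_{\varepsilon r_{\mathrm{crys}}}(x)$. I have to show that the class $([R],[\tau])\in\mathbb{A}\times\mathbb{T}$ does not depend on the choice of $(R,\tau)$, at least insofar as the cell $V_x$ is concerned. Since $r_{\mathrm{crys}}\ge 2R_V$, the ball $\overline B_{2R_V\varepsilon}(x)$ contains every lattice point $w$ whose Voronoi cell $V(w)$ meets $V(x)$. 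Indeed, any $y\in V(x)\cap V(w)$ satisfies $|y-x|=|y-w|\le R_V\varepsilon$ by \eqref{eq:rotated-Voronoi}, so $|w-x|\le 2R_V\varepsilon$. The Voronoi cell of $x$ is determined only by these neighbours. Hence $V_x$ is determined by $X\cap\overline B_{2R_V\varepsilon}(x)$ alone, and two admissible pairs $(R,\tau),(R',\tau')$ give the same cell. (If the pair in $\mathcal{Z}$ itself is required to be unique, one uses \ref{H7} with $y=x$, which gives $R'=QR$ with $Q\in G$ and $\tau'=\tau$ modulo the lattice, i.e.\ the same class.)

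\emph{Step 2 (disjoint interiors).} This is the main step. Let $x\ne x'$ in $X_{\varepsilon,\mathrm{int}}$ and suppose $y\in \mathrm{int}\,V_x\cap \mathrm{int}\,V_{x'}$. By Step 1, $V_x=\{p:\ |p-x|\le|p-w|\ \forall w\in X\cap\overline B_{2R_V\varepsilon}(x)\}$; the inequality holds for all $w\in \mathcal{L}_\varepsilon(z(x))$, but only lattice points in that ball matter, and those coincide with points of $X$. Also $|y-x|\le R_V\varepsilon$ and $|y-x'|\le R_V\varepsilon$, so $|x-x'|\le 2R_V\varepsilon\le r_{\mathrm{crys}}\varepsilon$. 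Therefore $x'\in X\cap \overline B_{2R_V\varepsilon}(x)$, which gives $|y-x|\le|y-x'|$. Symmetrically, $x\in X\cap\overline B_{2R_V\varepsilon}(x')$ and $|y-x'|\le|y-x|$. So $|y-x|=|y-x'|$ and $y$ lies on the bisecting hyperplane of $x,x'$. This contradicts $y$ being interior to $V_x$, which lies in the closed half-space $\{|p-x|\le|p-x'|\}$, because an interior point of that half-space cannot lie on its boundary hyperplane. Hence the interiors are disjoint, and overlaps lie in the null set $\bigcup_x\partial V_x$ (a finite union of polytope boundaries, since $X$ is finite).

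\emph{Step 3 (consistency on overlaps).} On the overlaps one can even check that the values agree: by \ref{H7}, if $|x-x'|\le r_{\mathrm{crys}}\varepsilon$ and both are crystallized, then $z(x)=z(x')$ in $\mathcal{Z}$. So $u_\varepsilon$ is genuinely pointwise well defined wherever two cells of interpolation points meet. The cells are compact and finitely many, so $u_\varepsilon$ is measurable, piecewise constant, and equal to $\mathbf 0$ outside a bounded set, hence in $L^1$. The delicate point I expect is the quotient bookkeeping in Step 1: the cell must be shown to depend only on the local configuration rather than on the representative $(R,\tau)$. The inequality $r_{\mathrm{crys}}\ge 2R_V$ is exactly what makes Steps 1 and 2 work.
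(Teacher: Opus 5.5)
Your Step 2 is correct and is essentially the paper's argument. The paper uses $r_{\mathrm{crys}}\ge 2R_V$ to identify $V_{\mathcal{L}_\varepsilon(z(x))}(x)$ with the Voronoi cell of $x$ relative to $X$ itself, then cites the null overlap of Voronoi cells of a discrete set, and your bisector argument for two crystallized points is that fact carried out directly; note only that your Step 3 and the parenthetical in Step 1 invoke \ref{H7}, which is not among the lemma's hypotheses \ref{H1}--\ref{H5}, but neither is needed, since the paper takes uniqueness of $z(x)$ as part of Definition~\ref{def:orientation} and well-definedness is meant only up to null sets.
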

\begin{proof} We need to ensure that if there are two points $x_1,x_2 \in X_{\varepsilon,\mathrm{int}}$ such that $z(x_1) \neq z(x_2)$. Then,  $|V_{\mc{L}_\varepsilon(z(x_1))}(x_1)  \cap V_{\mc{L}_\varepsilon(z(x_2))}(x_2)|=0$. To this end, note that if $x \in X_{\varepsilon,\mathrm{int}}$, i.e., $E_\mathrm{cell}^\varepsilon(x,X)=0$. then $X \cap \overline{B}_{r_{\mathrm{crys}}\varepsilon} =  V_{\mc{L}_\varepsilon(z(x))}(x) \cap  \overline{B}_{r_{\mathrm{crys}}\varepsilon}$. As $r_{\mathrm{crys}} \geq 2R_V$ we in particular have that $V_X(x) =  V_{\mc{L}_\varepsilon(z(x))}(x)$, where 
\begin{align*}
V_X(x):=\{y\in\R^d\mid\vert x-y\vert\leq\vert z-y\vert\ \forall z\in X\}\,.
\end{align*}
Now, it is a well-known fact that for any discrete set we have that $|V_X(x) \cap V_X(y)| = 0$ for all $x,y \in X$ with $x \neq y$ as the Voronoi-cells intersect at at most $(d-1)$-dimensional hyperplanes (that are of Lebesgue measure zero).
\end{proof}
\noindent {\bf The state space.} For $A\subset \mathbb{R}^d$ open, we introduce the space of piecewise constant functions with values in $\mathcal{Z}$ by
\begin{align}\label{def:limitspace}
PC(A;\mc{Z}):=\{u\in SBV(A;\mc{Z})\mid\nabla u=0\,,\mc{L}^d(\{u\neq\textbf{0}\})<+\infty\,,\mc{H}^{d-1}(J_u)<+\infty\}\,.
\end{align}
Notice that we can write $u\in PC(A;\mc{Z})$ as a function of the form
\begin{align}\label{eq:representationu}
u=\sum_{j=1}^\infty z_j\chi_{G_j}
\end{align}
for pairwise distinct $z_j\in\mc{Z}\setminus\{\textbf{0}\}$ and pairwise disjoint $G_j\subset A$ satisfying $\mc{L}^d(\bigcup_{j=1}^\infty G_j)<+\infty$ and
\begin{align}\label{eq:finitesurfaceu}
\sum_{j=1}^\infty\mc{H}^{d-1}(\p^*\!G_j)<+\infty\,.
\end{align}
In this case, $\{G_j\}_{j\in\N}$ represents the grains of the polycrystal, whereas $\{z_j\}_{j\in\N}$ corresponds to the local orientation and translation of the lattice on each grain.



\begin{Def}(Convergence) \label{def:convergence}  Let $\{X_\varepsilon\}_\varepsilon$ be a sequence of configurations. We say that $X_\varepsilon\to u$ in $L^1_{\mathrm{loc}}(\R^d)$ as $\varepsilon \to 0$ if $u_\varepsilon\to u$ in $L^1_{\mathrm{loc}}(\R^d)$ as $\varepsilon \to 0$, where $u_\varepsilon=u_\varepsilon^{X_\varepsilon}$ is the function associated to $X_\varepsilon$ by \eqref{def:interpolation}. 
\end{Def}

\subsection{Cubes and boundary regions}\label{subsec:cubes} Let $\nu \in \mathbb{S}^{d-1}$. Then $R_\nu$ is the orthogonal matrix induced by the linear mapping 
\begin{align*}
x \mapsto \begin{cases}\displaystyle 2\frac{(\langle x, \nu \rangle +x_d)}{|\nu+e_d|^2}(\nu+e_d)-x &\text{if } \nu \in \mathbb{S}^{d-1} \setminus \{-e_d\}\,, \\
-x &\text{otherwise.}
\end{cases} 
\end{align*}
In this way $R_\nu e_d = \nu$ and the set $ \{R_\nu e_j \colon j=1,\ldots,d\}$ forms an orthonormal basis. We define the half-open unit cube with center $0$ and orientation $\nu$ by 
\begin{align*}
Q^\nu&:=R_\nu Q \quad \text{ where } \quad Q:=\left\{y\in\R^d\mid -\frac12\leq\langle y,e_i\rangle<\frac12\text{ for }i=1,\dots,d\right\}
\,.
\end{align*} 
and
\begin{align*}
Q^{\nu,\pm} =\{x \in Q^\nu \mid \pm\langle x,\nu\rangle \geq 0\}\,.
\end{align*}
 For $0<\lambda<\rho$ and $\varepsilon>0$ we define
\begin{align}\label{def:discrete-boundary}
\begin{split}
&\partial_{\lambda\eps} Q^\nu_\rho(x):=Q^\nu_{\rho +\lambda\eps}(x) \setminus  Q^\nu_{\rho- \lambda\eps}(x)\,,\\&  \partial^\pm_{\lambda\eps} Q^\nu_\rho(x) = \partial_{\lambda\eps} Q^\nu_\rho(x) \cap \{ z \in \mathbb{R}^d \colon \pm\langle z-x,\nu\rangle \geq r_{\mathrm{int}}\eps\}\,, \\& \partial^c_{\lambda\eps} Q^\nu_\rho(x) = \partial_{\lambda\eps} Q^\nu_\rho(x) \setminus (\partial_{\lambda\eps}^+ Q^\nu_\rho(x) \cup \partial_{\lambda\eps}^- Q^\nu_\rho(x))\,.
\end{split}
\end{align}
Similarly, we can define the half-open rectangle with length $l>0$ and height $h>0$ by
\begin{align*}
R_{l,h}&:=\{y\in\R^d\mid -\frac{h}{2}\leq\langle y,e_d\rangle<\frac{h}{2}, -\frac{l}{2}\leq\langle y,e_i\rangle<\frac{l}{2}\text{ for }i=1,\dots,d-1\}\,,\\
R^\nu_{l,h}&:=R_\nu R_{l,h}\,.
\end{align*}
Now, for $x\in\R^d$ and $\rho>0$, we can define the scaled and translated version of cubes, half cubes, and rectangles via
\begin{align*}
Q^\nu_{\rho}(x):= x+\rho Q^\nu\,,\quad
Q^{\nu,\pm}_{\rho}(x):= x+\rho Q^{\nu,\pm}\,,\quad
R^\nu_{l,h}(x):= x+R^\nu_{l,h}\,.
\end{align*}
For simplicity, if $\rho=1$ we write $Q^\nu(x)$ instead of $Q^\nu_\rho(x)$.\\
Throughout the paper, we will assume that our cell energy satisfies \ref{H1}-\ref{H10}, but also indicate which specific assumptions are needed for the separate lemmas and theorems. Notice that \ref{H8}-\ref{H10} are needed for the separation argument in our cell problem, see Section~\ref{sec:reduction-two-lattices}.

\subsection{Main Results}\label{subsec:Main-results}
We now state our main results. We start with a compactness result for sequences of configurations with bounded energy. Recall the definition of convergence for configurations given in Definition~\ref{def:convergence}.
\begin{Theorem}[Compactness]\label{thm:compactness}
Let $E_{\mathrm{cell}}$ satisfy {\rm\ref{H1}}--{\rm\ref{H7}}. Let $\{X_\varepsilon\}_\varepsilon$ be a sequence of configurations, such that 
$$\sup_\varepsilon E_\varepsilon(X_\varepsilon)<+\infty\,.$$ Then, there exists a subsequence $\{\varepsilon_k\}_{k\in\N}$ with $\varepsilon_k\to0$ as $k\to\infty$ and a function $u\in PC(\R^d;\mc{Z})$ such that $X_{\varepsilon_k}\to u$ in $L^1_{\mathrm{loc}}(\R^d)$ as $k\to\infty$.
\end{Theorem}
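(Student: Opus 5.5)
The plan is to derive compactness from the standard compactness theorem for piecewise constant functions (Ambrosio's theorem for Caccioppoli partitions / $SBV$ compactness with values in the finite-dimensional embedding of $\mc{Z}\subset\R^N$). For this we need, for $u_\varepsilon=u_\varepsilon^{X_\varepsilon}$, a uniform bound on $\|u_\varepsilon\|_{L^\infty}$ (automatic, since $\mc{Z}$ is compact in $\R^N$), a uniform bound on $\mc{L}^d(\{u_\varepsilon\neq\textbf{0}\})$, and a uniform bound on $\mc{H}^{d-1}(J_{u_\varepsilon})$. Since $\nabla u_\varepsilon=0$, the function is piecewise constant on Voronoi cells. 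Ambrosio's theorem for Caccioppoli partitions, or equivalently $BV$ compactness applied componentwise together with the closedness of $PC$ under such limits, then yields a subsequence with $u_{\varepsilon_k}\to u$ in $L^1_{\mathrm{loc}}$. We also use that the limit takes values in $\mc{Z}$ a.e. (closedness of $\mc{Z}$) and that $\mc{H}^{d-1}(J_u)\le\liminf\mc{H}^{d-1}(J_{u_{\varepsilon_k}})$, so that $u\in PC(\R^d;\mc{Z})$.

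The core is the estimate $\mc{H}^{d-1}(J_{u_\varepsilon})\le C E_\varepsilon(X_\varepsilon)$. The jump set of $u_\varepsilon$ is contained in the union of faces of Voronoi cells $V_{\mc{L}_\varepsilon(z(x))}(x)$, $x\in X_{\varepsilon,\mathrm{int}}$. We claim that a face of $V(x)$ lies in $J_{u_\varepsilon}$ only if there is a point $y\in X_\varepsilon\cap\overline{B}_{C\varepsilon}(x)$ with $E^\varepsilon_{\mathrm{cell}}(y,X_\varepsilon)>0$.

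To see this, take a face shared with the Voronoi cell of a neighbour $x'$. These are neighbours in $\mc{L}_\varepsilon(z(x))$, since $X_\varepsilon=\mc{L}_\varepsilon(z(x))$ on $\overline{B}_{r_{\mathrm{crys}}\varepsilon}(x)$ and $r_{\mathrm{crys}}\ge 2R_V$, so in particular $|x-x'|\le S_V\varepsilon\le r_{\mathrm{crys}}\varepsilon$. If both $x$ and $x'$ are crystallized, then \ref{H7} gives $z(x)=z(x')$ and the cells match exactly (by \eqref{eq:rotated-Voronoi} and $V_X=V_{\mc{L}}$ as in Lemma~\ref{lem:Voronoi-relation}), so there is no jump across the face. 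Hence a jump face forces a point of positive energy within distance $S_V\varepsilon$ of $x$. The other source of jump faces is the interface with the region where $u_\varepsilon=\textbf{0}$, which is not covered by cells of crystallized points. If the face of $V(x)$ borders such a region, then the lattice point $x'$ across the face is in $X_\varepsilon$ (it lies within $r_{\mathrm{crys}}\varepsilon$), and it must be non-crystallized. Again we obtain a point of positive energy nearby.

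We then count. Each non-crystallized point has energy at least $c\,\varepsilon^{d-1}$ by \ref{H6}, so
\[
\#\{y\in X_\varepsilon: E^\varepsilon_{\mathrm{cell}}(y,X_\varepsilon)>0\}\le c^{-1}\varepsilon^{1-d}E_\varepsilon(X_\varepsilon).
\]
By \ref{L1} and \ref{H1} there are at most $C$ points of $X_\varepsilon$ within $C\varepsilon$ of any given point. Each Voronoi cell has at most $C$ faces, each of $\mc{H}^{d-1}$-measure at most $C\varepsilon^{d-1}$. Together these give $\mc{H}^{d-1}(J_{u_\varepsilon})\le C\sup_\varepsilon E_\varepsilon(X_\varepsilon)$.

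The volume bound $\mc{L}^d(\{u_\varepsilon\ne\textbf{0}\})$ is the step I expect to be the main obstacle. Energy alone does not bound the number of crystallized points, because a perfect crystal has zero energy. To handle this, I would use the isoperimetric inequality. Let $F_\varepsilon=\{u_\varepsilon\neq\textbf{0}\}$, a finite union of polytopes, so $\mc{L}^d(F_\varepsilon)<\infty$ for fixed $\varepsilon$ because $X_\varepsilon$ is finite. Then $\mc{H}^{d-1}(\partial^*F_\varepsilon)\le\mc{H}^{d-1}(J_{u_\varepsilon})\le C$, and the isoperimetric inequality gives $\mc{L}^d(F_\varepsilon)\le C\,\mc{H}^{d-1}(\partial^*F_\varepsilon)^{d/(d-1)}\le C$ uniformly. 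This yields a uniform $BV$ bound on each component of $u_\varepsilon$ in $\R^N$. We then extract an $L^1_{\mathrm{loc}}$-convergent subsequence (indeed $L^1$, via tightness not needed), with limit in $PC(\R^d;\mc{Z})$ by lower semicontinuity of perimeter and of the Lebesgue measure of $\{u\neq\textbf{0}\}$ (Fatou). The delicate point to check is that the jump of $u$ in the embedded $\R^N$ is really counted only as $\mc{H}^{d-1}$ and not weighted by $|u^+-u^-|$. To handle this, I would phrase the compactness via Caccioppoli partitions (Ambrosio's theorem), applied to the partition into level sets, with the compactness of $\mc{Z}$ controlling the values.
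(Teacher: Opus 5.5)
Your proposal is correct and follows essentially the same route as the paper. The paper also proves the energy bound on $\mc{H}^{d-1}(J_{u_\varepsilon})$ cell by cell via \ref{H6} and \ref{H7}, using that every Voronoi-relevant lattice neighbour of a crystallized point lies in $X_\varepsilon$ (in a localized form, Lemma~\ref{lem:coercivity}); it then applies the compactness and lower-semicontinuity theorems for piecewise constant functions from Ambrosio--Fusco--Pallara, which settle exactly the weighting issue you flag, and the isoperimetric inequality for the volume of $\{u\neq\textbf{0}\}$. One correction: drop the parenthetical about $L^1$ convergence, since without tightness (a bounded-energy crystal can drift to infinity) only the claimed $L^1_{\mathrm{loc}}$ convergence holds.
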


\begin{definition}\label{def:admissible-set}
Given $z^+,z^- \in \mathcal{Z}$, $\varepsilon, \rho >0$, $x \in \mathbb{R}^d$ we say that  $X \in \mathrm{Adm}_{\varepsilon,\lambda}^{(z^+,z^-)}(Q^\nu_\rho(x))$ if it satisfies the following:
\begin{itemize}
\item[(i)] $E_\varepsilon(X) <+\infty\,,$
\item[(ii)] $X = \mathcal{L}_\varepsilon(z^\pm)$  on $\partial_{\lambda \varepsilon}^\pm Q^\nu_\rho(x)$
\item[(iii)] $X = \emptyset$  on $\partial_{\lambda\varepsilon}^c Q^\nu_\rho(x)\,.$
\end{itemize}
\end{definition}

Recall Definition~\ref{def:admissible-set}. The following proposition introduces the energy density, which appears in our continuum limiting functional, see Figure~\ref{fig:cell-formula}.
\begin{prop}[Density]\label{prop:density}
Let $E_{\mathrm{cell}}$ satisfy {\rm \ref{H1}}--{\rm\ref{H10}}. For every $z^+,z^-\in\mc{Z},\nu\in\mathbb{S}^{d-1},x_0\in\R^d$, $\lambda > 6r_{\mathrm{int}}$, and $\rho>0$, there exists 
\begin{align}\label{eq:propdensity}
\varphi(z^+,z^-,\nu)=\lim_{\varepsilon\to0}\frac{1}{\rho^{d-1}}\inf\{ E_\varepsilon(&X,Q^\nu_\rho(x_0))\mid X \in \mathrm{Adm}_{\varepsilon,\lambda}^{(z^+,z^-)}(Q^\nu_\rho(x_0)) \}
\end{align}
 and is independent of $x_0$, $\lambda$, and $\rho$.
\end{prop}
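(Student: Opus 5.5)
The plan is to prove Proposition~\ref{prop:density} first in the \emph{solid--vacuum} case $z^-=\textbf{0}$, and then reduce the general \emph{solid--solid} case to it using the separation assumptions {\rm\ref{H8}}--{\rm\ref{H10}}.

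\textbf{Scaling.} I begin by removing $\rho$ and $x_0$. The map $X\mapsto (X-x_0)/\rho$ sends $\mathrm{Adm}_{\varepsilon,\lambda}^{(z^+,z^-)}(Q^\nu_\rho(x_0))$ bijectively onto $\mathrm{Adm}_{\varepsilon/\rho,\lambda}^{(\tilde z^+,\tilde z^-)}(Q^\nu)$. By {\rm\ref{H5}} it multiplies the energy by $\rho^{-(d-1)}$. The rotations of $\tilde z^\pm$ equal those of $z^\pm$, but the translations become $\tau^\pm-(R^\pm)^{-1}x_0/\varepsilon$ modulo $\mathcal{L}$. Hence the $\varepsilon$-dependent translations do not vanish in this reduction. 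The heart of the argument is therefore the following claim:
\begin{align*}
\lim_{\varepsilon\to0}\ \inf_{\tau^+_\varepsilon,\tau^-_\varepsilon\in\mathbb{T}}\ m_{\varepsilon,\lambda}\quad\text{and}\quad \lim_{\varepsilon\to0}\ \sup_{\tau^+_\varepsilon,\tau^-_\varepsilon\in\mathbb{T}}\ m_{\varepsilon,\lambda}
\end{align*}
coincide. Here $m_{\varepsilon,\lambda}$ denotes the normalized infimum in \eqref{eq:propdensity} on $Q^\nu$. If this holds, the limit exists and is independent of $x_0$ and $\rho$.

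\textbf{Solid--vacuum case.} I argue by a subadditivity (tiling) argument. Fix a near-minimizer $X$ on a cube of side $\rho'$ with boundary data $(R^+,\tau')$ and vacuum below. Cover $Q^\nu_{1-\delta}$ by translated cubes of side $\rho'$ whose centres lie on the plane $H^\nu$. Into each cube I place a copy of $X$, translated by a vector of $\varepsilon R^+L\mathbb{Z}^d$ close to the desired centre; this preserves $\mathcal{L}_\varepsilon(R^+,\tau',1)$ exactly. Between adjacent copies, and where $\tau'$ must be matched to the prescribed $\tau^+_\varepsilon$ in the outer layer, I glue using the fundamental estimate of Section~\ref{sec:L1-to-converging-data}. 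This step needs some care: the two lattices in a seam differ by a small translation, and the gluing must respect {\rm\ref{H1}}. The simplest route is to cut a layer of width $O(\varepsilon)$ to vacuum. Above the interface this costs nothing, since the lattice there coincides exactly. Near the interface it costs $O(\varepsilon^{d-1}\cdot\varepsilon^{-(d-2)}\rho'^{d-2}/\rho'^{d-1})$ per seam, which is $o(1)$ after summing. Letting $\varepsilon\to0$ and then $\delta\to0$ gives $\limsup \sup_\tau m\le \liminf \inf_\tau m$, which is the claim.

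\textbf{Independence of $\lambda$.} The admissible class shrinks as $\lambda$ grows, so $m_{\varepsilon,\lambda}$ is monotone in $\lambda$. For the reverse inequality I use the same cut-off construction with a boundary layer of width $\lambda'\varepsilon$. This is possible because $\lambda>6r_{\mathrm{int}}$ guarantees that the energy is never influenced across the prescribed strip, by {\rm\ref{H4}}.

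\textbf{Solid--solid case.} Here I invoke the reduction to two well-separated lattices of Section~\ref{sec:reduction-two-lattices}. Given any admissible $X$, I repeatedly remove atoms that are sparsely coordinated ({\rm\ref{H8}}), non-crystallized ({\rm\ref{H9}}), or highly coordinated with a neighbour of a different lattice ({\rm\ref{H10}}). None of these removals increases the energy. The result is a configuration in which the $z^+$-part and the $z^-$-part are separated by a distance greater than $r_{\mathrm{int}}\varepsilon$. By {\rm\ref{H4}} its energy splits into a $(z^+,\textbf{0})$ problem and a $(\textbf{0},z^-)$ problem. Conversely, the union of two well-separated near-minimizers, shifted by $O(\varepsilon)$ apart, is admissible. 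This yields
\begin{align*}
\varphi(z^+,z^-,\nu)=\varphi(z^+,\textbf{0},\nu)+\varphi(\textbf{0},z^-,\nu),
\end{align*}
so existence and independence follow from the vacuum case.

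\textbf{Main obstacle.} I expect the main obstacle to be the removal procedure. One must check that it terminates, does not touch the prescribed boundary strips, and really produces a gap (using the rigidity {\rm\ref{H7}}). The $\tau$-independent gluing in the vacuum case is the second delicate point.
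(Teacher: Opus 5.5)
Your architecture is sound: rescale, show by tiling that the vacuum problem has a limit uniformly in the $\varepsilon$-dependent translation, then split the solid--solid problem with the removal procedure of Lemma~\ref{lem:separation-2-lattices}. For this proposition alone it is also more direct than the paper, which goes through the converging-data functional $\Phi$ and needs the rotation argument of Lemma~\ref{lem:fixed-bdryvalues}. Two steps, however, fail as written.

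\textbf{The translation matching in the vacuum tiling.} You translate copies only by vectors of $\varepsilon R^+L\mathbb{Z}^d$, so every copy stays in $\mathcal{L}_\varepsilon(R^+,\tau',1)$. You then propose to match $\tau'$ to the prescribed $\tau^+_\varepsilon$ at the outer layer, either with the fundamental estimate or by cutting an $O(\varepsilon)$ vacuum layer. This costs order one, not $o(1)$.
\begin{itemize}
\item The cut-off of Section~\ref{sec:L1-to-converging-data} only glues a configuration to its \emph{own} dominant lattice. Its cost is controlled by $\#(\mathcal{L}_\varepsilon(z^+_\varepsilon)\triangle X_\varepsilon)$ in the transition layer, which is of full order when the two lattices are different translates.
\item If $\tau'\neq\tau^+_\varepsilon$ in $\mathbb{T}$, then by \ref{H7} and Lemma~\ref{lem:coercivity} the orientation field must jump across a surface separating the top of the cube from the row of small cubes. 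That costs energy of order $\rho^{d-1}$, which does not vanish.
\end{itemize}
The fix is to use \ref{H5} instead. Shift each copy by $\varepsilon R^+(\tau^+_\varepsilon-\tau')$ plus a period vector, an $O(\varepsilon)$ shift, so that it lies exactly in the target lattice. Absorb the resulting $O(\varepsilon)$ offsets of the centres by putting each copy into a slightly larger cube, filled with the target lattice above the plane and vacuum below. The mismatch then sits in an $O(\varepsilon)$-neighbourhood of $H^\nu$ near the lateral faces of the small cubes and costs $O(\varepsilon/\rho')$ in total. This is what Lemma~\ref{lem:center-independence} and Corollary~\ref{cor:translation-independence} do. With this change, your inequality $\limsup_{\varepsilon}\sup_\tau m_{\varepsilon,\lambda}\le\liminf_{\varepsilon}\inf_\tau m_{\varepsilon,\lambda}$ goes through.

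\textbf{The solid--solid upper bound.} You take a $(z^+,\mathbf{0})$ and a $(\mathbf{0},z^-)$ near-minimizer on the same cube, shifted $O(\varepsilon)$ apart. Their union need satisfy neither \ref{H1} nor additivity.
\begin{itemize}
\item Only the boundary layer of a competitor is prescribed. Nothing confines a vacuum near-minimizer to an $O(\varepsilon)$-neighbourhood of its half-space.
\item For crystalline densities, e.g.\ $\varphi_{\mathrm{vac}}=\frac12\|\nu\|_1$ on $\mathbb{Z}^2$, monotone staircases that deviate macroscopically from $H^\nu$ are asymptotically optimal. So the two parts may overlap or interact within distance $r_{\mathrm{int}}\varepsilon$.
\end{itemize}
You need a macroscopic separation, as in Lemma~\ref{lem:Phi=varphi}:
\begin{itemize}
\item two rows of small cubes of side $S$, centred at heights $\pm S$, carrying the two vacuum problems, each confined to its own cube;
\item a vacuum slab between the rows and $\mathcal{L}(z^\pm)$ elsewhere.
\end{itemize}
The extra normalized cost is $O(S^{-1}+S^2/T)$. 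This again requires vacuum competitors on arbitrarily centred cubes with exactly the prescribed lattice, i.e.\ the fix from the previous step. Also, the removal lemma needs $\lambda>8r_{\mathrm{int}}$ and returns boundary width $\lambda-2r_{\mathrm{int}}$. For $\lambda\in(6r_{\mathrm{int}},8r_{\mathrm{int}}]$ you must first extend competitors to a thicker boundary layer.
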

\begin{figure}[htp]
\center
\begin{tikzpicture}[scale=0.6]

\tikzset{>={Latex[width=1mm,length=1mm]}};

\draw (6.5,5) node[above]{$Q^\nu_{\rho}$};
\draw(-5.5,2)--(-6.5,3) node[above]{$\partial^+_{\lambda\varepsilon} Q^\nu_{\rho}$};
\draw(-5.5,-2)--(-6.5,-3) node[below]{$\partial^-_{\lambda\varepsilon} Q^\nu_{\rho}$};
\draw(-5,-5) rectangle(5,5);

\draw[<->](-5,6)-- node[above]{$\rho$}(5,6);
\draw[black!20,pattern=north west lines,pattern color=black!20](-5.5,1)--(-5.5,5.5)--(5.5,5.5)--(5.5,1)--(4.5,1)--(4.5,4.5)--(-4.5,4.5)--(-4.5,1)--(-5.5,1);
\draw[black!20,pattern=north west lines,pattern color=black!20](-5.5,-1)--(-5.5,-5.5)--(5.5,-5.5)--(5.5,-1)--(4.5,-1)--(4.5,-4.5)--(-4.5,-4.5)--(-4.5,-1)--(-5.5,-1);

\begin{scope}[scale=0.4,shift={(-13,2)}]
\foreach\i in {0,...,6}{
\foreach\k in {0,...,8}{
\foreach\j in {0,...,5}{
\fill[black](0+\k*3,0+\i*1.732)++(60*\j:1) circle(.1);
\fill[black](1.5+\k*3,0.866+\i*1.732)++(60*\j:1) circle(.1);
}
}
}
\end{scope}

\draw[fill=white,white](6,1)--(4.5,1)--(4.5,-1)--(6,-1)--(6,1);

\draw[fill=white,white](-6,1)--(-4.5,1)--(-4.5,-1)--(-6,-1)--(-6,1);
\draw[->](5.5,0)--(5.5,0.8) node[right]{$\nu$};
\draw[dashed](-6,0)--(6,0);

\fill(0,0) circle(.04);
\fill(3,0) circle(.04);
\fill(2,0.1) circle(.04);
\fill(1,-0.1) circle(.04);
\fill(-4,0.1) circle(.04);
\fill(-2,-0.1) circle(.04);
\fill(-1,-0.2) circle(.04);
\fill(-1,0.2) circle(.04);
\clip(-5.5,-0.5)--(-5.5,-6)--(5.5,-6)--(5.5,-0.5)--(-5.5,-0.5);
\begin{scope}[scale=0.4,shift={(-16,-4)}]
\foreach\i in {0,...,11}{
\foreach\k in {0,...,10}{
\foreach\j in {0,...,5}{
\fill[black,rotate=20](0+\k*3,0-\i*1.732)++(60*\j:1) circle(.1);
\fill[black,rotate=20](1.5+\k*3,0.866-\i*1.732)++(60*\j:1) circle(.1);
}
}
}
\end{scope}

\draw[fill=white,white](6,1)--(4.5,1)--(4.5,-1)--(6,-1)--(6,1);

\draw[fill=white,white](-6,1)--(-4.5,1)--(-4.5,-1)--(-6,-1)--(-6,1);
\end{tikzpicture}
\caption{Schematic Illustration of a competitor $X$ for the cell problem on $Q^\nu_\rho$ in the definition of $\varphi$. Also, this illustrates a configuration $X \in \mathrm{Adm}_{\varepsilon,\lambda}^{(z^+,z^-)}(Q^\nu_\rho)$. Note that the scaling factor between the lattice spacing and the boundary thickness does not match the definition.} 
\label{fig:cell-formula}
\end{figure}
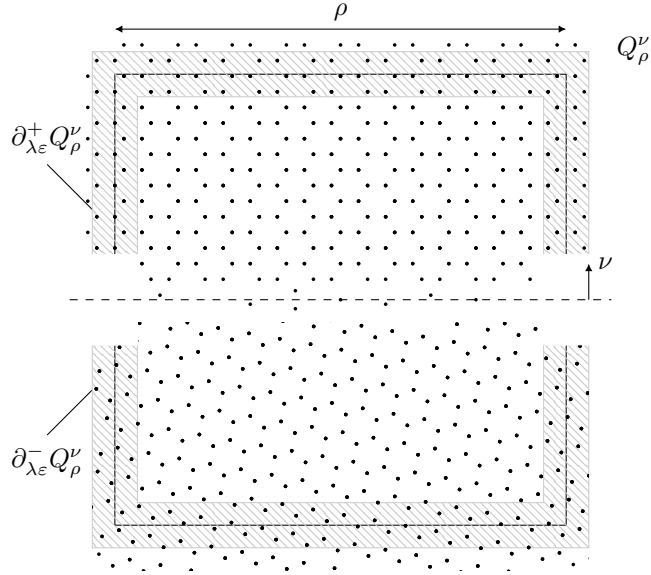
The limiting functional $E: PC(\R^d;\mc{Z})\to[0,+\infty)$ is defined as
\begin{align}\label{def:limiting energy}
E(u):=\int_{J_u}\varphi(u^+,u^-,\nu_u)\,\mathrm{d}\mc{H}^{d-1}\,.
\end{align}
In view of \eqref{def:limitspace}, functions in $PC(\R^d;\mc{Z})$ lie in $SBV(\R^d;\mc{Z})$ and the quantities $u^+,u^-$ and $\nu_u$ are well defined. The following Theorem shows that $E$ can be interpreted as the continuum limit energy of our atomistic energies $E_\varepsilon$ in the sense of $\Gamma$-convergence.
\begin{Theorem}[$\Gamma$-convergence]\label{thm:Gamma-convergence}
Let $E_{\mathrm{cell}}$ satisfy {\rm \ref{H1}}--{\rm\ref{H10}}. Then, the following holds
\begin{enumerate}
\item[{\rm (i)}] \label{thmitem:Gamma-convergence1}$(\Gamma$-$\liminf$ inequality$)$ For each $u\in PC(\R^d;\mc{Z})$ and each sequence $\{X_\varepsilon\}_\varepsilon$ with $X_\varepsilon\to u$ in $L^1_{\mathrm{loc}}(\R^d)$ as $\varepsilon\to0$ it holds
$$\liminf_{\varepsilon\to0}E_\varepsilon(X_\varepsilon)\geq E(u)\,.$$
\item[{\rm (ii)}] \label{thmitem:Gamma-convergence2}$(\Gamma$-$\limsup$ inequality$)$  For each $u\in PC(\R^d;\mc{Z})$ there is a sequence of configurations $\{X_\varepsilon\}_\varepsilon$ such that $X_\varepsilon\to u$ in $L^1_{\mathrm{loc}}(\R^d)$ as $\varepsilon\to0$ and
$$\lim_{\varepsilon\to0}E_\varepsilon(X_\varepsilon)=E(u)\,.$$
\end{enumerate}
\end{Theorem}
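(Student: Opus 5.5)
The proof follows the standard blow-up/cell-formula strategy for interfacial energies. I take Proposition~\ref{prop:density} as given, together with the fundamental estimate announced for Section~\ref{sec:L1-to-converging-data}. That estimate says that if $X_\varepsilon\to u_0:=z^+\chi_{H^\nu_+}+z^-\chi_{H^\nu_-}$ in $L^1(Q^\nu_\rho(x_0))$, then $X_\varepsilon$ can be modified in a thin layer near $\partial Q^\nu_\rho(x_0)$ so that it becomes admissible in $\mathrm{Adm}_{\varepsilon,\lambda}^{(z^+,z^-)}(Q^\nu_{\rho'}(x_0))$, for some $\rho'$ slightly smaller than $\rho$, at an energy cost $o(\rho^{d-1})$. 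Both inequalities then reduce to comparing with the cell problem defining $\varphi$.

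\emph{Liminf.} Take $X_\varepsilon\to u$ with $\liminf E_\varepsilon(X_\varepsilon)<\infty$, and pass to a subsequence realizing the liminf. Define the positive measures $\mu_\varepsilon:=\varepsilon^{d-1}\sum_{x\in X_\varepsilon}E^\varepsilon_{\mathrm{cell}}(x,X_\varepsilon)\delta_x$, which are bounded by \ref{H2}, and extract a weak-$*$ limit $\mu$. It suffices to show that for $\mathcal{H}^{d-1}$-a.e.\ $x_0\in J_u$,
\[
\frac{d\mu}{d\mathcal{H}^{d-1}\llcorner J_u}(x_0)\geq\varphi(u^+(x_0),u^-(x_0),\nu_u(x_0)).
\]
Fix such an $x_0$, a point of approximate jump at which the density exists. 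Choose radii $\rho_k\to0$ with $\mu(\partial Q^\nu_{\rho_k}(x_0))=0$. Then $\mu(Q^\nu_{\rho_k}(x_0))=\lim_\varepsilon E_\varepsilon(X_\varepsilon,Q^\nu_{\rho_k}(x_0))$, and rescaled blow-ups of $u$ converge to $u_0$ in $L^1(Q^\nu)$. A diagonal argument gives $\varepsilon_k\ll\rho_k$ such that the configurations $\rho_k^{-1}(X_{\varepsilon_k}-x_0)$ at scale $\varepsilon_k/\rho_k$ converge to $u_0$ in $L^1(Q^\nu)$. Here I use that $E_\varepsilon$ scales exactly by $\rho^{d-1}$ under dilation, by \eqref{eq:cell-energy-scaled} and \ref{H5}. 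Applying the fundamental estimate produces admissible competitors on $Q^\nu_{1-\delta}$. Proposition~\ref{prop:density}, whose limit is independent of $\rho$ and $x_0$ and can be taken along any sequence $\varepsilon\to0$, then yields $\lim_k\rho_k^{1-d}\mu(Q^\nu_{\rho_k}(x_0))\geq(1-\delta)^{d-1}\varphi$. Letting $\delta\to0$ and integrating over $J_u$ gives (i).

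\emph{Limsup.} First I treat $u$ whose grains $G_j$ are finitely many polyhedra and whose jump set consists of finitely many flat facets. Cover $J_u$, up to a set of small $\mathcal{H}^{d-1}$-measure, by finitely many disjoint small cubes $Q^{\nu_i}_{\rho}(x_i)$ centred on facets. In each cube I place an almost optimal competitor from \eqref{eq:propdensity}. Inside each grain, away from these cubes, I use $\mathcal{L}_\varepsilon(z_j)$. Near lower-dimensional edges, along cube boundaries and wherever a lattice meets vacuum, I leave empty layers of width $O(\varepsilon)$; this keeps \ref{H1} valid, and the admissibility conditions (ii)--(iii) are exactly designed so that the pieces can be glued by locality \ref{H4}. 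The gluing costs at most $C\varepsilon^{d-1}$ per atom in an $O(\varepsilon)$-neighbourhood of a set with $\mathcal{H}^{d-1}$-measure $O(\rho^{d-2}\cdot\#\text{cubes})+\text{small}$. By \ref{H2} this contributes $o(1)$ as $\rho\to0$. For the general case I use density of polyhedral partitions in $PC(\mathbb{R}^d;\mathcal{Z})$ with convergence of $E$. This requires upper semicontinuity/continuity properties of $\varphi$ in $(z^\pm,\nu)$, which I expect to obtain from Theorem~\ref{thm:properties-of-phi} (the solid–vacuum decomposition). Combined with lower semicontinuity of the $\Gamma$-limsup, this concludes (ii).

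\emph{Main obstacle.} The hardest step is the fundamental estimate used in the liminf. Because of the rigidity \ref{H1}, \ref{H7}, one cannot interpolate between $X_\varepsilon$ and the boundary lattice. Instead I would select, by averaging over $O(1/\varepsilon)$ parallel layers, a layer where $X_\varepsilon$ has few non-crystallized atoms and coincides with $\mathcal{L}_\varepsilon(z^\pm)$ on most of it. Using \ref{H6} and $L^1$-closeness of $u_\varepsilon$ to $u_0$, the mismatch region then has size $o(\rho^{d-1}/\varepsilon^{d-1})$ atoms. I would cut $X_\varepsilon$ there, insert the exact boundary lattice, and delete atoms in an empty strip of width $\lambda\varepsilon$ to restore the minimal distance.
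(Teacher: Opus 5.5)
Your liminf has a genuine gap in the step you call the fundamental estimate. $L^1$-convergence $u_\varepsilon\to u_0$ does \emph{not} imply that $X_\varepsilon$ coincides with the fixed lattice $\mathcal{L}_\varepsilon(z^\pm)$ on most of some layer. Since $\mathcal{Z}$ is a continuum, the coercivity/relative-isoperimetric argument only yields a dominant orientation $z^\pm_\varepsilon$ with $z^\pm_\varepsilon\to z^\pm$, and $X_\varepsilon$ agrees with $\mathcal{L}_\varepsilon(z^\pm_\varepsilon)$ up to few atoms. For example, $X_\varepsilon=\mathcal{L}_\varepsilon((R,\tau+\sqrt{\varepsilon}\,w,1))$ on the upper half converges in $L^1$ to $(R,\tau,1)$. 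Yet for $w\neq0$ and small $\varepsilon$ it shares no atom with $\mathcal{L}_\varepsilon((R,\tau,1))$, and rotations by angles $\gg\varepsilon$ behave the same way. By \ref{H3}, \ref{H6}, \ref{H7} (cf.\ Lemma~\ref{lem:coercivity}), a configuration equal to $\mathcal{L}_\varepsilon(z^+_\varepsilon)$ on most of the upper half and to $\mathcal{L}_\varepsilon(z^+)$ on its boundary layer has an orientation field jumping across a surface of area $\gtrsim\rho^{d-1}$. Its energy is therefore $\gtrsim\rho^{d-1}$, not $o(\rho^{d-1})$.

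So the cut-off only gives $\psi\ge\Phi$, with \emph{converging} boundary data (Lemma~\ref{lem:psi-phi}). You are missing the nontrivial step $\Phi=\varphi$, which is exactly where \ref{H8}--\ref{H10} enter; your liminf never uses them. The paper proceeds as follows:
\begin{itemize}
\item Lemma~\ref{lem:separation-2-lattices} splits any competitor into $X^\pm\subset\mathcal{L}(z^\pm_T)$ at distance $>r_{\mathrm{int}}$, so $\Phi(z^+,z^-,\nu)\ge\Phi_{\mathrm{vac}}(z^+,\nu)+\Phi_{\mathrm{vac}}(z^-,-\nu)$.
\item For a single lattice, the translational mismatch is removed by shifting the cube by lattice vectors (Lemma~\ref{lem:center-independence}, Corollary~\ref{cor:translation-independence}).
\item The rotational mismatch is removed by rotating the whole configuration by $M_T=RR_T^{-1}\to\mathrm{Id}$ and embedding the rotated cube in a slightly larger fixed cube at cost $C\delta T^{d-1}$ (Lemma~\ref{lem:fixed-bdryvalues}).
\item Stacking vacuum competitors gives $\varphi_{\mathrm{vac}}(z^+,\nu)+\varphi_{\mathrm{vac}}(z^-,-\nu)\ge\varphi(z^+,z^-,\nu)$ (Lemma~\ref{lem:Phi=varphi}).
\end{itemize}
The rotation trick works for only one lattice at a time, which is why the reduction to two solid--vacuum problems is indispensable.

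Two further points in your blow-up. First, passing to $\rho_k^{-1}(X_{\varepsilon_k}-x_0)$ is not harmless. A translation by $-x_0$ at scale $\varepsilon$ shifts the $\tau$-component of every local orientation by $-\varepsilon^{-1}R^{-1}x_0$ modulo the lattice, with different $R$ on the two sides of the interface. The translated orientation fields therefore need not converge to $u_0$. Only dilations are compatible with the orientation field (Lemma~\ref{lem:scaling}), which is why the paper keeps the centre $\rho_n^{-1}x_0$. Second, Proposition~\ref{prop:density} gives the limit for each \emph{fixed} $\rho$ and $x_0$, but your diagonal sequence sends $\rho_k\to0$ together with $\varepsilon_k$ (equivalently, moving centres at unit scale). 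You need a lower bound uniform in the centre; this is why $\psi$ and $\Phi$ allow arbitrary centres $y_\varepsilon$ and why Lemma~\ref{lem:center-independence} proves uniformity (again after reducing to one lattice).

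Your limsup is a legitimate alternative to the paper's route via $\varphi_{\mathrm{vac}}$ and strictly separated grains: you glue solid--solid cell competitors directly along facets and leave empty layers at junctions. However, adjacent cubes glue at no extra cost by admissibility. The only gluing cost lies in bands of width $O(\rho)$ around the $(d-2)$-dimensional edges, of order $\rho\,\mathcal{H}^{d-2}(\text{edges})$. The $\mathcal{H}^{d-1}$-measure $O(\rho^{d-2}\cdot\#\text{cubes})\sim\rho^{-1}$ you wrote down would not give $o(1)$.
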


\begin{rem}[Extension to $L^1$]\label{rem:extension}
Notice that by defining $E_\varepsilon:L^1(\R^d;\mc{Z})\to [0,+\infty]$ by
\begin{align*}
E_\varepsilon(u):=\begin{cases}
E_\varepsilon(X)&\text{if there exists }X\text{ such that }u=u_\varepsilon^X\,,\\
+\infty&\text{otherwise,}
\end{cases}
\end{align*}
one can view $E=\Gamma(L^1_{\mathrm{loc}})$-$\lim_{\varepsilon\to0}E_\varepsilon$ by Theorem~\ref{thm:Gamma-convergence}.
\end{rem}
To end the section, we gather some properties of the limiting energy density. We would like to mention that due to the rigidity of our cell energies, it is not favorable to introduce interpolating boundary layers to lower the energy. Instead, the energy for a solid-solid transition is the sum of the energies of two solid-vacuum transitions. 

\begin{definition}\label{def:phivac} We define the solid-vacuum density by $\varphi_{\mathrm{vac}} \colon \mathcal{Z} \times \mathbb{R}^d \to [0,+\infty)$ for $z \in \mathcal{Z}$ and $\nu \in \mathbb{R}^d$ by
\begin{align}\label{def:phivac}
\varphi_{\mathrm{vac}}(z,\nu) = \varphi(z,{\bf 0},\nu)\,.
\end{align}
\end{definition}

\begin{Theorem}[Properties of $\varphi$]\label{thm:properties-of-phi}
Let $E_{\mathrm{cell}}$ satisfy {\rm \ref{H1}}--{\rm \ref{H10}}. Let $\varphi$ be the density in Proposition~\ref{prop:density} extended to a function defined on $\mc{Z}\times\mc{Z}\times\R^d$, which is positively $1$-homogeneous in the third variable. Let $\varphi_{\mathrm{vac}}$ be defined as above. Then $\varphi$ and  $\varphi_{\mathrm{vac}}$ satisfy the following properties:
\begin{enumerate}
\item[{\rm (i)}]\label{item:propthm1}(Solid-vacuum energy) It holds $\varphi(z,{\bf 0},\nu)=\varphi({\bf 0},z,-\nu)$ for all $z\in\mc{Z}\setminus\{{\bf 0}\}$ and $\nu\in\mathbb{S}^{d-1}$.
\item[{\rm (ii)}]\label{item:propthm2}(Solid-solid energy) For all $z^+,z^-\in\mc{Z}\setminus\{{\bf 0}\}, z^+\neq z^-$ and $\nu\in\mathbb{S}^{d-1}$ it holds
$$\varphi(z^+,z^-,\nu)=\varphi_{\mathrm{vac}}(z^+,\nu)+\varphi_{\mathrm{vac}}(z^-,-\nu)\,.$$
\item[{\rm (iii)}]\label{item:propthm3}(Convexity) The mapping $\nu\mapsto\varphi_{\mathrm{vac}}(z,\nu)$ is convex for all $z\in\mc{Z}$.
\item[{\rm (iv)}]\label{item:propthm4}(Boundedness \& Continuity) There exists $C>0$ such that 
\begin{align*}
\varphi_{\mathrm{vac}}(z,\nu) \leq C|\nu| \quad \text{for all } \nu \in \mathbb{R}^d\,.
\end{align*}
In particular, $\nu \mapsto \varphi_{\mathrm{vac}}(z,\nu)$ is Lipschitz-continuous.
\item[{\rm (v)}]\label{item:propthm5}(Translational invariance) For all $z=(R,\tau,1) \in \mathcal{Z},\nu\in\mathbb{S}^{d-1}$ there holds
$$\varphi_{\mathrm{vac}}((R,\tau,1),\nu)=\varphi_{\mathrm{vac}}((R,0,1),\nu)\,.$$
\item[{\rm (vi)}]\label{item:propthm6}(Rotational invariance) For all $z=(R,\tau, 1),\nu\in\mathbb{S}^{d-1}$ and $O\in SO(d)$ there holds
$$\varphi_{\mathrm{vac}}((OR,0, 1),O\nu)=\varphi_{\mathrm{vac}}((R,0,1),\nu)\,.$$
\end{enumerate}
\end{Theorem}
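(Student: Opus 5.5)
We prove the six properties essentially one at a time, always working with the cell formula of Proposition~\ref{prop:density}. We use its independence of $x_0$, $\lambda$, $\rho$, and we extend $\varphi$ one-homogeneously in $\nu$.

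For (i), the point is that the admissible class is symmetric under exchanging the roles of $\pm$. We have $Q^{-\nu}_\rho(x_0)=Q^\nu_\rho(x_0)$ up to a rotation of the frame $R_{-\nu}$ versus $R_\nu$, which only permutes the lateral faces. Moreover $\partial^{\pm}_{\lambda\eps}Q^{-\nu}_\rho=\partial^{\mp}_{\lambda\eps}Q^{\nu}_\rho$. Hence $\mathrm{Adm}^{(z,\mathbf 0)}_{\eps,\lambda}(Q^\nu_\rho)=\mathrm{Adm}^{(\mathbf 0,z)}_{\eps,\lambda}(Q^{-\nu}_\rho)$, apart from the half-open convention and the choice of orthonormal frame in the cube. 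Both affect only a region whose energy is $O(\eps\rho^{d-2})$ after a small enlargement of the cube. Concretely, we compare infima on $Q_{\rho}$ and $Q_{\rho+C\eps}$ using \ref{H2} and the density count of points near the lateral boundary. The limits therefore agree.

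For (v) and (vi) we transport competitors by isometries and use \ref{H5}. If $X\in\mathrm{Adm}^{((R,\tau,1),\mathbf 0)}_{\eps,\lambda}(Q^\nu_\rho(x_0))$, then $X-\eps R\tau$ is admissible for $((R,0,1),\mathbf 0)$ on $Q^\nu_\rho(x_0-\eps R\tau)$ and has the same energy. Independence of $x_0$ then gives (v). For (vi), $OX$ is admissible for $((OR,0,1),\mathbf 0)$ on $OQ^\nu_\rho(x_0)$. This is a cube with normal $O\nu$ but possibly a different lateral frame than $Q^{O\nu}_\rho$. We handle this by showing that the cell formula does not depend on the choice of lateral frame. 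To do so, we inscribe and circumscribe slightly smaller or larger cubes and tile with small cubes. This is a fundamental-estimate-type argument, available from Section~\ref{sec:L1-to-converging-data}. Alternatively, we may simply invoke the $\Gamma$-convergence Theorem~\ref{thm:Gamma-convergence}: $E$ is the $\Gamma$-limit, it is invariant under the rotated problem $X\mapsto OX$, and $\varphi$ is recovered by blow-up of $E$ on a flat interface.

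For (iv) we build an explicit competitor. We take $\mathcal L_\eps(z)\cap\{\langle x-x_0,\nu\rangle\ge r_{\mathrm{int}}\eps\}\cap Q^\nu_{\rho+\lambda\eps}$ and remove the points in $\partial^c$. Only points within distance $r_{\mathrm{int}}\eps$ of the hyperplane, or of the lateral boundary strip, have positive energy, and each has energy at most $C$ by \ref{H2}. By \ref{L1} there are at most $C\rho^{d-1}\eps^{1-d}$ such points near the plane, plus $o(\eps^{1-d})$ near the lateral boundary. This gives $\varphi_{\mathrm{vac}}(z,\nu)\le C$ on the sphere, and by homogeneity $\le C|\nu|$. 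Lipschitz continuity follows from convexity together with this bound, since a convex function bounded by $C|\nu|$ is $C'$-Lipschitz.

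For (iii), convexity of a one-homogeneous function is equivalent to subadditivity, $\varphi_{\mathrm{vac}}(z,\nu_1+\nu_2)\le\varphi_{\mathrm{vac}}(z,\nu_1)+\varphi_{\mathrm{vac}}(z,\nu_2)$. The plan is the standard one: use lower semicontinuity of $E$, which holds as a $\Gamma$-limit by Theorem~\ref{thm:Gamma-convergence}, restricted to the functions $z\chi_F$ with $F$ of finite perimeter. The functional $F\mapsto\int_{\partial^*F}\varphi_{\mathrm{vac}}(z,\nu_F)\,d\mathcal H^{d-1}$ being $L^1$-lower semicontinuous forces the integrand to be convex. One approximates the half-space with normal $\nu=\frac{\nu_1+\nu_2}{|\nu_1+\nu_2|}$ by sets with fine periodic sawtooth boundaries whose facets have normals $\nu_1/|\nu_1|$ and $\nu_2/|\nu_2|$, and compares the energies. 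This is where (i) is also used to orient facets correctly.

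Property (ii) is the main obstacle. For the upper bound we construct a competitor: the solid–vacuum optimal configurations for $z^+$ above the plane and for $z^-$ below it, separated by a vacuum gap of width a few $r_{\mathrm{int}}\eps$. Gluing costs nothing beyond $o(1)$, since by locality \ref{H4} the two halves do not interact. For the lower bound we use the reduction of Section~\ref{sec:reduction-two-lattices}: by \ref{H8}--\ref{H10}, any competitor can be modified without increasing the energy until points carrying the orientation $z^+$ and points carrying $z^-$ are separated by distance larger than $r_{\mathrm{int}}\eps$. Points with sparse neighbourhoods, with non-crystalline neighbourhoods, or with highly coordinated neighbours of a different lattice are removed iteratively, and the removal terminates because the configuration is finite. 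The separated configuration then splits into two parts $X^+\cup X^-$ with $E_\eps(X)=E_\eps(X^+)+E_\eps(X^-)$. The hard part is to show that the separating interface is, up to $o(\rho^{d-1})$ errors, a graph-like set which can be compared with flat solid–vacuum interfaces. We would argue with the $\Gamma$-liminf on each part: $X^\pm$ converges to $z^\pm\chi_{F^\pm}$ with $F^+\cap F^-=\emptyset$, $F^\pm\supset H^\nu_\pm$ near the boundary. We then apply the anisotropic minimality of the half-space for the convex integrand (iii), which is proved by the divergence theorem using a calibration. This gives $\liminf\ge\varphi_{\mathrm{vac}}(z^+,\nu)+\varphi_{\mathrm{vac}}(z^-,-\nu)$, and (i) converts the second term into the stated form.
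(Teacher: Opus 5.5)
\paragraph{The gap is in (ii), on the side you treat as easy: the upper bound.}
You place a near-optimal solid--vacuum competitor for $z^+$ and a near-optimal vacuum--solid competitor for $z^-$ on the same cube, shift them apart by a few $r_{\mathrm{int}}\eps$, and invoke \ref{H4}. This fails because membership in $\mathrm{Adm}^{(z^+,\mathbf 0)}_{\eps,\lambda}(Q^\nu_\rho)$ only prescribes $X$ in the $\lambda\eps$-layer of $\partial Q^\nu_\rho$. Nothing confines a competitor to the upper half-cube.
\begin{itemize}
\item Near-optimal competitors may carry atoms far below the interface. For $d\geq3$, thin crystalline filaments of $\mathcal L_\eps(z^+)$ cost only $O(\eps^{d-2})$ per unit length.
\item The two pieces can therefore come within $r_{\mathrm{int}}\eps$ of each other, or even violate \ref{H1}.
\item Cutting them at the plane instead creates an uncontrolled new interface.
\end{itemize}
The paper's missing idea is a two-scale construction, in the proof of Lemma~\ref{lem:Phi=varphi}. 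On $Q^\nu_T$ it tiles the rows just above and below $H^\nu$ by disjoint cubes $Q^\nu_S(j\pm S\nu)$, with $1\ll S\ll T$. In each small cube it places a competitor admissible for that small cube. The small cube's own boundary conditions make it empty on $\partial^-_\lambda$ (resp.\ $\partial^+_\lambda$) of that cube. This gives confinement and separation $\geq 2r_{\mathrm{int}}$. The rest of $Q^\nu_T$ is filled with $\mathcal L(z^\pm)$, and the seams cost $O(T^{d-1}/S+S^2T^{d-2})$. This needs an error $\eta_S$ that is uniform in the small-cube centers and holds for fixed boundary data. That is exactly what Lemmas~\ref{lem:center-independence} and \ref{lem:fixed-bdryvalues} provide.

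\paragraph{The lower bound in (ii), and two smaller points.}
You already have the right tool, Lemma~\ref{lem:separation-2-lattices}, but you miss that it ends the argument at once.
\begin{itemize}
\item $X^+$ coincides with $\mathcal L(z^+)$ on $\partial^+$ and is empty on $\partial^-\cup\partial^c$, so $X^+\in\mathrm{Adm}^{(z^+,\mathbf 0)}_{1,\hat\lambda}$. Likewise $X^-\in\mathrm{Adm}^{(\mathbf 0,z^-)}_{1,\hat\lambda}$.
\item Hence $E_1(X,Q^\nu_T)\geq E_1(X^+,Q^\nu_T)+E_1(X^-,Q^\nu_T)$ is bounded below by the two vacuum cell infima.
\item The shrinkage to $\hat\lambda=\lambda-2r_{\mathrm{int}}$ is absorbed by Lemma~\ref{lem:independence-of-bdry-layer}.
\end{itemize}
No graph structure of the interface, $\Gamma$-liminf or calibration is needed.

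Your detour is also circular as written. You take (iii) from Theorem~\ref{thm:Gamma-convergence}, but that theorem's proof uses (ii) twice: in the upper bound via the representation \eqref{eq:E-phivac}, and in the lower bound via Proposition~\ref{prop:relationpsiphi}, which rests on Lemma~\ref{lem:Phi=varphi}.

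\emph{Point on (i).} For $d\geq3$, $Q^{-\nu}$ is not $Q^\nu$ with permuted lateral faces. We have $Q^{-\nu}=MQ^\nu$ with $M=R_{-\nu}R_\nu$, which is $-\mathrm{Id}$ on $\mathrm{span}\{\nu,e_d\}$ and $\mathrm{Id}$ on its orthogonal complement (for $\nu\neq\pm e_d$). This changes the lateral frame by a map that is generically not a symmetry of the cube, so an $O(\eps)$ enlargement does not help. You need the frame-independence argument you sketch for (vi). This is how the paper proves (i): Lemma~\ref{lem:independence-of-orientation} with $J=\mathrm{diag}(-1,1,\dots,1,-1)$.

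\emph{Point on (vi).} Your alternative via the global $\Gamma$-limit cannot recover $\varphi_{\mathrm{vac}}$. $E$ on $PC(\R^d;\mathcal Z)$ is unchanged if $\langle a_z,\nu\rangle$ is added to $\varphi_{\mathrm{vac}}(z,\nu)$, since every grain has finite measure.
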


\section{Proof of the Main Results}\setcounter{thm}{1}\label{sec:3}

\subsection{Proof of Compactness}
\begin{proof}[Proof of Theorem \ref{thm:compactness}]
Let $\{X_\varepsilon\}_\varepsilon$ be given as in the statement and let $\{u_\varepsilon\}_\varepsilon$ denote their corresponding PC-functions given by \eqref{def:interpolation}. We note that, due to our assumptions on $\mc{Z}$ (see the passage below \eqref{eq:rotated-Voronoi}), it can be embedded into $\R^N$ and is closed and bounded via this embedding. This implies that the $L^\infty$-norm of $\{u_\varepsilon\}_\varepsilon$ can be uniformly bounded by some finite constant. For $r>0$, let $B_r$ be the ball of radius $r$ centered at the origin, then by the coercivity result Lemma~\ref{lem:coercivity} and the assumption $\sup_{\varepsilon>0}E_\varepsilon(X_\varepsilon)<+\infty$ we can uniformly bound $\mc{H}^{d-1}(J_{u_\varepsilon}\cap B_r)$. Thus, for each $r\in\N$ we can apply the compactness theorem for piecewise constant functions, see \cite[Theorem 4.25]{Ambrosio-Fusco-Pallara:2000}. Therefore, by a standard diagonal argument, we obtain that (up to subsequences) $u_\varepsilon \to u$ in $L^1_{\mathrm{loc}}(\mathbb{R}^d;\mathbb{Z})$. Furthermore, due to the lower semicontinuity of $\mc{H}^{d-1}(J_{u} \cap B_n)$ with respect to $L^1$-convergence of partitions, see \cite[Theorem 4.7]{Ambrosio-Fusco-Pallara:2000}, we obtain that $\mc{H}^{d-1}(J_u \cap B_n)\leq C$ for some $C>0$ independent of $n$. In order to conclude that $u\in PC(\R^d;\mc{Z})$ we need to verify that $\mc{L}^d(\{u\neq\textbf{0}\})<+\infty$. Using Lemma~\ref{lem:coercivity} with $A=\R^d$, the isoperimetric inequality and lower semicontinuity of $\mc{L}^d(\{u\neq\textbf{0}\})$ with respect to strong $L^1_{loc}$-convergence, we see that
\begin{align*}
(\mc{L}^d(\{u\neq\textbf{0}\}))^\frac{d-1}{d}&\leq\liminf_{k\to\infty}(\mc{L}^d(\{u_{\varepsilon_k}\neq\textbf{0}\}))^\frac{d-1}{d}\leq\liminf_{k\to\infty}C\mc{H}^{d-1}(\p^*\{u_{\varepsilon_k}\neq\textbf{0}\})\\
&\leq\liminf_{k\to\infty}C\mc{H}^{d-1}(J_{u_{\varepsilon_k}})\leq\liminf_{k\to\infty}CE_{\varepsilon_k}(X_{\varepsilon_k})<+\infty\,.
\end{align*}
Thus, we have $u\in PC(\R^d;\mc{Z})$, which concludes the proof.
\end{proof}

\subsection{Lower Bound}\label{subsec:lowerbound}

In this subsection, we will prove Theorem~\ref{thm:Gamma-convergence}{\rm (i)}. For the proof, it is instrumental to use a different cell formula. In contrast to imposing boundary conditions as in \eqref{eq:propdensity} we require $L^1$-convergence to the function $u^\nu_{z^+,z^-}\in PC_{\mathrm{loc}}(\R^2;\mc{Z})$ defined as
\begin{align}\label{eq:purejump}
u^\nu_{z^+,z^-}(x):=\begin{cases}z^+ & \text{if }\langle x,\nu\rangle\geq0\,,\\ z^- &\text{if }\langle x,\nu\rangle<0\,,\end{cases}
\end{align}
for $x\in\R^d,z^\pm\in\mc{Z}$ and $\nu\in\mathbb{S}^{d-1}$. More precisely, for $z^+,z^- \in \mathcal{Z}, \nu \in \mathbb{S}^{d-1}$ we introduce
\begin{align}\label{eq:cell-formulaL1}
\begin{split}
\psi(z^+,z^-,\nu):=\inf\Big\{\liminf_{\varepsilon\to0}&E_\varepsilon(X_\varepsilon, Q^\nu(y_\varepsilon))\mid y_\varepsilon\in\R^d\,,\\
&\lim_{\varepsilon\to0}\int_{Q^\nu}\vert u_\varepsilon(x+y_\varepsilon)-u^\nu_{z^+,z^-}(x)\vert\,\mathrm{d}x=0\Big\}\,,
\end{split}
\end{align}
 where $u_\varepsilon$ is the function associated to $X_\varepsilon$ as defined in \eqref{def:interpolation}. The density $\psi$ is related to $\varphi$ in the following way:
\begin{prop}[Relation of $\psi$ and $\varphi$]
\label{prop:relationpsiphi}
For all $z^+,z^-\in\mc{Z}$ and $\nu\in\mathbb{S}^{d-1}$ there holds
\begin{align*}
\psi(z^+,z^-,\nu)\geq\varphi(z^+,z^-,\nu)\,.
\end{align*}
\end{prop}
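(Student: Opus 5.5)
Fix $z^\pm,\nu$ and take an almost optimal sequence in \eqref{eq:cell-formulaL1}. That is, choose $X_\varepsilon$ and $y_\varepsilon$ such that $u_\varepsilon(\cdot+y_\varepsilon)\to u^\nu_{z^+,z^-}$ in $L^1(Q^\nu)$ and $\liminf_\varepsilon E_\varepsilon(X_\varepsilon,Q^\nu(y_\varepsilon))\le\psi+\eta$. Passing to a subsequence, we may replace the $\liminf$ by a limit. By \ref{H5} the translation $y_\varepsilon$ is harmless: we simply recentre at $y_\varepsilon$ and compare with the cell problem on $Q^\nu_\rho(y_\varepsilon)$. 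Proposition~\ref{prop:density} asserts that $\varphi$ does not depend on $x_0$, and the infimum in \eqref{eq:propdensity} is translation covariant, so the centre does not matter.

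The goal is to convert $L^1$-closeness to the pure jump into the exact boundary conditions of Definition~\ref{def:admissible-set} on a slightly smaller cube $Q^\nu_\rho$ with $\rho<1$, at a cost that is small. I will use the fundamental estimate of Section~\ref{sec:L1-to-converging-data}, which I expect to be stated as a lemma. Given a configuration whose orientation field is $L^1$-close to $u^\nu_{z^+,z^-}$ on $Q^\nu$, it produces a modified configuration $\tilde X_\varepsilon$. This configuration agrees with $X_\varepsilon$ well inside $Q^\nu_{\rho}$ and coincides with $\mathcal L_\varepsilon(z^\pm)$ on $\partial^\pm_{\lambda\varepsilon}Q^\nu_\rho$. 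It is empty on $\partial^c_{\lambda\varepsilon}Q^\nu_\rho$, which means we delete the atoms in a strip of width $\sim r_{\mathrm{int}}\varepsilon$ around the interface near the lateral boundary. Its energy satisfies
\[
E_\varepsilon(\tilde X_\varepsilon,Q^\nu_\rho)\le E_\varepsilon(X_\varepsilon,Q^\nu)+C(1-\rho)+o(1).
\]
For the construction, I average over layers $\partial_{\lambda\varepsilon}Q^\nu_{t}$ with $t\in(\rho,1)$ and pick the one with least energy and least $L^1$ mismatch. By \ref{H6} and the $L^1$ convergence, most points in that layer are crystallized with orientation $z^\pm$. Replacing the configuration there by the exact lattice creates a defect only near non-matching points, and the counting estimates from Lemma~\ref{lem:coercivity} together with \ref{H2} bound the number of such points. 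The lateral strip near $H^\nu$ contributes at most $C\varepsilon^{d-1}\cdot\#\{\text{atoms}\}\lesssim C\rho^{d-2}\lambda\varepsilon\to0$ in energy. The points where we glue the lattice to the existing configuration also pay at most a bounded energy each, and \ref{H1} forces us to delete atoms that come too close. The hard part is the admissibility of this gluing. The paper warns about this explicitly near \ref{H1}: pasting $\mathcal L_\varepsilon(z^\pm)$ next to a configuration that is only $L^1$-close could create pairs at distance $<\varepsilon$. My first attempt is to remove every atom of $X_\varepsilon$ lying within $\varepsilon$ of the inserted lattice patch. Since all cell energies are at most $C$, this increases the energy by at most $C\varepsilon^{d-1}$ times the number of points in a neighbourhood of width $\sim r_{\mathrm{int}}\varepsilon$ of the bad set, and that neighbourhood has vanishing measure by the choice of layer.

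Since $\tilde X_\varepsilon\in\mathrm{Adm}^{(z^+,z^-)}_{\varepsilon,\lambda}(Q^\nu_\rho)$, we get
\[
\rho^{d-1}\cdot\tfrac{1}{\rho^{d-1}}\inf\{\dots\}\le E_\varepsilon(X_\varepsilon,Q^\nu)+C(1-\rho)+o(1).
\]
Letting $\varepsilon\to0$ and using Proposition~\ref{prop:density} gives $\rho^{d-1}\varphi(z^+,z^-,\nu)\le\psi+\eta+C(1-\rho)$. Sending $\rho\to1$ and $\eta\to0$ then proves the claim. The vacuum cases $z^\pm=\mathbf 0$ are the same argument with the lattice replaced by $\emptyset$ on the corresponding side, which only deletes atoms.
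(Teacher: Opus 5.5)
\textbf{The proposal has a genuine gap.} The step that fails is ``by \ref{H6} and the $L^1$ convergence, most points in that layer are crystallized with orientation $z^\pm$''.

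$L^1$-convergence $u_\varepsilon(\cdot+y_\varepsilon)\to u^\nu_{z^+,z^-}$ only gives a dominant orientation $z^\pm_\varepsilon$ with $z^\pm_\varepsilon\to z^\pm$ in $\mathcal{Z}$ (Step~2 of the proof of Lemma~\ref{lem:psi-phi}). Nothing forces $z^\pm_\varepsilon=z^\pm$. For instance, $X_\varepsilon=\mathcal{L}_\varepsilon(R,\tau_\varepsilon,1)\cap\{\langle x,\nu\rangle\ge0\}$ with $\tau_\varepsilon\to\tau$, $\tau_\varepsilon\neq\tau$, converges to $u^\nu_{z^+,\mathbf{0}}$, yet $X_\varepsilon\cap\mathcal{L}_\varepsilon(z^+)=\emptyset$.

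Consequently, the counting you invoke does not bound the mismatch you need. What \eqref{ineq:card-estimate-Step3} controls is $\#(\mathcal{L}_\varepsilon(z^+_\varepsilon)\triangle X_\varepsilon)$, not $\#(\mathcal{L}_\varepsilon(z^+)\triangle X_\varepsilon)$. No choice of layer can repair this. Any $\tilde X_\varepsilon$ that equals $X_\varepsilon$ well inside $Q^\nu_\rho$ and equals $\mathcal{L}_\varepsilon(z^+)$ on $\partial^+_{\lambda\varepsilon}Q^\nu_\rho$ has an orientation field that must jump between $z^+_\varepsilon$ and $z^+$ (possibly through vacuum) across a set separating the two regions. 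That set has area of order $\rho^{d-1}$, so by Lemma~\ref{lem:coercivity} the gluing costs energy of order one, not $C(1-\rho)+o(1)$. The fundamental estimate of Section~\ref{sec:L1-to-converging-data} therefore only yields converging boundary data, i.e.\ $\psi\ge\Phi$ with $\Phi$ as in \eqref{def:Phi}, and this is exactly what Lemma~\ref{lem:psi-phi} states.

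\textbf{What is missing is the passage $\Phi=\varphi$.} This is where the real work lies, and where \ref{H8}--\ref{H10} enter; your argument never uses them, which is a symptom of the gap. The paper's route is as follows.
\begin{enumerate}
\item Lemma~\ref{lem:separation-2-lattices} shows that a competitor can be reduced, without increasing its energy, to two components $X^\pm\subset\mathcal{L}(z^\pm_T)$ at distance $>r_{\mathrm{int}}$. The solid--solid problem thus splits into two solid--vacuum problems.
\item In each solid--vacuum problem the data can be taken as $z_T=(R_T,\tau,1)\to(R,\tau,1)$, since Corollary~\ref{cor:translation-independence} freezes the translation.
\item One then rotates the whole single-lattice configuration by $M_T=RR_T^{-1}\to\mathrm{Id}$. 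This turns converging data into fixed data on a slightly rotated cube $M_TQ^\nu_T$.
\item That cube is embedded in $Q^\nu_{T_\delta}$, with the frame filled by $\mathcal{L}(z)$ at cost $C\delta T^{d-1}$ (Lemma~\ref{lem:fixed-bdryvalues}).
\item Lemma~\ref{lem:Phi=varphi} combines these steps.
\end{enumerate}
This global-rotation trick is unavailable in your direct approach, because the upper and lower half-cubes would need different rotations. That is precisely why the separation into two well-separated lattices is needed first.
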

We postpone the proof of Proposition~\ref{prop:relationpsiphi} to Sections~\ref{sec:L1-to-converging-data}--\ref{sec:converging-fixed-bdryvalues}.

\begin{proof}[Proof of Theorem~\ref{thm:Gamma-convergence}{\rm (i)}]
Let $\{X_\varepsilon\}_\varepsilon$ be a sequence with $X_\varepsilon\to u$ in $L^1_{\mathrm{loc}}(\mathbb{R}^d)$. Clearly, it suffices to treat the case
\begin{align}\label{ineq:uniformbound-compactness}
\sup_{\varepsilon>0}E_\varepsilon(X_\varepsilon)<+\infty\,.
\end{align}
We proceed in two steps. In the first step, we will identify a limiting measure associated to the energies of the discrete configurations. In the second step, we proceed by a blow-up procedure for the jump part of this measure.\\ 
\noindent {\bf Step 1:} {\it Identification of a limiting measure.} Consider the family of positive measures $\{\mu_\varepsilon\}_\varepsilon$ given as
\begin{align*}
\mu_\varepsilon:=\varepsilon^{d-1}\sum_{x\in X_\varepsilon}E_{\mathrm{cell}}^\varepsilon(x,X_\varepsilon)\delta_x\,.
\end{align*}
By \eqref{eq:scaled-energy-local}, for all $ A\subset \mathbb{R}^d$ there holds
\begin{align}\label{eq:measure-local-energy}
\vert\mu_\varepsilon\vert(A)=\mu_\varepsilon(A)=E_\varepsilon(X_\varepsilon,A)\,.
\end{align}
Thus, by \eqref{ineq:uniformbound-compactness} we have $\sup_{\varepsilon>0}\vert\mu_\varepsilon\vert(\R^d)<+\infty$ and as $\R^d$ is locally compact, there exists a non-negative finite Radon measure $\mu$ such that up to a (non relabeled) subsequence we have
\begin{align}\label{eq:convergence-mueps}
\mu_\varepsilon\overset{*}{\rightharpoonup}\mu\,.
\end{align}
Now, using the Radon-Nykodym Theorem and Lebesgue Decomposition Theorem for $\mc{H}^{d-1}|_{J_u}$, we get a decomposition into two non-negative mutually singular measures
\begin{align*}
\mu=\xi\mc{H}^{d-1}|_{J_u}+\mu_s\,.
\end{align*}
Using a blow-up procedure, we will show that
\begin{align}\label{ineq:xidensity}
\xi(x_0)\geq\psi(z^+,z^-,\nu)\quad\text{for }\mc{H}^{d-1}\text{-almost every }x_0\in J_u\,,
\end{align}
where $z^+,z^-$ denote the one-sided limits of $u$ at $x_0$ and $\nu$ denotes the corresponding normal (here we omit the explicit dependence on $u$ for notational convenience). We postpone the proof of \eqref{ineq:xidensity} to Step 2 and first argue how we can conclude the proof once \eqref{ineq:xidensity} is proven. We use \eqref{def:limiting energy},\eqref{eq:measure-local-energy}--\eqref{ineq:xidensity},  and Proposition~\ref{prop:relationpsiphi} to show that
\begin{align*}
\liminf_{\varepsilon\to0}E_\varepsilon(X_\varepsilon)=\liminf_{\varepsilon\to0}\mu_\varepsilon(\R^d)\geq\mu(\R^d)\geq\int_{J_u}\xi\,\mathrm{d}\mc{H}^{d-1}
\geq\int_{J_u}\varphi(z^+,z^-,\nu)\,\mathrm{d}\mc{H}^{d-1}=E(u)\,.
\end{align*}
\noindent {\bf Step 2:} {\it Blow-up procedure.}  It remains to prove \eqref{ineq:xidensity}. Using properties of SBV-functions and Radon measures (see, for example, \cite[Theorem 2.63, Theorem 3.78, and Remark 3.79]{Ambrosio-Fusco-Pallara:2000}), we know that for $\mc{H}^{d-1}$- almost every $x_0\in J_u$ it holds that
\begin{enumerate}
\item[(a)] 
$\displaystyle
\lim_{\rho\to0}\frac{1}{\rho^d}\int_{Q_\rho^\nu(x_0)}\vert u(x)-u^\nu_{z^+,z^-}(x-x_0)\vert\,\mathrm{d}x=0\,,
$
\item[(b)]
$\displaystyle
\lim_{\rho\to0}\frac{1}{\rho^{d-1}}\mc{H}^{d-1}(J_u\cap Q_\rho^\nu(x_0))=1\,,
$
\item[(c)]
$ \displaystyle
\xi(x_0)=\lim_{\rho\to0}\frac{\mu(Q^\nu_\rho(x_0))}{\mc{H}^{d-1}(J_u\cap Q_\rho^\nu(x_0))}\,.
$
\end{enumerate}
Here, $u^\nu_{z^+,z^-}$ is defined as in \eqref{eq:purejump}. To show \eqref{ineq:xidensity} it suffices to prove it for all $x_0\in J_u$ such that (a)-(c) hold. To do this, let $x_0\in J_u$ be such a point. As $\mu$ is a locally finite measure we can fix a sequence $\rho_n\to0$ as $n\to\infty$ such that $\vert\mu\vert(\p Q^\nu_{\rho_n}(x_0))=0$ for all $n\in\N$. Then as (b)-(c) hold, using the Portmanteau Theorem as well as \eqref{eq:measure-local-energy}, and \eqref{eq:convergence-mueps}, we get:
\begin{align*}
\xi(x_0)&=\lim_{\rho\to0}\frac{\mu(Q_\rho^\nu(x_0))}{\mc{H}^{d-1}(J_u\cap Q_\rho^\nu(x_0))}=\lim_{\rho\to0}\frac{\mu(Q_\rho^\nu(x_0))}{\rho^{d-1}}
=\lim_{n\to\infty}\frac{1}{\rho_n^{d-1}}\lim_{\varepsilon\to0}\mu_\varepsilon(Q_{\rho_n}^\nu(x_0))\\&=\lim_{n\to\infty}\frac{1}{\rho_n^{d-1}}\lim_{\varepsilon\to0}E_\varepsilon(X_\varepsilon,Q^\nu_{\rho_n}(x_0))\,.
\end{align*}
We introduce the configurations $X_\varepsilon^n:=\rho_n^{-1}X_\varepsilon$ and apply Lemma \ref{lem:elementary-properties}(vii) (for $\lambda=\frac{1}{\rho_n}$) to obtain
\begin{align}\label{eq:xi-energy-cube}
\xi(x_0)=\lim_{n\to\infty}\lim_{\varepsilon\to0}E_{\frac{\varepsilon}{\rho_n}}(X^n_\varepsilon,Q^\nu(\rho_n^{-1}x_0))\,.
\end{align}
Recall that $X_\varepsilon\to u$ in $L^1_{\mathrm{loc}}(\R^d)$ implies, by Definition~\ref{def:convergence}, that $u_\varepsilon\to u$ in $L^1_{\mathrm{loc}}(\R^d)$. Let $u_\varepsilon^n$ denote the function corresponding to $X_\varepsilon^n$. Then Lemma~\ref{lem:scaling} implies that $u^n_\varepsilon(x)=u_\varepsilon(\rho_nx)$ for all $x\in\R^d$. In particular it also yields $u^n_\varepsilon\to u^n$ on $Q^\nu(\rho^{-1}_nx_0)$, where $u^n(x):=u(\rho_nx)$ for $x\in\R^d$. Using (a), the change of variables $y=\rho_nx+x_0$ and $u^n(x+\rho_n^{-1}x_0)=u(x_0+\rho_nx)$ as well as $u^\nu_{z^+,z^-}(x)=u^\nu_{z^+,z^-}(\rho_nx)$ for $x\in\R^d$, we obtain
\begin{align*}
\lim_{n\to\infty}\int_{Q^\nu}\vert u^n(x+\rho_n^{-1}x_0)-u^\nu_{z^+,z^-}(x)\vert\,\mathrm{d}x&=\lim_{n\to\infty}\int_{Q^\nu}\vert u(x_0+\rho_nx)-u^\nu_{z^+,z^-}(\rho_nx)\vert\,\mathrm{d}x
\\&=\lim_{n\to\infty}\frac{1}{\rho_n^d}\int_{Q^\nu_{\rho_n}(x_0)}\vert u(y)-u^\nu_{z^+,z^-}(y-x_0)\vert\,\mathrm{d}y=0\,.
\end{align*}
Now, by \eqref{eq:xi-energy-cube} and $u^n_\varepsilon\to u^n$ on $Q^\nu(\rho_n^{-1}x_0)$ as $\varepsilon\to0$, we use a diagonal argument to find a null sequence $\{\varepsilon(n)\}_n$ such that for $X^n:=X^n_{\varepsilon(n)}$ and $u^n:=u^n_{\varepsilon(n)}$ it holds
\begin{align}\label{eq:convergencexiEnergy2}
\xi(x_0)=\lim_{n\to\infty}E_{\varepsilon_n}(X^n,Q^\nu(y^n))\,,
\end{align}
and
\begin{align*}
\lim_{n\to\infty}\int_{Q^\nu}\vert u^n(x+y^n)-u^\nu_{z^+,z^-}(x)\vert\,\mathrm{d}x=0\,,
\end{align*}
where $\varepsilon_n:=\frac{\varepsilon(n)}{\rho_n}$ and $y^n=\rho_n^{-1}x_0$. As the above constructed sequence is admissible for the definition of $\psi$, see \eqref{eq:cell-formulaL1}, \eqref{eq:convergencexiEnergy2} implies $\xi(x_0)\geq\psi(z^+,z^-,\nu)$. This shows \eqref{ineq:xidensity} and concludes the proof.
\end{proof}
\subsection{Upper bound}\label{subsec:upperbpund}
This subsection is devoted to the proof of Theorem \ref{thm:Gamma-convergence}{\rm (ii)}. The following density result will be crucial.
\begin{lem}
\label{lem:density}
Let $u\in PC(\R^d;\mc{Z})$. Then there exists a sequence $(u_n)_n\subset PC(\R^d;\mc{Z})$ with $u_n\to u$ in $L^1(\R^d)$ and $\limsup_{n\to\infty}E(u_n)\leq E(u)$ such that $u_n$ attains only finitely many values and has polyhedral jump set, that is, $J_{u_n}$ is given by the union of $(d-1)$-dimensional simplices up to a $\mc{H}^{d-1}$-negligible set.
\end{lem}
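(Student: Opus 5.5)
The plan rests on the decomposition of Theorem~\ref{thm:properties-of-phi}(ii). Write $u=\sum_{j}z_j\chi_{G_j}$ as in \eqref{eq:representationu} and set $G_0:=\{u=\textbf{0}\}$. Since $\varphi(z^+,z^-,\nu)=\varphi_{\mathrm{vac}}(z^+,\nu)+\varphi_{\mathrm{vac}}(z^-,-\nu)$, with the analogous identity for solid--vacuum interfaces via Theorem~\ref{thm:properties-of-phi}(i), the limiting energy splits grain by grain:
\begin{align*}
E(u)=\sum_{j\geq1}\int_{\partial^*G_j}\varphi_{\mathrm{vac}}(z_j,-\nu_{G_j})\,\mathrm{d}\mc{H}^{d-1}=:\sum_{j\geq1}F_{z_j}(G_j)\,.
\end{align*}
Here $\nu_{G_j}$ is the measure-theoretic outer normal; the sign convention is to be fixed once against \eqref{eq:purejump}. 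Each $F_z$ is an anisotropic perimeter. By Theorem~\ref{thm:properties-of-phi}(iii),(iv) its integrand is convex, positively $1$-homogeneous, bounded by $C|\nu|$ and Lipschitz in $\nu$. By (v),(vi) it satisfies $\varphi_{\mathrm{vac}}((R,\tau,1),\nu)=\varphi_{\mathrm{vac}}((\mathrm{Id},0,1),R^{T}\nu)$, so it is also continuous in $z$.

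\textbf{Step 1: finitely many values.} Set $u_N:=\sum_{j\leq N}z_j\chi_{G_j}$, which replaces all grains $G_j$ with $j>N$ by vacuum. Then $u_N\to u$ in $L^1$ as $N\to\infty$, because $\mc{L}^d(\bigcup_jG_j)<+\infty$ and $\mc{Z}$ is bounded. Moreover, by the splitting above,
\begin{align*}
E(u_N)=\sum_{j\leq N}F_{z_j}(G_j)\leq E(u)\,.
\end{align*}
The point is that the interface of a surviving grain $G_k$ with a removed grain $G_j$ becomes a solid--vacuum interface of the same density $\varphi_{\mathrm{vac}}(z_k,\cdot)$, while the contribution of $G_j$ simply disappears; vacuum--vacuum interfaces carry no jump. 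Each $u_N$ takes at most $N+1$ values and lies in $PC(\R^d;\mc{Z})$.

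\textbf{Step 2: polyhedral approximation of a finite partition.} Fix $N$. The sets $\{G_0,G_1,\dots,G_N\}$ form a Caccioppoli partition of $\R^d$ with $G_1,\dots,G_N$ of finite measure and finite perimeter. I will invoke the known density result for finite Caccioppoli partitions (e.g.\ Braides--Conti--Garroni, or the classical polyhedral approximation in \cite{Ambrosio-Fusco-Pallara:2000}, after a preliminary truncation to a large ball). It gives polyhedral partitions $\{G^n_0,\dots,G^n_N\}$, with $G^n_j$ bounded for $j\geq1$, such that $\chi_{G^n_j}\to\chi_{G_j}$ in $L^1$ and, for each $j$, $\mc{H}^{d-1}(\partial^*G^n_j)\to\mc{H}^{d-1}(\partial^*G_j)$. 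The density result must also give convergence of the vector measures $D\chi_{G^n_j}\rightharpoonup^* D\chi_{G_j}$ in the strict sense. Reshetnyak's continuity theorem, applied to the continuous, $1$-homogeneous integrand $\nu\mapsto\varphi_{\mathrm{vac}}(z_j,-\nu)$, then yields $F_{z_j}(G^n_j)\to F_{z_j}(G_j)$. Hence $v_n:=\sum_{j\leq N}z_j\chi_{G^n_j}$ satisfies $v_n\to u_N$ and $E(v_n)\to E(u_N)$. A diagonal argument over $N$ and $n$ concludes the proof.

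\textbf{Main obstacle.} The delicate point is Step 2. Each $G_j$ must be approximated \emph{simultaneously}, preserving disjointness and the partition structure, with \emph{strict} convergence of every individual boundary. Approximating each set separately destroys disjointness, and mere $L^1$ convergence only gives lower semicontinuity. If a ready-made multi-phase statement is not available, I would first try the following route:
\begin{itemize}
\item Mollify each $\chi_{G_j}$ and choose generic level sets via coarea.
\item Resolve overlaps by a fixed ordering of the phases, so that the overlaps have vanishing perimeter cost.
\item Replace the resulting smooth partition by a polyhedral one on a fine triangulation, again chosen in general position.
\end{itemize}
Throughout, I would control $\sum_j\mc{H}^{d-1}(\partial^*G^n_j)$ from above. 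Only the upper bound needs care, since the integrand bound $C|\nu|$ and the lower semicontinuity of anisotropic perimeters then close the argument.
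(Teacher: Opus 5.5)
Your argument is correct in substance, but it takes a different route from the paper. The paper does not use the decomposition at this point. It repeats the proof of \cite[Lemma~3.6]{FriedrichKreutzSchmidt}: the Braides--Conti--Garroni density theorem (Theorem~2.1, and Corollary~2.4 for countably many phases) is applied directly to the multiphase integrand $\varphi(z_1,z_2,\nu)$, which only requires continuity in $\nu$.

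You instead use Theorem~\ref{thm:properties-of-phi}(i),(ii) to write $E$ as a sum of single-grain anisotropic perimeters $F_{z_j}(G_j)$; this is the same identity as \eqref{eq:E-phivac}. This buys you two things. First, the reduction to finitely many values becomes trivial, because deleting grains can only lower the energy. Second, the polyhedral approximation of the finite partition only has to give strict convergence of each phase. Weak$^*$ convergence of $D\chi_{G^n_j}$ is then automatic, and Reshetnyak's theorem is applied grain by grain rather than interface by interface. The truncation to a ball is harmless: choose radii with $\mathcal{H}^{d-1}(\{u\neq\mathbf{0}\}\cap\partial B_r)\to0$ and use the uniform bound on $\varphi_{\mathrm{vac}}$. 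The jump created on the boundary of the working domain when you extend by vacuum is handled by continuity of traces under strict convergence. Your route is more elementary for this model. The paper's route does not care whether $\varphi$ splits, so it also covers genuinely interface-dependent densities such as the triangular-lattice case.

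One point needs fixing. Do not take convexity or Lipschitz continuity of $\nu\mapsto\varphi_{\mathrm{vac}}(z,\nu)$ from Theorem~\ref{thm:properties-of-phi}(iii),(iv). Part (iii) is proved using that $E$ is the $\Gamma$-limit, and the $\Gamma$-limsup inequality is proved via the very lemma you are establishing. The Lipschitz claim in (iv) is also deduced from (iii), so citing either is circular.

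What you actually need is much less:
\begin{itemize}
\item continuity in $\nu$, which is Remark~\ref{rem:continuity};
\item the bound $\varphi_{\mathrm{vac}}(z,\nu)\le C$, which the proof of (iv) obtains directly without convexity.
\end{itemize}
For the same reason, your fallback sketch should not close with lower semicontinuity of anisotropic perimeters, since that needs convexity. Use instead the lower semicontinuity of the isotropic perimeter of each phase. Combined with an upper bound on $\sum_j\mathcal{H}^{d-1}(\partial^*G^n_j)$, it forces strict convergence of every phase, and Reshetnyak then needs only continuity. Finally, the continuity in $z$ that you derive from (v),(vi) is not needed, because the values $z_1,\dots,z_N$ stay fixed throughout the approximation.
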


\begin{proof} The proof follows exactly as the proof of \cite[Lemma~3.6]{FriedrichKreutzSchmidt} where we observe that the density result \cite[Theorem~2.1 and Corollary~2.4]{Braides-Conti-Garroni:2017} holds true for any dimension $d\geq 2$. Here, observe that only continuity of the map $\nu \mapsto \varphi(z_1,z_2,\nu)$ for all $z_1,z_2 \in \mathcal{Z}$ is needed.
\end{proof}

\begin{figure}
\begin{tikzpicture}[scale=0.7]

\tikzset{>={Latex[width=1mm,length=1mm]}};

\tkzDefPoint(0,0){A}
\tkzDefPoint(10,0){B}
\tkzDefSquare(A,B)
\tkzGetPoints{C}{D}
\tkzDrawPolygon(A,B,C,D)
\draw[dashed](-0.5,5)--(11,5) node[right]{$H^\nu$};
\draw[->](11,5)--(11,4) node[right]{$\nu$};
\draw[fill=black!20] (0,5.05) rectangle ++(10,4.95);
\draw[fill=black!50] (0,0) rectangle ++(10,4.95);
\draw(1.55,5) circle(.2);
\draw(-3,5) circle(2);
\draw(-3,5)++(60:2)--(1.55,5.2);
\draw(-3,5)++(-60:2)--(1.55,4.8);
\draw(4.3,2.5) node[right]{$\{u=z\}$};
\draw(4.3,7.5) node[right]{$\{u=0\}$};
\draw[{<[scale=0.5]}-{>[scale=0.5]}](2.1,4.82)--node[below,scale=0.5]{$\rho$}(2.31,4.82) ;
\begin{scope}
\clip(-3,5) circle(2);
\draw(-3,5) node{$X^x_{\varepsilon,\rho}$};
\fill[black!20] (-7,5.7) rectangle (10,8);
\fill[black!20] (-4.3,6) rectangle (-3.7,5.2);
\fill[black!20] (-2.3,6) rectangle (-1.7,5.2);
\fill[black!50] (-7,4.3) rectangle(10,0);
\fill[black!50] (-4.3,4.3) rectangle(-3.7,4.8);
\fill[black!50] (-2.3,4.3) rectangle(-1.7,4.8);
\fill[pattern=north west lines] (-4.3,4.8) rectangle (-3.7,5.2);
\fill[pattern=north west lines] (-2.3,4.8) rectangle (-1.7,5.2);
\foreach \j in {0,...,2}{
\tkzDefPoint(-4+\j*2,4){A}
\tkzDefPoint(-4+\j*2,6){B}
\tkzDefSquare(A,B)
\tkzGetPoints{C}{D}
\tkzDrawPolygon(A,B,C,D)
}
\begin{scope}
\clip(-7,4.8) rectangle(5,0);
\begin{scope}[shift={(-6,4.25)},scale=0.2]
\foreach\i in {0,...,10}{
\foreach\k in {0,...,20}{
\foreach\j in {0,...,5}{
\fill[black,rotate=30](0+\k*3,0-\i*1.732)++(60*\j:1) circle(.1);
\fill[black,rotate=30](1.5+\k*3,0.866-\i*1.732)++(60*\j:1) circle(.1);
}
}
}
\end{scope}
\end{scope}
\end{scope}
\foreach \j in {1,...,46}{
\tkzDefPoint(0+\j*0.21,4.9){E}
\tkzDefPoint(0.21+\j*0.21,4.9){F}
\tkzDefSquare(E,F)
\tkzGetPoints{G}{H}
\tkzDrawPolygon(E,F,G,H)
}
\draw(-3,4)--(-3,2.5) node[below,scale=0.7]{$Q^{\nu}_\rho(x)$};
\end{tikzpicture}
\caption{This is a schematic picture of a transition from two constant values considered in the construction of the upper bound for $d=2$. The dark and light grey regions are regions where the configuration coincides with the lattice and vacuum, respectively.}
\label{fig:\thefig}\stepcounter{fig}
\end{figure}
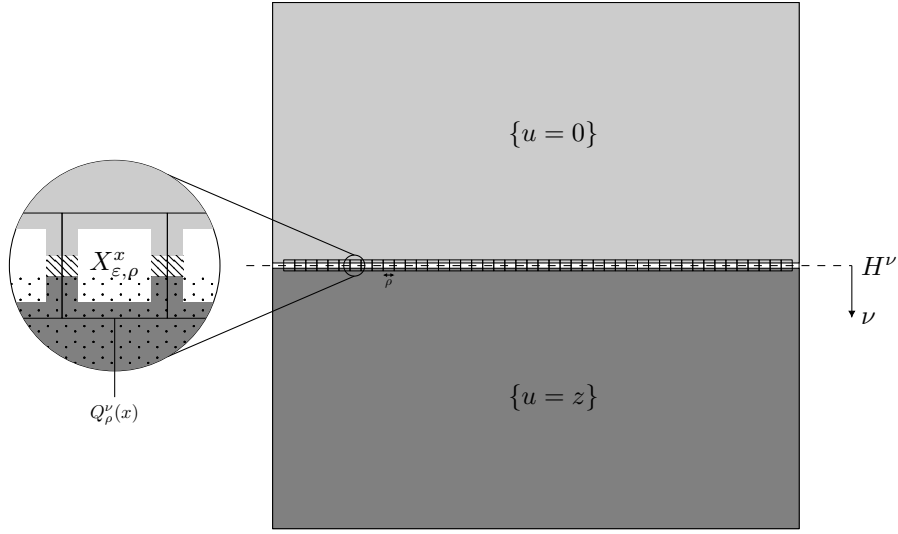

We are now in a position to prove Theorem~\ref{thm:Gamma-convergence}{\rm (ii)}. 

\begin{proof}[Proof of Theorem~\ref{thm:Gamma-convergence}{\rm (ii)}] We split the proof into several steps.  \\
\noindent {\bf Step 1:} {\it Energy representation via $\varphi_{\mathrm{vac}}$.} Let $u \in \mathrm{PC}(\mathbb{R}^d;\mathcal{Z})$. We then have
\begin{align}\label{eq:E-phivac}
E(u) = \sum_{i=1}^\infty \int_{\partial^* G_i} \varphi_{\mathrm{vac}}(z_i, \nu_i)\,\mathrm{d}\mathcal{H}^{d-1}\,.
\end{align}
In fact, due to Theorem~\ref{thm:properties-of-phi}(ii) and the fact that $\nu_i =-\nu_j$ for $\mathcal{H}^{d-1}$-a.e.~$x\in \partial^*G_i\cap \partial^*G_j$ (see \cite[Definition 3.67 and Theorem 4.17]{Ambrosio-Fusco-Pallara:2000}), we have
\begin{align*}
E(u) &= \int_{J_u} \varphi(z^+,z^-,\nu)\,\mathrm{d}\mathcal{H}^{d-1} = \frac{1}{2}\sum_{i=0}^\infty\sum_{j\neq i} \int_{\partial^*G_i\cap \partial^*G_j} \varphi(z_i,z_j,\nu_i)\,\mathrm{d}\mathcal{H}^{d-1} \\&=\frac{1}{2}\sum_{i=0}^\infty\sum_{j\neq i} \int_{\partial^*G_i\cap \partial^*G_j} \varphi_{\mathrm{vac}}(z_i,\nu_i)+ \varphi_{\mathrm{vac}}(z_j,-\nu_i)\,\mathrm{d}\mathcal{H}^{d-1}\\&=\frac{1}{2}\sum_{i=0}^\infty\sum_{j\neq i} \int_{\partial^*G_i\cap \partial^*G_j} \varphi_{\mathrm{vac}}(z_i,\nu_i)\,\mathrm{d}\mathcal{H}^{d-1} + \frac{1}{2}\sum_{i=0}^\infty\sum_{j\neq i} \int_{\partial^*G_i\cap \partial^*G_j} \varphi_{\mathrm{vac}}(z_j,-\nu_i)\,\mathrm{d}\mathcal{H}^{d-1} \\&= \sum_{i=0}^\infty\sum_{j\neq i} \int_{\partial^*G_i\cap \partial^*G_j} \varphi_{\mathrm{vac}}(z_i,\nu_i)\,\mathrm{d}\mathcal{H}^{d-1}  = \sum_{i=1}^\infty \int_{\partial^*G_i} \varphi_{\mathrm{vac}}(z_i,\nu_i)\,\mathrm{d}\mathcal{H}^{d-1} \,,
\end{align*}
where in the last equality we used that $\varphi_\mathrm{vac}(\textbf{0},\nu)=0$ for all $\nu \in \mathbb{S}^{d-1}$ and $\mathcal{H}^{d-1}(\partial^* G_i \Delta\cup_{j\neq i} (\partial^* G_i  \cap \partial^* G_j))=0$ (see, \cite[Theorem 4.17]{Ambrosio-Fusco-Pallara:2000}). This shows \eqref{eq:E-phivac} and concludes Step~1. \\ 
\noindent {\bf Step 2:} {\it Strict approximation of each of the grains.} By a general density argument in the theory of $\Gamma$-convergence and Lemma~\ref{lem:density}, it suffices to construct a recovery sequence only for $u\in PC(\R^d;\mc{Z})$ with bounded support, attaining finitely many values and having polyhedral jump set, i.e., 
\begin{align}\label{eq:u-finite-card}
u=\sum_{j=1}^N z_j \chi_{G_j}\,,
\end{align}
where $G_j \subset \mathbb{R}^d$ are pairwise disjoint,  bounded, and open sets of finite perimeter with polyhedral boundary.  In particular, there holds $\mathcal{H}^{d-1}(\partial G_j) = \mathcal{H}^{d-1} (\partial^* G_j)$ for all $j=1,\ldots,N$. We claim that we  can furthermore assume that 
\begin{align}\label{ineq:distance-Grains-pos}
\min_{i\neq j}\mathrm{dist}(G_i,G_j) >0\,.
\end{align}
To see this, by \eqref{eq:E-phivac} and \eqref{eq:u-finite-card}, it suffices to approximate each grain $G_i$ strictly from the interior as then, due to Reshetnyak's Continuity Theorem, see \cite[Theorem~2.39]{Ambrosio-Fusco-Pallara:2000} and recall that $\nu \mapsto \varphi_{\mathrm{vac}}(z,\nu)$ is continuous (Remark~\ref{rem:continuity}) and bounded (Theorem~\ref{thm:properties-of-phi}(iv)), we can assume \eqref{ineq:distance-Grains-pos}. The strict interior approximation result is a consequence of \cite[Theorem~1.1]{Schmidt}.  \\
\noindent {\bf Step 3:} {\it Discrete-to-continuum approximation.} As the jump set is polyhedral, we only perform a local construction on each $(d-1)$-dimensional face of the jump set. To this end, due to \eqref{ineq:distance-Grains-pos}, up to translation, we may assume that for some $D \subset \mathbb{R}^d$ open, bounded and smooth set and for some $z\in \mathcal{Z}$ we have that $u(x) = u^\nu_{z,\textbf{0}}(x)$ for $x \in D$. It suffices to construct $\{X_\varepsilon\}_\varepsilon$ such that $X_\varepsilon \to u$ in $\mathrm{L}^1(D)$ and
\begin{align}\label{eq:goal-Step3}
\lim_{\varepsilon\to 0} E_\varepsilon(X_\varepsilon,D) =E(u,D)\,.
\end{align}
 Fix $\rho >0$ such that $\rho < \frac{1}{2\sqrt{d}}\min_{i\neq j}\mathrm{dist}(G_i,G_j)$ and $\lambda > 6r_{\mathrm{int}}$. We set 
\begin{align*}
\mathcal{Z}^\nu_\rho = \{  R_\nu k\mid k \in \rho\mathbb{Z}^{d-1} \cap H^{e_d}\}\cap\{x\in D\mid\dist(x,\p D)\geq\sqrt{d}\}.
\end{align*}
In view of Proposition~\ref{prop:density} and Definition~\ref{def:phivac} there exists $\varepsilon_\rho >0$ such that for all $\varepsilon \in (0,\varepsilon_\rho)$ and every $x \in \mathcal{Z}^\nu_\rho$ we find $X_{\varepsilon,\rho}^x \in \mathrm{Adm}^{(z,\textbf{0})}_{\varepsilon,\lambda}(Q^\nu_\rho(x))$   such that 
\begin{align}\label{ineq:limsup-estimate-cube}
\frac{1}{\rho^{d-1}} E_\varepsilon(X_{\varepsilon,\rho}^x, Q^\nu_\rho(x)) \leq \varphi_{\mathrm{vac}}(z,\nu) +\rho\,.
\end{align}
We then define
\begin{align}\label{def:optimizer-rho}
X_{\varepsilon,\rho}:= \begin{cases} X_{\varepsilon,\rho}^x &\text{in }   Q^\nu_\rho(x) \text{ for some } x \in \mathcal{Z}^\nu_\rho \cap D\,,\\
\mathcal{L}_\varepsilon(z) &\text{in } (H^\nu_+ \cap D) \setminus \bigcup_{x \in \mathcal{Z}_\rho^\nu} Q^\nu_\rho(x)\,,\\
\emptyset &\text{otherwise.}
\end{cases}
\end{align}
It holds $E_\varepsilon(X)<\infty$ by the boundary conditions of $X^x_{\varepsilon,\rho}$, see Definition~\ref{def:admissible-set}. By construction, setting $u_{\varepsilon,\rho} = u(X_{\varepsilon,\rho})$, we obtain that 
\begin{align}\label{ineq:L1-closeness}
\|u_{\varepsilon,\rho} - u\|_{L^1(D)} \leq C|\{ x \in D\colon \mathrm{dist}(x,H^\nu) \leq 2\rho\}| \leq C \mathrm{diam}(D)^{d-1} \rho
\end{align}
for every $\varepsilon \in (0,\varepsilon_\rho)$. As $X_{\varepsilon,\rho}^x \in \mathrm{Adm}^{(z,\textbf{0})}_{\varepsilon,\lambda}(Q^\nu_\rho(x)) $ for every $x \in \mathcal{Z}_\rho^\nu$, using \eqref{ineq:limsup-estimate-cube} and \eqref{def:optimizer-rho}, we have that 
\begin{align}\label{ineq:energy-in-U}\begin{split}
E_\varepsilon(X_{\varepsilon,\rho},D) = \sum_{x \in \mathcal{Z}_\rho^\nu} E_\varepsilon(X_{\varepsilon,\rho},Q^\nu_\rho(x)) &=  \sum_{x \in \mathcal{Z}_\rho^\nu} E_\varepsilon(X_{\varepsilon,\rho}^x,Q^\nu_\rho(x)) \\&\leq \#(\mathcal{Z}_\rho^\nu \cap D)\left(\rho^{d-1} \varphi_{\mathrm{vac}}(z,0) +\rho^d \right)
\end{split}
\end{align}
for every $\varepsilon \in (0,\varepsilon_\rho)$.  The cardinality of $\mathcal{Z}_\rho^\nu \cap D$ can be estimated by
\begin{align}\label{ineq:card-estimate-Z-rho}
\rho^{d-1}\#(\mathcal{Z}_\rho^\nu \cap D) \leq \sum_{x \in \mathcal{Z}_\rho^\nu} \mathcal{H}^{d-1}(Q^\nu_\rho(x) \cap D) \leq \mathcal{H}^{d-1}(H^\nu \cap D) = \mathcal{H}^{d-1}(S_u \cap D) \,.
\end{align}
Combining \eqref{ineq:energy-in-U}, \eqref{ineq:card-estimate-Z-rho}, and Definition~\ref{def:phivac}, we infer that
\begin{align}\label{ineq:local-limsup}
\limsup_{\rho \to 0} \limsup_{\varepsilon \to 0} E_\varepsilon(X_{\varepsilon,\rho},D) \leq \int_{S_{u} \cap D} \varphi_{\mathrm{vac}}(z,\nu)\,\mathrm{d}\mathcal{H}^{d-1} = E(u,D)\,.
\end{align}
Thanks to \eqref{ineq:L1-closeness} and \eqref{ineq:local-limsup} a diagonal argument provides us with a sequence $X_\varepsilon = X_{\varepsilon, \rho_\varepsilon}$ with $X_\varepsilon \to u$ in $L^1(D)$ and such that \eqref{eq:goal-Step3} holds. This concludes the proof.
\end{proof}

\section{Auxiliary Results}\label{subsec:auxiliary-results}
We would like to state some properties of our energy, which we will heavily use throughout. For this, we recall the definition of the neighborhood of a point below \ref{H7}, namely, given a configuration $X$ with $E_\varepsilon(X)<+\infty$, the neighborhood of $x\in X$ is defined as 
\begin{align}\label{def:neighbourhood}
\mc{N}_\varepsilon(x)=\{y\in X\setminus\{x\}\mid\vert x-y\vert\leq \varepsilon\, r_{\mathrm{int}}\}\,.
\end{align}
Recall the definition of the energy \eqref{eq:scaled-energy-local}.
\begin{lem}\label{lem:elementary-properties}
Let $E_{\mathrm{cell}}$ satisfy {\rm\ref{H1}}--{\rm\ref{H5}}. Let $\varepsilon>0$ and $X\subset\R^d$ with $E_\varepsilon(X)<+\infty$. Then, the following statements are true:
\begin{enumerate}
\item[{\rm (i)}] For all $x\in X$ we have $\dist(\{x\},X\setminus\{x\})\geq\varepsilon$.
\item[{\rm (ii)}] $E_\varepsilon(RX+\tau,RA+\tau)=E_\varepsilon(X,A)$ for all $R\in SO(d),\tau\in\R^d$ and $A\subset\R^d$ Borel.
\item[{\rm (iii)}] $E_\varepsilon(X,A)\leq E_\varepsilon(X,B)$ for all $A\subset B$.
\item[{\rm (iv)}] $E_\varepsilon(X,A\cup B)=E_\varepsilon(X,A)+E_\varepsilon(X,B)$ for all $A,B\subset\R^d$ with $A\cap B=\emptyset$.
\item[{\rm (v)}] There exists $ C>0$ such that for all $ A\subset\R^d$ Borel there holds $\#(X\cap A)\leq\frac{C}{\varepsilon^d}\mc{L}^d((A)_\varepsilon)$.
\item[{\rm (vi)}] There exists $ C>0$ such that for all  $r>0$ and $x\in X$ there holds  $\#(X\cap\overline{B}_{r\varepsilon}(x))\leq C (r+1)^d$.
\item[{\rm (vii)}] For all $\lambda, \varepsilon>0$ and $A\subset\R^d$ it holds:
$$E_{\mathrm{cell}}^{\lambda\varepsilon}(\lambda x,\lambda X)=E_{\mathrm{cell}}^{\varepsilon}(x,X)\quad\forall x\in X\,.$$
In particular
$$E_{\lambda\varepsilon}(\lambda X,\lambda A)=\lambda^{d-1}E_{\varepsilon}(X,A)\,.$$
\item[{\rm (viii)}] If $X=Y\cup Z$ and $\mathrm{dist}(Y,Z) >r_{\mathrm{int}}\varepsilon$. Then,  for all $A\subset\R^d$ we have 
\begin{align*}
E_\varepsilon(X,A)=E_\varepsilon(Y,A)+E_\varepsilon(Z,A)\,.
\end{align*}
\item[{\rm (ix)}] Let $x_0\in X$ with $\{x_0\}\cup\mc{N}_\varepsilon(x_0)\subset A$ for some $A \subset \mathbb{R}^d$. Then 
\begin{align*}
E_\varepsilon(X\setminus\{x_0\})<E_\varepsilon(X) \quad \iff \quad E_\varepsilon(X\setminus\{x_0\},A)<E_\varepsilon(X,A)\,.
\end{align*} 
\item[{\rm (x)}]  Let $X,Y \in \mathcal{X}$ and $A \subset \mathbb{R}^d$ be such that $ X \cap \overline{(A)_{r_{\mathrm{int} }\varepsilon}} = Y \cap \overline{(A)_{r_{\mathrm{int} }\varepsilon}}$. Then,
\begin{align*}
E_\varepsilon(X,A) = E_\varepsilon(Y,A)\,.
\end{align*}
\end{enumerate}
\end{lem}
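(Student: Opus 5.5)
The plan is to verify the ten items one at a time, each directly from the definitions \eqref{eq:cell-energy-scaled}, \eqref{eq:scaled-energy-local} and assumptions {\rm\ref{H1}}--{\rm\ref{H5}}. All the items except (v), (vi) and (ix) are close to tautological.

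\textbf{Items (i)--(iv), (vii), (viii), (x).}
\begin{itemize}
\item[(i)] Since $E_\varepsilon(X)<+\infty$, every cell energy is finite. Applying {\rm\ref{H1}} to $X/\varepsilon$ gives $\dist(x/\varepsilon, X/\varepsilon\setminus\{x/\varepsilon\})\ge1$, and scaling back yields (i).
\item[(ii)] Rotations and translations commute with division by $\varepsilon$ up to replacing $\tau$ by $\tau/\varepsilon$. Hence {\rm\ref{H5}} gives $E^\varepsilon_{\mathrm{cell}}(Rx+\tau,RX+\tau)=E^\varepsilon_{\mathrm{cell}}(x,X)$. Summing over $x\in X\cap A$, which is in bijection with $(RX+\tau)\cap(RA+\tau)$, proves (ii).
\item[(iii), (iv)] These follow from the nonnegativity in {\rm\ref{H2}} and from additivity of sums over disjoint index sets.
\item[(vii)] We have $E^{\lambda\varepsilon}_{\mathrm{cell}}(\lambda x,\lambda X)=E_{\mathrm{cell}}(x/\varepsilon,X/\varepsilon)$ by definition. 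Multiplying by $(\lambda\varepsilon)^{d-1}$ and summing gives the factor $\lambda^{d-1}$.
\item[(viii)] If $x\in Y$, then $X\cap\overline B_{r_{\mathrm{int}}\varepsilon}(x)=Y\cap\overline B_{r_{\mathrm{int}}\varepsilon}(x)$, because $\dist(Y,Z)>r_{\mathrm{int}}\varepsilon$. So {\rm\ref{H4}} (after scaling) gives $E^\varepsilon_{\mathrm{cell}}(x,X)=E^\varepsilon_{\mathrm{cell}}(x,Y)$, and symmetrically for $x\in Z$. Splitting the sum over $X\cap A=(Y\cap A)\,\dot\cup\,(Z\cap A)$ then yields the claim. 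Note that $E_\varepsilon(Y)$ and $E_\varepsilon(Z)$ are finite by the same identity.
\item[(x)] For $x\in X\cap A=Y\cap A$, the balls $\overline B_{r_{\mathrm{int}}\varepsilon}(x)$ lie in $\overline{(A)_{r_{\mathrm{int}}\varepsilon}}$. Hence {\rm\ref{H4}} gives equality of the cell energies term by term.
\end{itemize}

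\textbf{Counting bounds (v) and (vi).} These are a packing argument. By (i), the balls $B_{\varepsilon/2}(x)$ for $x\in X$ are pairwise disjoint. For $x\in X\cap A$ these balls are contained in $(A)_\varepsilon$, so
\[
\#(X\cap A)\,\omega_d(\varepsilon/2)^d\le\mathcal L^d((A)_\varepsilon),
\]
which is (v) with $C=2^d/\omega_d$. Applying (v) to $A=\overline B_{r\varepsilon}(x)$ gives
\[
\#(X\cap\overline B_{r\varepsilon}(x))\le C\varepsilon^{-d}\,\omega_d\bigl((r+1)\varepsilon\bigr)^d=C'(r+1)^d,
\]
which is (vi).

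\textbf{The one genuinely nontrivial item, (ix).} This is where I expect the only care to be needed. The idea is to split
\[
E_\varepsilon(X)=E_\varepsilon(X,A)+E_\varepsilon(X,A^c),
\]
and similarly for $X\setminus\{x_0\}$. I will then show that the $A^c$ parts coincide, i.e.\ that removing $x_0$ changes cell energies only at points of $A$.

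For $y\in X\cap A^c$ with $|y-x_0|>r_{\mathrm{int}}\varepsilon$, the configurations $X$ and $X\setminus\{x_0\}$ agree on $\overline B_{r_{\mathrm{int}}\varepsilon}(y)$, so {\rm\ref{H4}} gives equal cell energies. The remaining points are those $y$ with $|y-x_0|\le r_{\mathrm{int}}\varepsilon$, i.e.\ $y\in\mathcal N_\varepsilon(x_0)$. These lie in $A$ by hypothesis, so they contribute only to the $A$ part. The point $x_0$ itself also lies in $A$.

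Therefore
\[
E_\varepsilon(X)-E_\varepsilon(X\setminus\{x_0\})=E_\varepsilon(X,A)-E_\varepsilon(X\setminus\{x_0\},A),
\]
and the equivalence follows immediately.

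The only subtlety is finiteness of $E_\varepsilon(X\setminus\{x_0\})$, which is needed for the subtraction to make sense. By {\rm\ref{H1}}, removing a point cannot decrease any distance between the remaining points, so all remaining cell energies stay finite. With finiteness secured, the subtraction is legitimate and the proof is complete.
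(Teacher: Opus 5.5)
Your proposal is correct and follows the paper's proof essentially verbatim. You use the packing argument with the disjoint balls $B_{\varepsilon/2}(x)\subset(A)_\varepsilon$ for (v), and the choice $A=\overline{B}_{r\varepsilon}(x)$ with $(\overline{B}_{r\varepsilon}(x))_\varepsilon=B_{(r+1)\varepsilon}(x)$ for (vi). For (ix) you use the observation that, by \ref{H4}, cell energies at points of $X\setminus A$ are unchanged by removing $x_0$, combined with the additive splitting (iv); the remaining items follow directly from the definitions and \ref{H1}--\ref{H5}, as in the paper. Your extra check that $E_\varepsilon(X\setminus\{x_0\})<+\infty$ is harmless but not needed, since the common $A^c$-part is bounded by $E_\varepsilon(X)<+\infty$.
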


\begin{proof}
All items except {\rm (v)}, {\rm (vi)} and {\rm (ix)} follow directly from \ref{H1}-\ref{H5} and the definition of the configurational energy, see \eqref{eq:cell-energy-scaled} and \eqref{eq:scaled-energy-local}. We now prove {\rm (v)}, {\rm (vi)} and {\rm (ix)}.  \\
\noindent {\it Proof of {\rm (v)}:} We note that by {\rm (i)} it holds $B_\frac{\varepsilon}{2}(x)\cap B_\frac{\varepsilon}{2}(y)=\emptyset$ for $x,y\in X$ $x\neq y$.  Therefore, as $\bigcup_{x \in X\cap A} B_\frac{\varepsilon}{2}(x)\subset(A)_\varepsilon$,  we have
\begin{align*}
\frac{\varepsilon^d}{2^d}\omega_d\#\{ X\cap A\}=\sum_{x\in X\cap A}\mc{L}^d(B_\frac{\varepsilon}{2}(x))=\mc{L}^d\left(\bigcup_{x\in X\cap A}B_\frac{\varepsilon}{2}(x)\right)\leq\mc{L}^d((A)_\varepsilon)\,,
\end{align*}
where $\omega_d$ denotes the volume of the $d$-dimensional unit ball. \\
\noindent {\it Proof of {\rm (vi)}:} This is a consequence of {\rm (v)} by choosing  $A=\overline{B}_{r\varepsilon}(x)$ and noticing $(\overline{B}_{r\varepsilon}(x))_\varepsilon=B_{(r+1)\varepsilon}(x)$ with $\mc{L}^d(B_{(r+1)\varepsilon}(x))=\omega_d\varepsilon^d(r+1)^d$. \\
\noindent {\it Proof of {\rm (ix)}:}  Let $x_0\in X$ and $A \subset \mathbb{R}^d$ be as in the statement, i.e., $\{x_0\}\cup\mc{N}_\varepsilon(x)\subset A$. Using \eqref{def:neighbourhood}, for all $x \in X \cap \mathbb{R}^d \setminus A$ we have  $|x_0-x| > \varepsilon\, r_{\mathrm{int}}$. Therefore, by \ref{H4} we have $E_{\mathrm{cell}}^\varepsilon(x,X)=E_{\mathrm{cell}}^\varepsilon(x,X\setminus\{x_0\})$ for all $x \in  X \cap \mathbb{R}^d \setminus A$. Together with the fact that $x_0 \in A$ this implies 
\begin{align*}
E_\varepsilon(X \setminus \{x_0\},\mathbb{R}^d\setminus A) &= \varepsilon^{d-1} \sum_{x \in (X \setminus \{x_0\}) \cap (\mathbb{R}^d \setminus A)}  E_{\mathrm{cell}}^\varepsilon(x, X\setminus \{x_0\})\\&=\varepsilon^{d-1} \sum_{x \in X \cap (\mathbb{R}^d \setminus A)}  E_{\mathrm{cell}}^\varepsilon(x, X) = E_\varepsilon(X,\mathbb{R}^d \setminus A)\,.
\end{align*}
Now, using {\rm (iv)}, we obtain
\begin{align*}
E_\varepsilon(X \setminus \{x_0\}) = E_\varepsilon(X \setminus \{x_0\}, A) +  E_\varepsilon(X \setminus \{x_0\},\mathbb{R}^d\setminus A) = E_\varepsilon(X \setminus \{x_0\}, A) +  E_\varepsilon(X,\mathbb{R}^d \setminus A)
\end{align*}
and 
\begin{align*}
E_\varepsilon(X) = E_\varepsilon(X, A) +  E_\varepsilon(X,\mathbb{R}^d \setminus A)\,.
\end{align*}
This implies {\rm (ix)} and concludes the proof.
\end{proof}

Next, we will show that our energies are coercive.   To prove the coercivity of our energies, we will quickly recall some properties of the Voronoi cells. In particular, for lattices, such as the ones we are considering, the Voronoi cell is a convex  $d$-dimensional polytope. By the definition of our interpolation, see~\eqref{def:interpolation}, the jump set is therefore concentrated only on some $(d-1)$-dimensional faces of such polytopes. As the Voronoi cells tessellate the space, such a $(d-1)$-dimensional face is defined by two neighboring atoms or, equivalently, it is the intersection of the closure of two neighboring Voronoi cells. This will be a key observation for proof of the next lemma. 
\begin{lem}[Coercivity]\label{lem:coercivity}
Let $E_{\mathrm{cell}}$ satisfy {\rm\ref{H1}}--{\rm \ref{H7}}. Let $X$ be a configuration such that $E_\varepsilon(X)<+\infty$ and $A\subset\R^d$ Borel. Then there is a universal constant $C>0$ such that
\begin{align*}
\mc{H}^{d-1}(J_{u}\cap A)\leq CE_\varepsilon(X,(A)_{2\varepsilon r_{\mathrm{crys}}}),
\end{align*}
where $J_{u}$ is the jumpset of the function $u$ associated to $X$ by \eqref{def:interpolation}.
\end{lem}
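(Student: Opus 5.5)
The plan is to localize the jump set to faces of Voronoi cells of interpolation points and to charge each such face to a nearby atom with positive cell energy. By \eqref{def:interpolation} and Lemma~\ref{lem:Voronoi-relation}, $u$ is constant on each cell $V_x:=V_{\mc{L}_\varepsilon(z(x))}(x)$ with $x\in X_{\varepsilon,\mathrm{int}}$, and $u=\textbf{0}$ on the complement of the union of these cells. Hence, up to an $\mc{H}^{d-1}$-null set, $J_u$ is contained in $\bigcup_{x\in X_{\varepsilon,\mathrm{int}}}\partial V_x$. Each $V_x$ is a convex polytope contained in $\overline{B}_{R_V\varepsilon}(x)$, with a bounded number of faces depending only on $\mc{L}$. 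Therefore $\mc{H}^{d-1}(\partial V_x)\le C\varepsilon^{d-1}$. We write $F$ for a $(d-1)$-face of $V_x$ that meets $J_u$.

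The key step is the following claim. If $F\subset\partial V_x$ meets $J_u$ in positive $\mc{H}^{d-1}$-measure, then some $y\in X\cap\overline{B}_{2r_{\mathrm{crys}}\varepsilon}(x)$ has $E^\varepsilon_{\mathrm{cell}}(y,X)>0$. Since $E^\varepsilon_{\mathrm{cell}}(x,X)=0$, \ref{H3} gives $X\cap\overline{B}_{r_{\mathrm{crys}}\varepsilon}(x)=\mc{L}_\varepsilon(z(x))\cap\overline{B}_{r_{\mathrm{crys}}\varepsilon}(x)$. Because $r_{\mathrm{crys}}\ge 2R_V\ge S_V$, the lattice point $x'$ sharing the face $F$ with $x$ lies in $X$ at distance at most $S_V\varepsilon\le r_{\mathrm{crys}}\varepsilon$. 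There are two cases. If $E^\varepsilon_{\mathrm{cell}}(x',X)>0$, the claim holds with $y=x'$. Otherwise $x'\in X_{\varepsilon,\mathrm{int}}$, and \ref{H7} (rescaled by $\varepsilon$) gives $\mc{L}_\varepsilon(z(x'))=\mc{L}_\varepsilon(z(x))$, i.e.\ $z(x')=z(x)$ as elements of $\mc{Z}$. Then $V_{x'}$ is the lattice Voronoi cell adjacent across $F$, and $u=z(x)$ on both sides of $F$. This contradicts $F$ meeting $J_u$. The main obstacle is precisely this use of \ref{H7}: one must check that the equivalence-class identification in \ref{H7} yields equality of the \emph{lattices}, so that the orientation field does not jump, and that the Voronoi cell of $x'$ in the definition of $u$ really is the adjacent lattice cell. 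The latter follows from $V_X(x')=V_{\mc{L}_\varepsilon(z(x'))}(x')$, as in the proof of Lemma~\ref{lem:Voronoi-relation}.

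With the claim in hand, we count. Let $Y:=\{y\in X\cap(A)_{2\varepsilon r_{\mathrm{crys}}}: E^\varepsilon_{\mathrm{cell}}(y,X)>0\}$. Each face $F$ meeting $J_u\cap A$ belongs to some $x$ with $\mathrm{dist}(x,A)\le R_V\varepsilon$, and the claim assigns it some $y\in Y$ with $|x-y|\le r_{\mathrm{crys}}\varepsilon$. Hence $|y-a|\le (R_V+r_{\mathrm{crys}})\varepsilon\le 2r_{\mathrm{crys}}\varepsilon$ for some $a\in A$, using $R_V\le r_{\mathrm{crys}}$. (If needed, one adjusts which of $x$ or $x'$ serves as the base point so that the enlargement stays within $2r_{\mathrm{crys}}\varepsilon$.) By Lemma~\ref{lem:elementary-properties}(vi), each $y$ is assigned to at most $C(r_{\mathrm{crys}}+1)^d$ points $x$, and hence to boundedly many faces. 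Therefore
\[
\mc{H}^{d-1}(J_u\cap A)\le C\varepsilon^{d-1}\#Y\le \frac{C}{c}\,\varepsilon^{d-1}\sum_{y\in Y}E^\varepsilon_{\mathrm{cell}}(y,X)\le C E_\varepsilon(X,(A)_{2\varepsilon r_{\mathrm{crys}}}),
\]
where the middle inequality uses the coercivity bound \ref{H6}, and the last uses the definition \eqref{eq:scaled-energy-local}.
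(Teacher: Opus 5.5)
Your proof is correct and is essentially the paper's argument. The paper makes the dichotomy per Voronoi cell: either some atom in $\overline{B}_{\varepsilon r_{\mathrm{crys}}}(x)$ has positive energy, which pays for all of $\partial V_x$ via \ref{H6}, or all such atoms have zero energy and \ref{H3}, \ref{H7} force the orientation to agree across every face. You make the same dichotomy per face, using only the neighbour $x'$ across $F$, and count exactly as the paper does via Lemma~\ref{lem:elementary-properties}(vi).

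Two minor points. Your ``main obstacle'' is immediate, since the last sentence of \ref{H7} already gives $(R_x,\tau_x)=(R_{x'},\tau_{x'})$. No adjustment of base points is needed either: because $R_V\le r_{\mathrm{crys}}/2$, you directly get $|y-a|\le\frac32 r_{\mathrm{crys}}\varepsilon<2r_{\mathrm{crys}}\varepsilon$.
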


\begin{proof} Fix a configuration $X$ such that $E_\varepsilon(X) <+\infty$ and  
let $u$ be given by \eqref{def:interpolation}.  As $u \in PC(\mathbb{R}^d;\mathcal{Z})$ we have that $u=\sum_{j=1}^\infty z_j\chi_{G_j}$ with pairwise disjoint $G_j$ and pairwise distinct $z_j$. By the definition of $u$ each $G_j$ consists of a finite union of Voronoi cells $V_{\mc{L}_\varepsilon(z_j)}(x)$, see \eqref{def:interpolation}. Here $z_j=z(x)$ is the associated lattice for $x\in X_{\varepsilon,\mathrm{int}}$, see~\eqref{def:interpolation-points}, and the Voronoi cell corresponds to the lattice, which is defined by $z_j$ and $x\in  X_{\varepsilon,\mathrm{int}}$ is its barycenter. As mentioned above, only $(d-1)$-dimensional faces contribute to the $(d-1)$-dimensional Hausdorff measure of the jump set, as all the other faces are of lower dimension. As any Voronoi cell regardless of its orientation is convex and contained in $\overline{B}_{\varepsilon r_{\mathrm{cryst}}}(x)$, we can bound the perimeter of the set, in particular we can bound the $\mc{H}^{d-1}$-measure of each of its $(d-1)$-faces by the perimeter of the set. Namely, we have 
\begin{align}\label{ineq:perimeter-voronoi-ball}
\mathcal{H}^{d-1}(\partial V_{\mc{L}_\varepsilon(z_j)}(x))\leq \mathcal{H}^{d-1}(\partial B_{\varepsilon r_{\mathrm{cryst}}}(x))=C(d)(\varepsilon r_{\mathrm{crys}})^{d-1}=C(d,\mathcal{L})\varepsilon^{d-1}\,.
\end{align} 
If there exists $y \in \overline{B}_{\varepsilon r_{\mathrm{cryst}}}(x)$ such that $E_{\varepsilon,\mathrm{cell}}(y,X) >0$, then, due to \ref{H6} and \eqref{eq:cell-energy-scaled}, we have $E_{\varepsilon,\mathrm{cell}}(y,X) \geq c$. Now, using \eqref{eq:scaled-energy-local} and \eqref{ineq:perimeter-voronoi-ball}, we obtain
\begin{align*}
\mathcal{H}^{d-1}(J_u \cap V_{\mc{L}_\varepsilon(z_j)}(x)) \leq C(d,\mathcal{L})\varepsilon^{d-1} \leq Cc^{-1} \varepsilon^{d-1} E_{\varepsilon,\mathrm{cell}}(y,X) \leq C E_\varepsilon(X,\overline{B}_{\varepsilon r_{\mathrm{crys}}}(x))\,.
\end{align*}
Due Lemma~\ref{lem:elementary-properties}(vi) we have that $\#(X \cap \overline{B}_{\varepsilon r_{\mathrm{cryst}}}(x_0)) \leq C r_{\mathrm{cryst}}^d$ for all $x_0 \in \mathbb{R}^d$ and thus each such $y$ will be used only for a finite number of $x \in X$. This implies the statement of the Lemma for all points $x$ such that $E_\varepsilon(x,\overline{B}_{\varepsilon r_{\mathrm{cryst}}}(x))>0$. On the other hand, if $E_{\varepsilon,\mathrm{cell}}(y,X)=0$ for all $y \in \overline{B}_{\varepsilon r_{\mathrm{cryst}}}(x)$ then, as $x \in X_{\varepsilon,\mathrm{int}}$,  we have that $X\cap \overline{B}_{\varepsilon r_{\mathrm{cryst}}}(x) = \mathcal{L}_\varepsilon(z(x)) \cap \overline{B}_{\varepsilon r_{\mathrm{cryst}}}(x)$.  Now, for each $(d-1)$-dimensional face $F$ there exists $y \in \mathcal{L}_\varepsilon(z(x))$ such that $|x-z|=|y-z|$ for all $z \in F$. Hence, $|x-y| \leq |x-z| +|y-z| \leq \varepsilon 2R_V \leq \varepsilon r_{\mathrm{crys}}$ and thus $y \in X \cap \overline{B}_{\varepsilon r_{\mathrm{crys}}}$. By \ref{H3}, \ref{H7}  and Definition~\ref{def:orientation}, for $x,y \in X$ such that $|x-y|\leq \varepsilon  r_{\mathrm{crys}}$ we have that $z(x) =z(y)$. This implies that $\mathcal{H}^{d-1}(J_u \cap V_{\mc{L}_\varepsilon(z_j)}(x))=0$ and also in this case the statement of the Lemma follows.
\end{proof}

We need a relation of our interpolation functions with respect to some scaling $\lambda>0$. Then we are able to use a fundamental estimate construction that allows us to pass from $L^1$-convergence of boundary values to converging boundary values in the proof of the lower bound of our $\Gamma$-convergence result.

\begin{lem}\label{lem:scaling}
Let $E_{\mathrm{cell}}$ satisfy {\rm \ref{H1}}--{\rm \ref{H5}}. For $\varepsilon>0$ consider the configurations $X_\varepsilon$ satisfying $E_\varepsilon(X_\varepsilon)<+\infty$ and $\lambda X_\varepsilon$ for $\lambda>0$. By $u^\lambda_{\lambda\varepsilon}$ and $u_\varepsilon$ we denote the functions corresponding to $\lambda X_\varepsilon$ and $X_\varepsilon$, respectively. Then it holds
$$u_{\lambda\varepsilon}^\lambda(\lambda x)=u_\varepsilon(x)\quad\forall x\in\R^d.$$
Moreover, for each bounded set $A\subset\R^d$ we have $u_{\lambda\varepsilon}^\lambda\to u(\lambda^{-1}\cdot)$ in $L^1(\lambda A)$ as $\varepsilon\to0$ if and only if $u_\varepsilon\to u$ in $L^1(A)$ as $\varepsilon\to0$. 
\end{lem}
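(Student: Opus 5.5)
The plan is to prove the pointwise identity first by tracking how each ingredient of \eqref{def:interpolation} transforms under the dilation $x\mapsto\lambda x$. The $L^1$ statement then follows from a change of variables.

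\textbf{Step 1: interpolation points are preserved.} By Lemma~\ref{lem:elementary-properties}(vii) we have $E^{\lambda\varepsilon}_{\mathrm{cell}}(\lambda x,\lambda X_\varepsilon)=E^\varepsilon_{\mathrm{cell}}(x,X_\varepsilon)$ for every $x\in X_\varepsilon$. In particular $E_{\lambda\varepsilon}(\lambda X_\varepsilon)=\lambda^{d-1}E_\varepsilon(X_\varepsilon)<+\infty$, so $u^\lambda_{\lambda\varepsilon}$ is well defined by Lemma~\ref{lem:Voronoi-relation}. Moreover,
\[
(\lambda X_\varepsilon)_{\lambda\varepsilon,\mathrm{int}}=\lambda\,(X_{\varepsilon})_{\varepsilon,\mathrm{int}}.
\]

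\textbf{Step 2: orientations are preserved.} Take $x\in (X_\varepsilon)_{\varepsilon,\mathrm{int}}$. By \ref{H3} applied to the rescaled configuration $X_\varepsilon/\varepsilon$, there is $(R,\tau)$ with
\[
X_\varepsilon\cap\overline B_{\varepsilon r_{\mathrm{crys}}}(x)=\varepsilon R(\mathcal L+\tau)\cap\overline B_{\varepsilon r_{\mathrm{crys}}}(x).
\]
Multiplying by $\lambda$ gives
\[
\lambda X_\varepsilon\cap\overline B_{\lambda\varepsilon r_{\mathrm{crys}}}(\lambda x)=\lambda\varepsilon R(\mathcal L+\tau)\cap\overline B_{\lambda\varepsilon r_{\mathrm{crys}}}(\lambda x),
\]
because $(\lambda X)/(\lambda\varepsilon)=X/\varepsilon$. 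Hence the same $(R,\tau)$ is admissible at scale $\lambda\varepsilon$. Since the class $(R,\tau)\in\mathbb A\times\mathbb T$ is unique (Definition~\ref{def:orientation}), we get $z(\lambda x)=z(x)$, and therefore $\mathcal L_{\lambda\varepsilon}(z(x))=\lambda\mathcal L_\varepsilon(z(x))$.

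\textbf{Step 3: Voronoi cells scale.} By \eqref{eq:rotated-Voronoi}, or directly from \eqref{def:Voronoi-cell} since dilations preserve the comparison $|x-y|\le|z-y|$,
\[
V_{\lambda\mathcal L_\varepsilon(z)}(\lambda x)=\lambda V_{\mathcal L_\varepsilon(z)}(x).
\]
Thus $\lambda y\in V_{\mathcal L_{\lambda\varepsilon}(z(\lambda x))}(\lambda x)$ if and only if $y\in V_{\mathcal L_\varepsilon(z(x))}(x)$. Combining this with Steps 1 and 2, both cases of \eqref{def:interpolation} match: the point $\lambda y$ is covered by the cell of $\lambda x$ exactly when $y$ is covered by the cell of $x$, and it receives the value $\textbf 0$ otherwise. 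This gives $u^\lambda_{\lambda\varepsilon}(\lambda y)=u_\varepsilon(y)$ for all $y$.

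\textbf{Step 4: the $L^1$ equivalence.} Substituting $w=\lambda y$,
\[
\int_{\lambda A}|u^\lambda_{\lambda\varepsilon}(w)-u(\lambda^{-1}w)|\,\mathrm dw=\lambda^d\int_A|u_\varepsilon(y)-u(y)|\,\mathrm dy .
\]
Since $\lambda$ is fixed, one side tends to zero as $\varepsilon\to0$ if and only if the other does.

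None of these steps is a genuine obstacle. The only point needing care is Step 2: the uniqueness of $z(x)$ must be read in the quotient $\mathbb A\times\mathbb T$, so that the scale-independent pair $(R,\tau)$ indeed yields the same element of $\mathcal Z$.
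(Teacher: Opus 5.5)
Your proposal is correct and follows essentially the same route as the paper. The paper likewise uses Lemma~\ref{lem:elementary-properties}(vii) to see that interpolation points are preserved, notes that the crystallization condition of \ref{H3} at $x$ on scale $\varepsilon$ is equivalent to the one at $\lambda x$ on scale $\lambda\varepsilon$ with the same $(R,\tau)$ (so the same value $z$ is assigned on the dilated Voronoi cell), and then concludes the $L^1$ equivalence by the change of variables $y=\lambda x$. Your version simply makes the Voronoi-cell scaling and the quotient uniqueness explicit.
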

\begin{proof}
We notice that by Lemma~\ref{lem:elementary-properties}{\rm (vii)} we have 
$$E_{\mathrm{cell}}^\varepsilon(x,X) =0 \quad \iff \quad E_{\mathrm{cell}}^{\lambda\varepsilon}(\lambda x,\lambda X)=0\,$$
and
\begin{align*}
X \cap \overline{B}_{\varepsilon r_{\mathrm{crys}}}(x) = \varepsilon R(\mathcal{L}+\tau) \cap \overline{B}_{\varepsilon r_{\mathrm{crys}}}(x)  \iff \lambda X \cap \overline{B}_{\varepsilon \lambda r_{\mathrm{crys}}}(\lambda x) = \varepsilon \lambda R(\mathcal{L}+\tau) \cap \overline{B}_{\varepsilon \lambda r_{\mathrm{crys}}}(\lambda x)\,.
\end{align*}
This implies that we interpolate on the Voronoi cell $V_{\mathcal{L}_\varepsilon(z)}(x)$ the value $z$ if and only if we interpolate on the Voronoi cell $V_{\mathcal{L}_{\lambda\varepsilon}(z)}( \lambda x)$ the value $z$.  Thus, by \eqref{def:interpolation} it holds $u_\varepsilon(y)=u_{\lambda\varepsilon}^\lambda(\lambda y)$ for all $y\in\R^d$. Lastly, for every bounded $A\subset\R^d$, performing the change of variables $y=\lambda x$,   there holds 
\begin{align*}
\lambda^d\int_A\vert u_\varepsilon(x)-u(x)\vert\,\mathrm{d}x=\lambda^d\int_A\vert u_{\lambda\varepsilon}^\lambda(\lambda x)-u(x)\vert\,\mathrm{d}x=\int_{\lambda A}\vert u^\lambda_{\lambda\varepsilon}(y)-u(\lambda^{-1}y)\vert\,\mathrm{d} y\,.
\end{align*}
\end{proof}

\section[Cell Formula I]{Cell Formula I: Relation of $L^1$-convergence and converging boundary values}\label{sec:L1-to-converging-data}
In this section, we present a first reduction result for our cell formula. In particular, we will show that the condition of $L^1$-convergence in the definition of $\psi$, see \eqref{eq:cell-formulaL1}, can be replaced by converging boundary values. To this end, we keep $\lambda > 6r_{\mathrm{int}}$ fixed, and we omit it in the notation.  Here, the bound on $\lambda$ ensures that the upper and lower boundary regions already contain a large connected portion of the lattice. Precisely we introduce the function $\Phi:\mc{Z}\times\mc{Z}\times\mathbb{S}^{d-1}\to[0,+\infty)$ defined as
\begin{align}\label{def:Phi}
\begin{split}
\Phi(z^+,z^-,\nu):=\min\big\{\liminf_{\varepsilon\to0}\inf\big\{&E_\varepsilon(X_\varepsilon,Q^\nu(y_\varepsilon))\mid  y_\varepsilon\in\R^d\,,\\&X_\varepsilon \in \mathrm{Adm}_{\varepsilon,\lambda}^{(z_\varepsilon^+,z_\varepsilon^-)}(Q^\nu(y_\varepsilon))\big\}\mid\{z_\varepsilon^\pm\}_\varepsilon\subset\mc{Z} \text{ with }z^\pm_\varepsilon\to z^\pm\big\}\,.
\end{split}
\end{align}
Here, the admissible set is defined in \eqref{def:admissible-set}. By a standard diagonal argument, one can see that the minimum in \eqref{def:Phi} is attained. This definition shows us that near the boundary of the cube our configuration is contained in at most two different lattices $\mc{L}_\varepsilon(z_\varepsilon^\pm)$ (and it is contained in only one if $z^\pm=0$). The goal of this section is to now show the following lemma:
\begin{lem}[Relation between $\psi$ and $\Phi$]\label{lem:psi-phi}
Let $E_{\mathrm{cell}}$ satisfy {\rm \ref{H1}}--{\rm \ref{H7}}. Let $z^+,z^-\in\mc{Z}$ and $\nu\in\mathbb{S}^{d-1}$. Then
\begin{equation}\label{lemeq:psi-phi}
\psi(z^+,z^-,\nu)\geq\Phi(z^+,z^-,\nu)\,.
\end{equation}
\end{lem}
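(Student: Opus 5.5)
Given a sequence $\{X_\varepsilon\}$ and $\{y_\varepsilon\}$ admissible for $\psi$ in \eqref{eq:cell-formulaL1} with $\liminf_\varepsilon E_\varepsilon(X_\varepsilon,Q^\nu(y_\varepsilon))$ finite, I will build modified configurations $\tilde X_\varepsilon\in\mathrm{Adm}^{(z_\varepsilon^+,z_\varepsilon^-)}_{\varepsilon,\lambda}(Q^\nu(\tilde y_\varepsilon))$ with $z_\varepsilon^\pm\to z^\pm$. Their energy in a slightly smaller cube should be at most the original energy plus an error that vanishes as $\varepsilon\to0$ and then as a shrinking parameter $\eta\to0$. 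Rescaling by Lemma~\ref{lem:elementary-properties}(vii), with cubes of side $1-\eta$ in place of $Q^\nu$, and using $\eta\to0$ with a diagonal argument, then gives $\psi\ge\Phi$. By translation invariance, Lemma~\ref{lem:elementary-properties}(ii), I take $y_\varepsilon=0$ and pass to a subsequence realizing the liminf.

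\textbf{Step 1 (choice of a good layer).} Fix $\eta>0$ and split the annulus $Q^\nu\setminus Q^\nu_{1-\eta}$ into $M\sim\eta/(\lambda\varepsilon)$ disjoint discrete layers $\partial_{\lambda\varepsilon}Q^\nu_{\rho_k}$ with $\rho_k\in(1-\eta,1)$. By averaging, one layer $k$ satisfies
\[
E_\varepsilon\big(X_\varepsilon,(\partial_{\lambda\varepsilon}Q^\nu_{\rho_k})_{r_{\mathrm{int}}\varepsilon}\big)\le C\frac{\varepsilon}{\eta}\,E_\varepsilon(X_\varepsilon,Q^\nu)\to0 ,
\]
\[
\int_{\partial_{2\lambda\varepsilon}Q^\nu_{\rho_k}}|u_\varepsilon-u^\nu_{z^+,z^-}|\,\mathrm dx\le C\frac{\varepsilon}{\eta}\,\|u_\varepsilon-u^\nu_{z^+,z^-}\|_{L^1(Q^\nu)}.
\]
The layer has volume of order $\varepsilon$. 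The first bound shows that the number of non-crystallized atoms near the layer is $o(\varepsilon^{1-d})$, by \ref{H6} and the scaling $\varepsilon^{d-1}$. Lemma~\ref{lem:coercivity} then shows that $\mathcal H^{d-1}(J_{u_\varepsilon})$ near the layer is $o(1)$.

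\textbf{Step 2 (identifying a single lattice on each side).} In the part of the layer with $\langle x,\nu\rangle\ge r_{\mathrm{int}}\varepsilon$, the $L^1$ bound says that $u_\varepsilon$ is close to $z^+$ on most of the layer. Since the jump set there has small measure, a relative isoperimetric argument on the thin slab shows that on all but an $o(1)$ fraction of the layer, $u_\varepsilon$ equals a single value $z_\varepsilon^+$. This uses \ref{H7}: neighbouring crystallized atoms share the same $z$, so the crystallized region is a union of patches of constant orientation. The $L^1$ closeness then forces $z_\varepsilon^+\to z^+$ in $\mathcal Z$, and similarly for $z_\varepsilon^-$. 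If $z^+=\mathbf 0$, then the layer is mostly vacuum.

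\textbf{Step 3 (cut-off and gluing).} Define $\tilde X_\varepsilon$ as $X_\varepsilon$ inside $Q^\nu_{\rho_k-\lambda\varepsilon}$, as $\mathcal L_\varepsilon(z^\pm_\varepsilon)$ on $\partial^\pm_{\lambda\varepsilon}Q^\nu_{\rho_k}$, and as empty on $\partial^c_{\lambda\varepsilon}Q^\nu_{\rho_k}$ and outside. Energy only changes for atoms within $r_{\mathrm{int}}\varepsilon$ of the seams, by Lemma~\ref{lem:elementary-properties}(x). Where $X_\varepsilon$ already coincided with $\mathcal L_\varepsilon(z^\pm_\varepsilon)$ (or with vacuum), nothing changes. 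The remaining atoms number $o(\varepsilon^{1-d})$ by Step 2, plus $O(\varepsilon\cdot\varepsilon^{-d})$ atoms near the strip $\partial^c$, which has width $\sim r_{\mathrm{int}}\varepsilon$ in the $\nu$ direction. Each of these costs at most $C$ by \ref{H2}, for a total of $o(1)+O(\varepsilon)$. The strip term is acceptable because only $(d-2)$-dimensional seams are involved, giving $\varepsilon^{d-1}\cdot\varepsilon^{2-d}\cdot\varepsilon^{-1}\cdot\varepsilon=O(\varepsilon)$.

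\textbf{Main obstacle: admissibility under \ref{H1}.} The cut-off must not create two atoms at distance less than $\varepsilon$; otherwise the energy becomes $+\infty$. My plan is to delete, rather than insert, atoms of $X_\varepsilon$ that are too close to the imposed lattice points in the seam. This is why the seam is placed where $X_\varepsilon$ already equals $\mathcal L_\varepsilon(z^\pm_\varepsilon)$ on $(\partial_{\lambda\varepsilon}Q^\nu_{\rho_k})_{r_{\mathrm{int}}\varepsilon}$, except on the $o(1)$-fraction bad set. On that bad set I will also empty the surrounding $r_{\mathrm{int}}\varepsilon$-neighbourhood. Removing atoms preserves the minimal distance condition, and the added energy is again controlled by the count of affected atoms times $C$. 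Making this count rigorous, especially near the $\partial^c$ strip where the two lattices and vacuum meet, is the step I expect to require the most care.
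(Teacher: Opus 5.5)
There is a genuine gap in Step~2. Averaging over $M\sim\eta/(\lambda\varepsilon)$ layers only gives $E_\varepsilon(X_\varepsilon,\text{layer})\le C\varepsilon/\eta$, and hence, by Lemma~\ref{lem:coercivity}, $\mathcal{H}^{d-1}(J_{u_\varepsilon}\cap\text{layer})\le C\varepsilon/\eta$. This is $o(1)$, but it is not small relative to the layer, whose thickness is of order $\varepsilon$. A grain boundary crossing the layer transversally along a $(d-2)$-dimensional set of size $O(1)$ also has $\mathcal{H}^{d-1}$-measure and energy of order $\varepsilon$. So the relative isoperimetric inequality on the thin slab gives no smallness of the minority volume fraction; it only bounds it by $O(1/\eta)$. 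The $L^1$ bound does not help either: it says the values of $u_\varepsilon$ are close to $z^+$, not that they coincide.

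Here is a concrete counterexample. Let the upper half of $X_\varepsilon$ be $\mathcal{L}_\varepsilon(z_{\varepsilon,1})$ on $\{\langle x,R_\nu e_1\rangle<0\}$ and $\mathcal{L}_\varepsilon(z_{\varepsilon,2})$ on $\{\langle x,R_\nu e_1\rangle>0\}$, separated by thin empty strips. Take $z_{\varepsilon,1}\neq z_{\varepsilon,2}$, both converging to $z^+$, and put $\mathcal{L}_\varepsilon(z^-)$ in the lower half. This sequence has bounded energy and $u_\varepsilon\to u^\nu_{z^+,z^-}$ in $L^1$, so it is admissible for $\psi$. Yet every one of your layers meets both grains in sets of comparable volume. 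Imposing a single $\mathcal{L}_\varepsilon(z^+_\varepsilon)$ on $\partial^+_{\lambda\varepsilon}Q^\nu_{\rho_k}$ then creates a new grain boundary of area of order one, which costs energy of order one, not $o(1)$. Since you start from an arbitrary admissible sequence, your claimed bound ``original energy plus a vanishing error'' fails.

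The missing idea is to use optimality. The paper takes a sequence realizing the infimum in $\psi$ (by a diagonal argument) and applies Lemma~\ref{lem:density-rec}. The same sequence is admissible for the cell formula on $R^\nu_{1,\delta}$, which has the same value $\psi$, so $E_\varepsilon(X_\varepsilon,Q^\nu\setminus R^\nu_{1,\delta})\to0$ for every $\delta>0$. This rules out extra grain boundaries away from the interface.

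With that in hand, the paper proceeds as follows.
\begin{itemize}
\item The relative isoperimetric inequality is applied in the bulk half-rectangles $\mathrm{R}^\pm_{\delta,\varepsilon}$, not in a thin slab. This yields a dominant orientation $z^\pm_\varepsilon\to z^\pm$ up to a set of vanishing volume, hence only $o(\varepsilon^{-d})$ mismatched atoms.
\item Averaging over layers inside $\mathrm{R}^+_{\delta,\varepsilon}$ then produces a layer with only $o(\varepsilon^{1-d})$ mismatched atoms.
\item Energy need not vanish inside $R^\nu_{1,\delta}$, so the paper does not cut through the interface region near $\partial Q^\nu$. It simply empties $R^\nu_{1,\delta}$ there, paying $O(\delta)$ by \ref{H2} and a counting argument, and sends $\delta\to0$ at the end.
\end{itemize}
Once these ingredients are in place, your gluing in Step~3 works essentially as in the paper. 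Keeping or deleting atoms so that only points compatible with the imposed lattice remain near the seam preserves \ref{H1}.
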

This provides the first step towards proving Proposition~\ref{prop:relationpsiphi}. In Section~\ref{sec:converging-fixed-bdryvalues}, we will show $\Phi(z^+,z^-,\nu)=\varphi(z^+,z^-,\nu)$ for all $z^+,z^-\in\mc{Z}$ and $\nu\in\mathbb{S}^{d-1}$, see Lemma~\ref{lem:Phi=varphi}. The proof of Lemma~\ref{lem:psi-phi} relies on a cut-off argument, allowing us to construct configurations attaining the appropriate boundary values. This is a customary practice in the analysis of cell formulas, in contrast to problems defined on Sobolev spaces or partitions, where this is usually done via a convex combination of functions. Here, our discrete problem is more delicate as it is very sensitive to small changes in the configuration by our cell energy, see \ref{H1}.

One of the key observations in the proof is the fact that the energy of optimal sequences in \eqref{eq:cell-formulaL1} is concentrated asymptotically arbitrarily close to the interface. In order to prove this, we need the following preliminary step, which shows us that in the definition of $\psi$ we can replace cubes by rectangles. Recall \eqref{eq:purejump} as well as \eqref{def:admissible-set}.
\begin{lem}[Density $\psi$ on rectangles]\label{lem:density-rec}
Let $E_{\mathrm{cell}}$ satisfy {\rm \ref{H1}}--{\rm \ref{H5}}. For all $z^+,z^-\in\mc{Z}, \nu\in\mathbb{S}^{d-1}$ and all $l,h>0$ there holds
\begin{align}\label{lemeq:density-rec}
\begin{split}
\psi(z^+,z^-,\nu)=\inf\big\{&\liminf_{\varepsilon\to0}\frac{1}{l^{d-1}}E_\varepsilon(X_\varepsilon,R_{l,h}^\nu(y_\varepsilon))\mid y_\varepsilon\in\R^d,\\
&\lim_{\varepsilon\to0}\int_{R_{l,h}^\nu}\vert u_\varepsilon(x+y_\varepsilon)-u_{z^+,z^-}^\nu(x)\vert\,\mathrm{d}x=0\big\}\,. 
\end{split}
\end{align}
\end{lem}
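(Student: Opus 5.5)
Denote by $\psi_{l,h}(z^+,z^-,\nu)$ the right-hand side of \eqref{lemeq:density-rec}. The cube case is $\psi_{1,1}=\psi$. I will prove the lemma in three steps: show that $\psi_{l,h}$ is independent of the height $h$, then of the length $l$, and finally set $l=h=1$. The main tools are the scaling of Lemma~\ref{lem:elementary-properties}(vii) together with Lemma~\ref{lem:scaling}, monotonicity and additivity of the energy in the set (Lemma~\ref{lem:elementary-properties}(iii),(iv)), and a tiling of a large rectangle by copies of a small one.

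\textbf{Scaling in $l$ and $h$ simultaneously.} For $\lambda>0$ set $\tilde X_{\lambda\varepsilon}=\lambda X_\varepsilon$. Lemma~\ref{lem:elementary-properties}(vii) gives $E_{\lambda\varepsilon}(\lambda X_\varepsilon,R^\nu_{\lambda l,\lambda h}(\lambda y_\varepsilon))=\lambda^{d-1}E_\varepsilon(X_\varepsilon,R^\nu_{l,h}(y_\varepsilon))$. Lemma~\ref{lem:scaling}, combined with the invariance $u^\nu_{z^+,z^-}(\lambda\cdot)=u^\nu_{z^+,z^-}$, shows that $L^1$-convergence on $R^\nu_{l,h}$ is equivalent to $L^1$-convergence on $R^\nu_{\lambda l,\lambda h}$. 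Since $\lambda\varepsilon\to0$ is again an arbitrary null sequence, we get $\psi_{\lambda l,\lambda h}=\psi_{l,h}$.

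\textbf{Independence of $h$.} Fix $l$ and compare $h_1<h_2$. One inequality, $\psi_{l,h_1}\le\psi_{l,h_2}$, is immediate: the rectangle $R^\nu_{l,h_1}(y_\varepsilon)\subset R^\nu_{l,h_2}(y_\varepsilon)$, energy is monotone, and $L^1$-convergence on the larger rectangle implies it on the smaller one. The reverse inequality is the main obstacle, because a competitor on $R^\nu_{l,h_1}$ says nothing about the configuration in the strip $h_1/2\le|\langle x,\nu\rangle|<h_2/2$. I would handle it by scaling. Take a competitor on $R^\nu_{l',h_1}$ with $l'=lh_1/h_2$, which by the scaling step is equivalent to working on $R^\nu_{l,h_2}$. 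Next, tile $R^\nu_{l,h_1}$ by $N^{d-1}$ translated copies of $R^\nu_{l/N,h_1}$, placed along $H^\nu$. Since $L^1$-convergence on the union passes to each piece, and the energies add by Lemma~\ref{lem:elementary-properties}(iv), for every $\varepsilon$ some piece carries normalized energy at most the average. Selecting that piece (its center becomes the new $y_\varepsilon$) gives $\psi_{l/N,h_1}\le\psi_{l,h_1}$; boundary layers of atoms are harmless because the energy is a sum over points in half-open sets. Rescaling by $N$ then yields $\psi_{l,Nh_1}\le\psi_{l,h_1}$. Choosing $N$ with $Nh_1\ge h_2$ and applying the monotonicity in $h$ gives $\psi_{l,h_2}\le\psi_{l,Nh_1}\le\psi_{l,h_1}$. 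Together with the easy inequality, $\psi_{l,h}$ does not depend on $h$.

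\textbf{Independence of $l$ and conclusion.} Once $\psi_{l,h}$ is independent of $h$, scaling gives $\psi_{l,h}=\psi_{\lambda l,\lambda h}=\psi_{\lambda l,h}$ for every $\lambda>0$, so $\psi_{l,h}$ is independent of $l$ as well. Therefore $\psi_{l,h}=\psi_{1,1}=\psi$ for all $l,h>0$, which is \eqref{lemeq:density-rec}.

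One technical point remains. The infimum is taken over liminf's, and in the selection argument the chosen piece depends on $\varepsilon$. This is allowed, since $y_\varepsilon$ is arbitrary in the definition of $\psi$. I would first pass to a subsequence realizing the liminf and then select pieces along that subsequence, with small $\delta$-errors in the infimum handled by a diagonal argument.
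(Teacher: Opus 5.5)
Your argument is correct and is the standard proof that the paper omits, deferring to \cite[Lemma 4.2]{FriedrichKreutzSchmidt}: scale invariance $\psi_{\lambda l,\lambda h}=\psi_{l,h}$ via Lemma~\ref{lem:elementary-properties}(vii) and Lemma~\ref{lem:scaling}, monotonicity in $h$, and selection of the cheapest of $N^{d-1}$ sub-rectangles centred on $H^\nu$, where $u^\nu_{z^+,z^-}$ is translation invariant. You can drop the stray sentence about $l'=lh_1/h_2$ and the closing remark on subsequences and diagonal arguments, because the selected piece satisfies the normalized energy inequality for every $\varepsilon$ and its $L^1$ error is bounded by the total one.
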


\begin{proof}
Using the respective properties of the energy functionals $E_\varepsilon$, this proof is entirely analogous to \cite[Lemma 4.2]{FriedrichKreutzSchmidt}. 
\end{proof}

We will now prove Lemma~\ref{lem:psi-phi}.

\begin{proof}[Proof of Lemma~\ref{lem:psi-phi}] Considering \eqref{eq:cell-formulaL1}, we can choose a subsequence in $\varepsilon$ (not relabeled) and configurations $X_\varepsilon\subset\R^d$ and $y_\varepsilon\in\R^d$, such that 
\begin{align*}
\lim_{\varepsilon\to0}\int_{Q^\nu}\vert u_\varepsilon(x+y_\varepsilon)-u^\nu_{z^+,z^-}(x)\vert\,dx=0
\end{align*}
and
\begin{equation}\label{eq:optimal-sequence-psi}
\psi(z^+,z^-,\nu)=\lim_{\varepsilon\to0}E_\varepsilon(X_\varepsilon, Q^\nu(y_\varepsilon))\,.
\end{equation}
The following proof will follow the proof of \cite[Lemma 4.1]{FriedrichKreutzSchmidt} with the adaptations necessary in our case. This will be done in several steps by a refined cut-off construction while taking into consideration our rigid potentials. In Step 1, we show that the energy $X_\varepsilon$ is concentrated around a strip close to the limiting interface. Step 2 will allow us to select a dominant component in the upper and lower half cube, where {\it component} refers to the points subset of a specific lattice. The following steps will modify our configuration $X_\varepsilon$ such that it coincides with the dominant component near the boundary of the respective half cube. In Step~3, we show that our configuration essentially agrees, up to   $o(\varepsilon^{-d})$ atoms, with a lattice $\mathcal{L}_\varepsilon(z_\varepsilon^+)$ on the upper half cube. In Step~4, we select a good layer by an averaging argument in which we modify our configuration. The term {\it good} refers to the configuration coinciding with the lattice $\mathcal{L}_\varepsilon(z_\varepsilon^+)$ in that layer up to $o(\varepsilon^{-(d-1)})$ atoms. This allows us to perform a geometric cut-off by interpolating between the lattice and the configuration. Step~5 shows that the previously constructed configuration is an asymptotic lower bound of the energy for the original configuration. Step~6 concludes that the constructed configuration is a competitor in the definition of $\Phi$. Steps~3-6 are done under the assumption that the majority phases in the upper and lower half-cube are determined by lattices. For the case of vacuum, we describe the necessary adaptations in Step~7. \\
\noindent {\bf Step 1:} {\it The energy concentrates near the line $H^\nu(y_\varepsilon)=\{\langle\nu,(x-y_\varepsilon)\rangle=0\}$}.
We want to show that for all $\delta\in(0,1)$ it holds
\begin{equation}\label{eq:Step1-concentration}
\lim_{\varepsilon\to0}E_\varepsilon(X_\varepsilon,Q^\nu(y_\varepsilon)\setminus R^\nu_{1,\delta}(y_\varepsilon))=0\,.
\end{equation}
Using Lemma~\ref{lem:elementary-properties}(iii), Lemma \ref{lem:density-rec}, and \eqref{eq:optimal-sequence-psi} as well as the fact that $\{X_\varepsilon\}_\varepsilon$ is admissible in the definition of $\psi$ on $R^\nu_{1,\delta}$, see \eqref{lemeq:density-rec}, we obtain
\begin{align*}
\psi(z^+,z^-,\nu)&\leq\liminf_{\varepsilon\to0}E_\varepsilon(X_\varepsilon,R^\nu_{1,\delta}(y_\varepsilon))\\
&\leq\lim_{\varepsilon\to0}E_\varepsilon(X_\varepsilon,Q^\nu(y_\varepsilon))=\psi(z^+,z^-,\nu)\,.
\end{align*}
Using Lemma~\ref{lem:elementary-properties}(iv) implies
\begin{align*}
0&\leq\limsup_{\varepsilon\to0}E_\varepsilon(X_\varepsilon,Q^\nu(y_\varepsilon)\setminus R^\nu_{1,\delta}(y_\varepsilon))\\
&=\limsup_{\varepsilon\to0}\left(E_\varepsilon(X_\varepsilon,Q^\nu(y_\varepsilon))-E_\varepsilon(X_\varepsilon,R^\nu_{1,\delta}(y_\varepsilon))\right)\\
&\leq\lim_{\varepsilon\to0}E_\varepsilon(X_\varepsilon,Q^\nu(y_\varepsilon))-\liminf_{\varepsilon\to0}E_\varepsilon(X_\varepsilon,R^\nu_{1,\delta}(y_\varepsilon))=0\,.
\end{align*}
This shows \eqref{eq:Step1-concentration} and thus concludes Step 1.
In order to shorten the notation for the rest of the proof, we omit the dependence on the center $y_\varepsilon$ and write $Q^\nu_\rho$ instead of $Q^\nu_\rho(y_\varepsilon)$ for $\rho>0$ and $R^\nu_{1,\delta}$ instead of $R^\nu_{1,\delta}(y_\varepsilon)$. In the following, we fix $\delta\in(0,1)$ small enough, and we suppose without loss of generality that $\varepsilon\ll\delta$ as we consider the limit as $\varepsilon\to0$. Also, omitting the center, we define the rectangles $\mathrm{R}^\pm_{\delta,\varepsilon}=Q^{\nu,\pm}_{1-6\varepsilon r_{\mathrm{int}}}\setminus R^\nu_{1-6\varepsilon r_{\mathrm{int}},\delta}$, where $Q^{\nu,\pm}_{r}$ is defined in Subsection~\ref{subsec:cubes}. We only describe the construction in $Q^{\nu,+}$ as the construction in $Q^{\nu,-}$ is analogous. \\
\noindent {\bf Step 2:} {\it Single dominant component in the upper and lower half.} We prove the existence of a sequence $\{z_\varepsilon^\pm\}_\varepsilon\subset\mc{Z}$ such that $z_\varepsilon^\pm\to z^\pm$ as $\varepsilon \to 0$ and
\begin{equation}\label{ineq:measure-energy-Step2}
\mc{L}^d\left(\{u_\varepsilon\neq z_\varepsilon^\pm\}\cap \mathrm{R}^\pm_{\delta,\varepsilon}\right)\leq CE_\varepsilon\left(X_\varepsilon,Q^\nu\setminus R^\nu_{1,\frac{\delta}{2}}\right)\,,
\end{equation}
where $C>0$ is a universal constant independent of $\varepsilon$.
By \eqref{eq:representationu} and \eqref{eq:finitesurfaceu}  we can write $u_\varepsilon=\sum^\infty_{j=1}\chi_{G_j^\varepsilon}z^\varepsilon_j$ for pairwise distinct $\{z_j^\varepsilon\}_j\subset\mc{Z}\setminus\{{\bf 0}\}$ and pairwise disjoint $\{G_j^\varepsilon\}_j\subset\R^d$. By Lemma~\ref{lem:coercivity} and Lemma~\ref{lem:elementary-properties}(iii) we get
\begin{align}\label{ineq:perimeter-energy}
\sum^\infty_{j=1}\mc{H}^{d-1}(\p^*G_j^\varepsilon\cap \mathrm{R}_{\delta,\varepsilon}^+)\leq CE_\varepsilon\left(X_\varepsilon,(\mathrm{R}^+_{\delta,\varepsilon})_{2\varepsilon r_{\mathrm{crys}}}\right)\leq C E_\varepsilon\left(X_\varepsilon,Q^\nu\setminus R^\nu_{1,\frac{\delta}{2}}\right)\,,
\end{align}
where in the last inequality we used $(\mathrm{R}_{\delta,\varepsilon}^+)_{2\varepsilon r_{\mathrm{crys}}}\subset Q^\nu\setminus R^\nu_{1,\frac{\delta}{2}}$ for $\varepsilon\ll\delta$ small enough. We also define the vacuum inside $Q^\nu$ by $G_0^\varepsilon:=Q^\nu\setminus\bigcup_{j=1}^\infty G_j^\varepsilon$. By the relative isoperimetric inequality (see \cite[Theorem 2, Section 5.6.2]{EvansGariepy92}\footnote{The theorem there is only stated and proved in a ball, but the argument uses Poincaré inequalities and therefore extends to the rectangles $\mathrm{R}^+_{\delta,\varepsilon}$. As the ratio of all sides of $\mathrm{R}^+_{\delta,\varepsilon}$ can be controlled uniformly independently of $\varepsilon$ and $\delta$, the constant is independent of $\delta$ and $\varepsilon$.}) there exists a $c>0$ such that for all $j\in\N_0$ it holds
\begin{align}\label{ineq:isoperimetric-ineq}
\begin{split}
\min\big\{\mc{L}^d(G_j^\varepsilon\cap \mathrm{R}^+_{\delta,\varepsilon}),\mc{L}^d(\mathrm{R}^+_{\delta,\varepsilon}\setminus G_j^\varepsilon)\big\}
&\leq\min\big\{\mc{L}^d(G_j^\varepsilon\cap \mathrm{R}^+_{\delta,\varepsilon}),\mc{L}^d(\mathrm{R}^+_{\delta,\varepsilon}\setminus G_j^\varepsilon)\big\}^\frac{d-1}{d}\mc{L}^d(\mathrm{R}^+_{\delta,\varepsilon})^\frac1d\\
&\leq c\mc{H}^{d-1}(\p^*\!G_j^\varepsilon\cap \mathrm{R}^+_{\delta,\varepsilon})\,,
\end{split}
\end{align}
where we used $\mc{L}^d(\mathrm{R}^+_{\delta,\varepsilon})\leq1$. Thus, from \eqref{ineq:perimeter-energy}, \eqref{ineq:isoperimetric-ineq}, and $\p^*\!G_0^\varepsilon\cap \mathrm{R}^+_{\delta,\varepsilon}\subset\bigcup_{j=1}^\infty(\p^*\!G_j^\varepsilon\cap \mathrm{R}^+_{\delta,\varepsilon})$ it follows that
\begin{equation}\label{ineq:volume-energy}
\sum^\infty_{j=0}\min\left\{\mc{L}^d(G_j^\varepsilon\cap \mathrm{R}^+_{\delta,\varepsilon}),\mc{L}^d(\mathrm{R}^+_{\delta,\varepsilon}\setminus G_j^\varepsilon)\right\}\leq C E_\varepsilon\left(X_\varepsilon,Q^\nu\setminus R^\nu_{1,\frac{\delta}{2}}\right)\,.
\end{equation}
This implies the existence of a unique dominant component, i.e., the existence of $j_\varepsilon\in\N_0$ such that
\begin{equation}\label{ineq:dominant-comp-measure}
\mc{L}^d(G_{j_\varepsilon}^\varepsilon\cap \mathrm{R}^+_{\delta,\varepsilon})>\frac12\mc{L}^d(\mathrm{R}^+_{\delta,\varepsilon})\geq \frac{1}{4}\,,
\end{equation}
which can be seen as follows. Assume that this is not the case. Then for all $j\in\N_0$ we have
\begin{align*}
\min\left\{\mc{L}^d(G_j^\varepsilon\cap \mathrm{R}^+_{\delta,\varepsilon}),\mc{L}^d(\mathrm{R}^+_{\delta,\varepsilon}\setminus G_j^\varepsilon)\right\}=
\mc{L}^d(G_j^\varepsilon\cap \mathrm{R}^+_{\delta,\varepsilon})\,.
\end{align*}
Using \eqref{ineq:volume-energy}, for $\delta$ and $\varepsilon$ small enough, this implies  
\begin{align*}
\frac{1}{4}\leq \mc{L}^d(\mathrm{R}^+_{\delta,\varepsilon})=\sum^\infty_{j=0}\mc{L}^d(G_j^\varepsilon\cap \mathrm{R}^+_{\delta,\varepsilon})\leq C E_\varepsilon\left(X_\varepsilon,Q^\nu\setminus R^\nu_{1,\frac{\delta}{2}}\right)\,.
\end{align*}
This is a contradiction to Step 1 (applied with $\delta/2$), which shows that there exists $j_\varepsilon \in \mathbb{N}_0$ such that \eqref{ineq:dominant-comp-measure} is true. Now, using \eqref{ineq:volume-energy} and \eqref{ineq:dominant-comp-measure}, this implies \eqref{ineq:measure-energy-Step2} with the choice $z^+_\varepsilon=z^\varepsilon_{j_\varepsilon}$.  To conclude this step, we note that, by using 
\begin{align*}
\lim_{\varepsilon\to0}\int_{Q^\nu}\vert u_\varepsilon(x+y_\varepsilon)-u^\nu_{z^+,z^-}(x)\vert\,\mathrm{d} x=0
\end{align*}
and \eqref{ineq:dominant-comp-measure}, necessarily $z_\varepsilon^+\to z^+$ as $\varepsilon \to 0$. The rest of the proof will be divided into the two cases: (a) $z^+_\varepsilon\neq\textbf{0}$ and (b) $z^+_\varepsilon=\textbf{0}$, that is, $X_\varepsilon$ converges to the lattice with orientation $z^+$ in the upper half of the cube or it converges to vacuum in the upper half of the cube. We will deal with case (a) now and at the end of the proof indicate the necessary changes to treat (b). \\
\noindent {\bf Step 3:} {\it Cardinality estimate.} We want to prove that there exists $C>0$ such that
\begin{equation}\label{ineq:card-estimate-Step3}
\varepsilon^d\#((\mc{L}_\varepsilon(z^\pm_\varepsilon)\triangle X_\varepsilon)\cap \mathrm{R}^+_{\delta,\varepsilon})\leq C E_\varepsilon\left(X_\varepsilon,Q^\nu\setminus R^\nu_{1,\frac{\delta}{2}}\right)\,.
\end{equation}
First, let $x\in(\mc{L}_\varepsilon(z^+_\varepsilon)\setminus X_\varepsilon)\cap \mathrm{R}^+_{\delta,\varepsilon}$. Then by definition of $u_\varepsilon$ in \eqref{def:interpolation} we have
\begin{align*}
u_\varepsilon(y)\neq z_\varepsilon^+\quad \text{for all } y\in B_{r_V \varepsilon}(x)\,,
\end{align*}
otherwise there exist $y\in B_{r_V\varepsilon}(x)$ and $x'\in X_\varepsilon$ such that $y \in V_{\mathcal{L}_\varepsilon(z_\varepsilon^+)}(x')$, in particular $|y-x'| \leq R_V \varepsilon$, and $\overline{B}_{\varepsilon r_{\mathrm{crys}} }(x')\cap X_\varepsilon=\overline{B}_{\varepsilon r_{\mathrm{crys}} }(x')\cap\mc{L}_\varepsilon(z_\varepsilon^+)$. Thus, $x,x'\in\mc{L}_\varepsilon(z^+_\varepsilon)$, which together with $\vert x-x'\vert \leq \vert x-y\vert+\vert y- x'\vert\leq 2R_V \varepsilon \leq   r_{\mathrm{crys}} \varepsilon$ (for the last inequality recall \ref{H3}) implies $x\in X_\varepsilon$, which is a contradiction.
On the other hand if $x\in(X_\varepsilon\setminus\mc{L}_\varepsilon(z_\varepsilon^+))\cap  \mathrm{R}^+_{\delta,\varepsilon}$, then there exists a $x_0\in\mc{L}_\varepsilon(z_\varepsilon^+)\cap \mathrm{R}^+_{\delta,\varepsilon}$ with $\vert x-x_0\vert \leq R_V\varepsilon$. This in particular implies $X\cap \overline{B}_{r_{\mathrm{crys}}\varepsilon}(x_0)\neq\mc{L}_\varepsilon(z^+_\varepsilon)\cap \overline{B}_{r_{\mathrm{crys}}\varepsilon}(x_0)$ and thus by \eqref{def:interpolation} together with \ref{H3} we have
\begin{align*}
u_\varepsilon(y)\neq z_\varepsilon^+\quad \text{for all } y\in B_{r_V \varepsilon}(x_0)\,.
\end{align*}
We note that such a $x_0$ can only be chosen for at most finitely many $x\in X_\varepsilon$ independent of $\varepsilon$ as $\#(X_\varepsilon\cap \overline{B}_{R_V\varepsilon}(x_0))\leq C$ by finite energy and Lemma~\ref{lem:elementary-properties}(vi).
As $\mc{L}^d(B_{r_V \varepsilon}(x)\cap  \mathrm{R}^+_{\delta,\varepsilon})\geq c\varepsilon^d$ for all $x\in\mc{L}_\varepsilon(z^+_\varepsilon)\cap  \mathrm{R}^+_{\delta,\varepsilon}$ we conclude by the last two identities for $u_\varepsilon$ that together with Step 2 (applied with the set $\left( \mathrm{R}^+_{\delta,\varepsilon}\right)_{r_{\mathrm{crys}}\varepsilon}$)   we have
\begin{align*}
\varepsilon^d\#((\mc{L}_\varepsilon(z^+_\varepsilon)\triangle X_\varepsilon)\cap \mathrm{R}^+_{\delta,\varepsilon})\leq C\mc{L}^d(\{u_\varepsilon\neq z_\varepsilon^+\}\cap  \left( \mathrm{R}^+_{\delta,\varepsilon}\right)_{r_{\mathrm{crys}}\varepsilon})\leq C E_\varepsilon(X_\varepsilon, Q^\nu\setminus R^\nu_{1,\frac{\delta}{2}})\,.
\end{align*}
\noindent {\bf Step 4:} {\it Cut-off construction.} 
In this step, given $\lambda > 6r_{\mathrm{int}}$, for every $\delta>0$, we construct a new configuration $Y_\varepsilon\subset\R^2$ such that $Y_\varepsilon=\mc{L}_\varepsilon(z^+_\varepsilon)$ on $\p_{\lambda\varepsilon}^+ Q^\nu$, see \eqref{def:discrete-boundary}, and
\begin{align*}
\limsup_{\varepsilon \to 0 } E_\varepsilon(Y_\varepsilon,Q^\nu(y_\varepsilon)) \leq \lim_{\varepsilon \to 0 } E_\varepsilon(X_\varepsilon,Q^\nu(y_\varepsilon))+C\delta\,.
\end{align*} 
The proof of the energy estimate of $Y_\varepsilon$ is postponed to Step~5, while in Step~4 we only describe the construction of $Y_\varepsilon$. In Step~6, we briefly describe the conclusions once, then we perform the analogous construction on the lower half cube.
Let $N_\varepsilon=\lfloor\frac{\delta}{9 r_{\mathrm{int}} \varepsilon}\rfloor$ (we omit the dependence on $\delta$ for simplicity of notation).
For $k\in\{0,\dots,N_\varepsilon+1\}$ let $r_k:=1-\delta+4k r_{\mathrm{int}} \varepsilon$. We define
\begin{equation}\label{def:Sk-Step4}
S_k^\varepsilon:=(Q^{\nu,+}_{r_k}\setminus Q^{\nu,+}_{r_{k-1}})\setminus R^\nu_{1,\delta}\,.
\end{equation}
For $k\in\{1,\dots,N_\varepsilon\}$ we also define the thickened layers $L_k^\varepsilon:=S^\varepsilon_{k-1}\cup S^\varepsilon_{k}\cup S^\varepsilon_{k+1}$.
Notice that for $\varepsilon>0$ small enough we have   $r_{N_\varepsilon+1}\leq 1-\delta+4(\frac{\delta}{9 r_{\mathrm{int}}\varepsilon}+1)r_{\mathrm{int}} \varepsilon =1-\frac{5}{9}\delta+4\varepsilon r_{\mathrm{int}}\leq 1-\frac{\delta}{2}$ and therefore $S_{k}^\varepsilon\subseteq Q^{\nu,+}_{r_{N_\varepsilon+1}}\setminus R_{1,\delta}^\nu$. This implies $L_k^\varepsilon\subset \mathrm{R}^+_{\delta,\varepsilon}$ for all $k \in \{1,\ldots,N_\varepsilon\}$. Using \eqref{ineq:card-estimate-Step3}, there exists $k_\varepsilon\in\{1,\dots,N_\varepsilon\}$ such that
\begin{align}\label{ineq:card-estimate-Strip-S4}
\begin{split}
\#((\mc{L}_\varepsilon(z_\varepsilon^+)\triangle X_\varepsilon)\cap L^\varepsilon_{k_\varepsilon})&\leq\frac{1}{N_\varepsilon}\sum^{N_\varepsilon}_{k=1}\#((\mc{L}_\varepsilon(z_\varepsilon^+)\triangle X_\varepsilon)\cap L_k^\varepsilon)
\\&\leq\frac{3}{N_\varepsilon}\#((\mc{L}_\varepsilon(z^+_\varepsilon)\triangle X_\varepsilon)\cap \mathrm{R}_{\delta,\varepsilon}^+)
\leq\frac{C}{\varepsilon^{d-1}\delta}E_\varepsilon\left(X_\varepsilon,Q^\nu\setminus R^\nu_{1,\frac{\delta}{2}}\right)\,,
\end{split}
\end{align}
where we used that $\varepsilon^{d-1}\delta\leq C N_\varepsilon\varepsilon^d$. The factor $3$ in the second inequality comes from the fact that each $S_k^\varepsilon$ is counted at most three times. We now set $D^\varepsilon:=Q^\nu_{r_{{k_\varepsilon}-1}}\cup(Q^{\nu,-}\setminus R_{1,\delta}^\nu)$. We define the configuration $Y_\varepsilon^+$ as follows:
\begin{equation}\label{def:Yeps-S4}
Y_\varepsilon:=\begin{cases}
\mc{L}_\varepsilon(z_\varepsilon^+) &\text{in  } (\mathrm{R}^+_{\delta,\varepsilon}\setminus Q^\nu_{r_{k_\varepsilon}})\cup \p_{\lambda\varepsilon}^+ Q^\nu\,, \\
\emptyset &\text{in  } (R_{1,\delta}^\nu\setminus Q^\nu_{r_{k_\varepsilon-1}}) \setminus\p_{\lambda\varepsilon}^+ Q^\nu\,, \\
X_\varepsilon\cap\mc{L}_\varepsilon(z_\varepsilon^+) &\text{in  } S^\varepsilon_{k_\varepsilon},\\
X_\varepsilon &\text{in  } D^\varepsilon\,.
\end{cases}
\end{equation}
We refer to Figure~\ref{fig:2} for an illustration of the construction. In $D^\varepsilon$ the configuration remains unchanged, and near the boundary of the upper half cube it coincides with the lattice $\mc{L}_\varepsilon(z_\varepsilon^+)$. In $S_{k_\varepsilon}^\varepsilon$, we geometrically interpolate  between $X_\varepsilon$ and $\mathcal{L}_\varepsilon(z^+_\varepsilon)$, i.e., $S_{k_\varepsilon}^\varepsilon$ can be understood as a transition layer. Finally, in the region close to the hyperplane and close to the boundary, we do not put any atoms. We note that our construction ensures $\vert y_1-y_2\vert\geq\varepsilon$ for all $y_1,y_2\in Y_\varepsilon^+, y_1\neq y_2$ and thus by \ref{H1}
\begin{equation}\label{ineq:Yeps-finite-energy-S4}
E_\varepsilon(Y_\varepsilon)<+\infty\,.
\end{equation}
Finally, we point out that $Y_\varepsilon\not\subset Q^\nu$ due to the definition of $\p_{\lambda\varepsilon} Q^\nu$, see \eqref{def:discrete-boundary} as well as Figure \ref{fig:2}. 
\begin{figure}[h]
\centering
\begin{tikzpicture}

\draw(0,0) rectangle(10,10);
\draw[fill=black!5](0.5,0.5) rectangle(9.5,9.5);
\draw[fill=black!5](1,1) rectangle(9,9);
\draw[fill=black!20](0,0) rectangle(10,5);
\draw[fill=black!10](0,5.25)--(0.25,5.25)--(0.25,5.5)--(9.75,5.5)--(9.75,5.25)--(10,5.25)--(10,4.75)--(9.75,4.75)--(9.75,4.5)--(0.25,4.5)--(0.25,4.75)--(0,4.75)--cycle;
\draw[fill=black!20](1.5,4.5) rectangle(8.5,8.5);
\draw[black!20](1.5,4.5)--(8.5,4.5);
\draw(5,8.4) node[above]{$S_{k_\varepsilon}^\varepsilon$};
\draw(5,9.3) node[above]{$(\mathrm{R}^+_{\delta,\varepsilon}\setminus Q^\nu_{r_{k_\varepsilon}})\cup\partial^+_{\lambda\varepsilon} Q^\nu$};
\draw[<->](-0.1,4.5)-- node[left]{$\delta$}(-0.1,5.5);
\draw[<->](0,5.6)node[above right]{$\sim\delta$}-- (1.5,5.6);
\draw[thick](0,5)--(10,5);
\draw[dashed](0,5.5)--(10,5.5);
\draw[dashed](0,4.5)--(10,4.5);
\draw(10,5) node[right]{$R^\nu_{1,\delta}$};
\draw[->](7,5)--(7,6) node[right]{$\nu$};
\draw(5,2) node[above]{$D^\varepsilon\cup\partial^-_{\lambda\varepsilon} Q^\nu$};
\end{tikzpicture}
\caption{Illustration of the different regions in the definition of $Y_\varepsilon$. The dark grey region is $D^\varepsilon\cup\partial^-_{\lambda\varepsilon} Q^\nu$, the grey region is $(R^\nu_{1,\delta}\setminus Q^\nu_{r_{k_\varepsilon-1}})\setminus(\partial^+_{\lambda\varepsilon} Q^\nu\cup\partial^-_{\lambda\varepsilon} Q^\nu)$, the light grey region is $S^\varepsilon_{k_\varepsilon}$ and the white region is $(\mathrm{R}^+_{\delta,\varepsilon}\setminus Q^\nu_{r_{k_\varepsilon}})\cup\partial^+_{\lambda\varepsilon} Q^\nu$. The dashed lines enclose $R^\nu_{1,\delta}$.}\label{fig:2}
\end{figure}
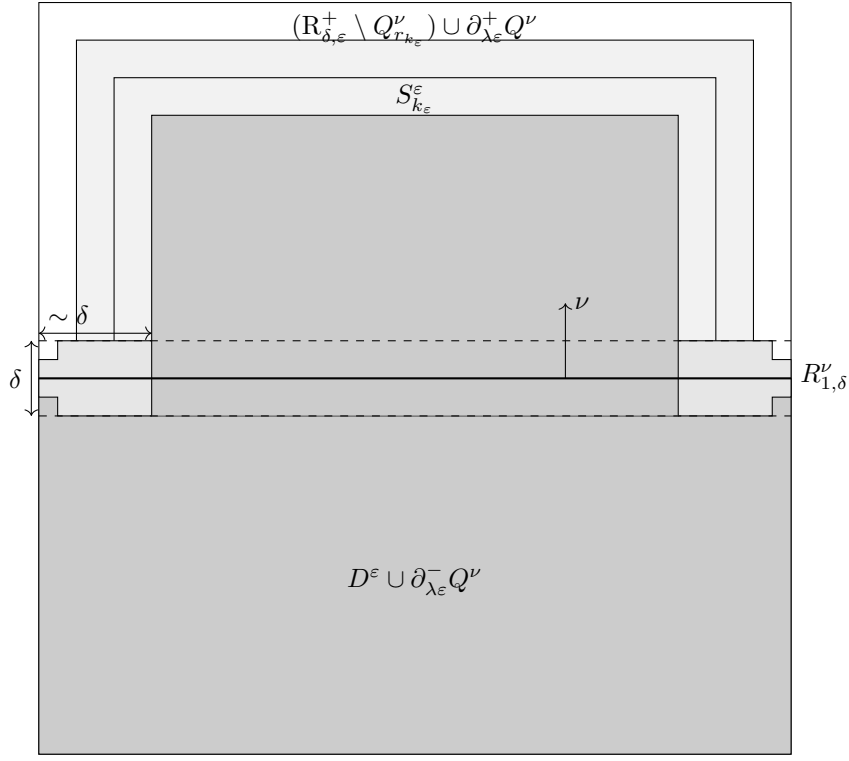
\noindent {\bf Step 5:} {\it Energy estimate.} In this step we show
\begin{equation}\label{ineq:energy-estimate-Step4}
\liminf_{\varepsilon\to0}E_\varepsilon(Y_\varepsilon,Q^\nu)\leq\liminf_{\varepsilon\to0}E_\varepsilon(X_\varepsilon,Q^\nu)+C\delta
\end{equation}
for some universal $C>0$. To achieve this we look at three distinct regions
\begin{equation}\label{def:different-regions}
A_1^\varepsilon:=\overline{(R_{1,\delta}^\nu\setminus Q^\nu_{r_{k_\varepsilon}-1})_{r_{\mathrm{int}}\varepsilon}}\,,\quad A_2^\varepsilon:=\overline{(S_{k_\varepsilon}^\varepsilon)_{r_{\mathrm{int}}\varepsilon}}\setminus A_1^\varepsilon\,,\quad A_3^\varepsilon:=Q^\nu\setminus(A_1^\varepsilon\cup A_2^\varepsilon)\,.
\end{equation}
{\it Energy estimate on }$A_1^\varepsilon$: We claim that there exists a $C>0$ such that
\begin{align}\label{ineq:A1eps-estimate-S5}
E_\varepsilon(Y_\varepsilon,A_1^\varepsilon)\leq C\delta\,.
\end{align}
By \eqref{def:Yeps-S4} we have $ Y_\varepsilon\cap(R_{1,\delta}^\nu\setminus Q^\nu_{r_{k_\varepsilon}-1})\subset (\mc{L}_\varepsilon(z^+_\varepsilon)\cap R_{1,\delta}^\nu\cap \p_{\lambda\varepsilon} Q^\nu)$. As $\mc{L}^d((R_{1,\delta}^\nu\cap\p_\varepsilon Q^\nu)_\varepsilon)\leq C\delta\varepsilon$, we use Lemma~\ref{lem:elementary-properties}(v) (applicable as $E_\varepsilon(Y_\varepsilon^+)<+\infty$) to get
\begin{equation}\label{ineq:card-estimate-A1}
\#\left(Y_\varepsilon\cap(R_{1,\delta}^\nu\setminus Q^\nu_{r_{k_\varepsilon}-1})\right)\leq C\delta \varepsilon^{1-d}\,.
\end{equation}
Note that $\mc{H}^{d-1}(\p(R^\nu_{1,\delta}\setminus Q^\nu_{r_{k_\varepsilon}-1}))\leq C\delta$. Hence, by Lemma \ref{lem:elementary-properties}(v), we obtain, using the definition of the Minkowski content,
\begin{align*}
\#((A_1^\varepsilon\cap Y_\varepsilon)\setminus(R^\nu_{1,\delta}\setminus Q^\nu_{r_{k_\varepsilon}-1}))&\leq C\varepsilon^{-d}\mc{L}^d\left(\left(\overline{(R^\nu_{1,\delta}\setminus Q^\nu_{r_{k_\varepsilon}-1})_{r_{\mathrm{int}}\varepsilon}}\setminus(R^\nu_{1,\delta}\setminus Q^\nu_{r_{k_\varepsilon}-1})\right)_\varepsilon\right)\\
&\leq C\varepsilon^{1-d}\mc{H}^{d-1}\left(\p(R_{1,\delta}^\nu\setminus Q^\nu_{r_{k_\varepsilon}-1})\right)\leq C\delta \varepsilon^{1-d}\,.
\end{align*}
This, along with \eqref{ineq:card-estimate-A1}, yields $\#(A_1^\varepsilon\cap Y_\varepsilon)\leq C\delta \varepsilon^{1-d}$ and thus \eqref{ineq:A1eps-estimate-S5} follows from \ref{H2}.\\
{\it Energy estimate on }$A_2^\varepsilon$: We prove that there exists a universal $C>0$ such that 
\begin{equation}\label{ineq:A2-estimate-S5}
E_\varepsilon(Y_\varepsilon,A_2^\varepsilon)\leq C \delta^{-1} E_\varepsilon\left(X_\varepsilon, Q^\nu\setminus R^\nu_{1,\frac\delta2}\right).
\end{equation}
By the definition of $L_{k_\varepsilon}^\varepsilon$ below \eqref{def:Sk-Step4} we have $\overline{(A_2^\varepsilon)_{r_{\mathrm{int}}\varepsilon}}\subset L_{k_\varepsilon}^\varepsilon$. We claim that there exists a constant $C>0$ (depending only on \ref{H2} and \ref{H6}) such that for all $x\in X_\varepsilon\cap Y_\varepsilon\cap A_2^\varepsilon$ we have
\begin{align}\label{ineq:YepsXeps-A2}
E_{\mathrm{cell}}^\varepsilon(x,Y_\varepsilon)\leq C\left( E_{\mathrm{cell}}^\varepsilon(x,X_\varepsilon)
+\#(\overline{B}_{r_{\mathrm{crys}}\varepsilon}(x)\cap (X_\varepsilon\setminus\mc{L}_\varepsilon(z^+_\varepsilon)))\right)\,.
\end{align}
In fact, if $\overline{B}_{r_{\mathrm{crys}}\varepsilon}(x)\cap(X_\varepsilon\setminus\mc{L}_\varepsilon(z^+_\varepsilon))\neq\emptyset$, then the right-hand side is larger or equal to $C$. As the left-hand side is less than or equal to $C$ by \ref{H2}, \eqref{ineq:YepsXeps-A2} follows in this case. Now, if $\overline{B}_{r_{\mathrm{crys}}\varepsilon}(x)\cap(X_\varepsilon\setminus\mc{L}_\varepsilon(z_\varepsilon^+))=\emptyset$ it holds $X_\varepsilon\cap\overline{B}_{r_{\mathrm{crys}}\varepsilon}(x)\subseteq\mc{L}_\varepsilon(z_\varepsilon^+)\cap\overline{B}_{r_{\mathrm{crys}}\varepsilon}(x)$. Now, using \eqref{def:Yeps-S4},  either 
\begin{align*}
E_{\mathrm{cell}}^\varepsilon(x,X_\varepsilon)=0 &\iff X_\varepsilon\cap\overline{B}_{r_{\mathrm{crys}}\varepsilon}(x)=\mc{L}_\varepsilon(z_\varepsilon^+)\cap\overline{B}_{r_{\mathrm{crys}}\varepsilon}(x) =  Y_\varepsilon\cap\overline{B}_{r_{\mathrm{crys}}\varepsilon}(x)\\ &\iff E_{\mathrm{cell}}^\varepsilon(x,Y_\varepsilon)=0
\end{align*}
 or $E_{\mathrm{cell}}^\varepsilon(x,X_\varepsilon)\geq c$ and, due to \ref{H2}, $E_{\mathrm{cell}}^\varepsilon(x,Y_\varepsilon)\leq C$. In both cases, we have \eqref{ineq:YepsXeps-A2}. Now, splitting the sum into $X_\varepsilon\cap Y_\varepsilon$ and $Y_\varepsilon\setminus X_\varepsilon$, using \ref{H2}, we obtain
\begin{align}\label{ineq:YepsA2}
E_\varepsilon(Y_\varepsilon, A_2^\varepsilon)&\leq C\varepsilon^{d-1}\#\{x\in A_2^\varepsilon\cap (Y_\varepsilon\setminus X_\varepsilon)\}+\varepsilon^{d-1}\sum_{x\in X_\varepsilon\cap Y_\varepsilon\cap A_2^\varepsilon}E_{\mathrm{cell}}^\varepsilon(x,Y_\varepsilon)\,.
\end{align}
Note that by \eqref{def:Yeps-S4}, $Y_\varepsilon\subseteq\mc{L}_\varepsilon(z_\varepsilon^+)\cap X_\varepsilon$ in $L^\varepsilon_{k_\varepsilon}$. Thus, by \eqref{ineq:card-estimate-Strip-S4}, we have
\begin{align}\label{ineq:card-estimateYepswithoutXeps}
\#\left\{x\in(Y_\varepsilon\cap L_{k_\varepsilon}^\varepsilon)\setminus X_\varepsilon\right\}\leq\#\left\{x\in(\mc{L}_\varepsilon(z_\varepsilon^+)\triangle X_\varepsilon)\cap L_{k_\varepsilon}^\varepsilon\right\}\leq C\delta^{-1} \varepsilon^{1-d} E_\varepsilon\left(X_\varepsilon,Q^\nu\setminus R_{1,\frac{\delta}{2}}^\nu\right)\,.
\end{align}
Additionally, using \eqref{ineq:card-estimate-Strip-S4} once more together with \eqref{ineq:YepsXeps-A2}, for $\delta>0$ small enough we obtain
\begin{align}\label{ineq:cellYepsA2}
\notag
\sum_{x\in X_\varepsilon\cap Y_\varepsilon\cap A_2^\varepsilon}E_{\mathrm{cell}}^\varepsilon(x,Y_\varepsilon) &\leq  C\varepsilon^{1-d} E_\varepsilon(X_\varepsilon,L_{k_\varepsilon}^\varepsilon) + C\sum_{x\in Y_\varepsilon\cap X_\varepsilon\cap A_2^\varepsilon}\left(\#\left(\overline{B}_{r_{\mathrm{crys}}\varepsilon}(x)\cap(X_\varepsilon\setminus\mc{L}_\varepsilon(z_\varepsilon^+))\right)\right)\\& \leq C\varepsilon^{1-d} E_\varepsilon(X_\varepsilon,L_{k_\varepsilon}^\varepsilon) + C \#\{x\in(\mc{L}_\varepsilon(z_\varepsilon^+)\triangle X_\varepsilon)\cap L_{k_\varepsilon}^\varepsilon\} \\&\leq C\delta^{-1} \varepsilon^{1-d} E_\varepsilon\left(X_\varepsilon, Q^\nu\setminus R^\nu_{1,\frac\delta2}\right)\,,
\notag
\end{align}
where the second inequality holds since $\vert x_1-x_2\vert\geq\varepsilon\ \forall x_1,x_2\in X_\varepsilon, x_1\neq x_2$ and $\overline{B}_{r_{\mathrm{crys}}\varepsilon}(x)\subset L_{k_\varepsilon}^\varepsilon$ for all $x\in A_2^\varepsilon$. Hence, each point in $(X_\varepsilon\setminus\mc{L}_\varepsilon(z_\varepsilon^+))\cap L_{k_\varepsilon}^\varepsilon$ is only accounted for at most a bounded number of times independent of $\varepsilon$ (see Lemma~\ref{lem:elementary-properties}(vi)). Now, using \eqref{ineq:YepsA2}-\eqref{ineq:cellYepsA2}, $L_{k_\varepsilon}^\varepsilon\subset Q^\nu\setminus R^\nu_{1,\delta}$ and Lemma~\ref{lem:elementary-properties}(iii) we obtain \eqref{ineq:A2-estimate-S5}.\\
{\it Energy estimate on }$A_3^\varepsilon$: We claim that
\begin{align}\label{ineq:Yeps-A3}
E_\varepsilon(Y_\varepsilon,A_3^\varepsilon)\leq E_\varepsilon(X_\varepsilon,Q^\nu)\,.
\end{align}
Recalling \eqref{def:different-regions}, each $x\in A_3^\varepsilon\cap Y_\varepsilon$ either lies in $T^\varepsilon:=(\mathrm{R}^+_{\delta,\varepsilon}\setminus Q^\nu_{r_{k_\varepsilon}})\cup(\p^+_{\lambda\varepsilon} Q^\nu\setminus R^\nu_{1,\delta})$ or in $D^\varepsilon$.\\
If $x\in A_3^\varepsilon\cap Y_\varepsilon\cap T^\varepsilon$, then also $\overline{B}_{r_{\mathrm{int}}\varepsilon}(x)\subset T^\varepsilon$ by the definition of $A_1^\varepsilon,A_2^\varepsilon$ and the boundary regions. Thus, \eqref{def:Yeps-S4} implies $E_{\mathrm{cell}}^\varepsilon(x,Y_\varepsilon)=0$, see \ref{H3}. If $x\in A_3^\varepsilon\cap Y_\varepsilon\cap D^\varepsilon$, then $X_\varepsilon\cap\overline{B}_{r_{\mathrm{int}}\varepsilon}(x)=Y_\varepsilon\cap\overline{B}_{r_{\mathrm{int}}\varepsilon}(x)$, which together with \ref{H4} implies $E_{\mathrm{cell}}^\varepsilon(x,X_\varepsilon)=E_{\mathrm{cell}}^\varepsilon(x,Y_\varepsilon)$. Thus, by \eqref{eq:scaled-energy-local}, Lemma~\ref{lem:elementary-properties}(iii),(iv), we obtain \eqref{ineq:Yeps-A3}. Indeed,
\begin{align*}
E_\varepsilon(Y_\varepsilon, A_3^\varepsilon)&=E_\varepsilon(Y_\varepsilon, A_3^\varepsilon\cap T^\varepsilon)+ E_\varepsilon(Y_\varepsilon, A_3^\varepsilon\cap D^\varepsilon)=E_\varepsilon(Y_\varepsilon, A_3^\varepsilon\cap D^\varepsilon)\leq E_\varepsilon(X_\varepsilon,Q^\nu)\,.
\end{align*}
Now, by Lemma~\ref{lem:elementary-properties}(iv),
$$E_\varepsilon(Y_\varepsilon,Q^\nu)=E_\varepsilon(Y_\varepsilon,A_1^\varepsilon)+E_\varepsilon(Y_\varepsilon,A_2^\varepsilon)+E_\varepsilon(Y_\varepsilon,A_3^\varepsilon)\,.$$
Thus, we obtain \eqref{ineq:Yeps-finite-energy-S4} by \eqref{eq:Step1-concentration}, \eqref{ineq:A1eps-estimate-S5}, \eqref{ineq:A2-estimate-S5}, and \eqref{ineq:Yeps-A3}. \\
\noindent {\bf Step 6:} {\it Conclusion.}  Repeating the cut-off construction in Step~4 on $Q^{\nu,-}$ for $z_\varepsilon^-$, we obtain a configuration $Y_\varepsilon$ such that $Y_\varepsilon\in \mathrm{Adm}_{\varepsilon,\lambda}^{(z_\varepsilon^+,z_\varepsilon^-)}(Q^\nu(y_\varepsilon))$  and
\begin{align}\label{ineq:Step6}
\liminf_{\varepsilon\to0}E_\varepsilon(Y_\varepsilon,Q^\nu(y_\varepsilon))\leq\liminf_{\varepsilon\to0}E_\varepsilon(X_\varepsilon,Q^\nu(y_\varepsilon))+C\delta
\end{align}
by \eqref{ineq:Yeps-finite-energy-S4}, where we included the center $y_\varepsilon$ in the notation for clarification. Since $z_\varepsilon^\pm\to z^\pm$ as $\varepsilon \to 0$ by Step $2$, we observe by \eqref{def:Phi} that
$$\liminf_{\varepsilon\to0}E_\varepsilon(Y_\varepsilon, Q^\nu(y_\varepsilon))\geq\Phi(z^+,z^-,\nu)\,.$$
By using \eqref{eq:optimal-sequence-psi}, \eqref{ineq:Yeps-finite-energy-S4}, and by passing to $\delta\to0$, we obtain the statement of the lemma in the case that $z_\varepsilon^\pm \neq \bf 0$.\\
\noindent {\bf Step 7:} {\it Adaptations for $z_\varepsilon^+ = \bf 0$.} Here, we describe the necessary adaptations to be done in case {\rm (b)} at the end of Step 2, i.e., when $z_\varepsilon^+=\textbf{0}$. \\
\noindent {\bf Step 3 for case {\rm (b)}:} {\it Cardinality estimate .} We prove that there exists a universal constant $C>0$ such that
\begin{align}\label{ineq:Card-estimate-Step3b}
\varepsilon^d\#(X_\varepsilon\cap \mathrm{R}^+_{\delta,\varepsilon})\leq CE_\varepsilon\left(X_\varepsilon,Q^\nu\setminus R^\nu_{1,\frac\delta2}\right)\,.
\end{align} 
Using \ref{H6}, \eqref{eq:cell-energy-scaled}, and \eqref{eq:scaled-energy-local}, \eqref{def:interpolation}, we obtain
\begin{align*}
\varepsilon^d\#(X_\varepsilon\cap \mathrm{R}^+_{\delta,\varepsilon}) &= \varepsilon^d \#\{x \in X_\varepsilon \cap  \mathrm{R}^+_{\delta,\varepsilon} \mid E_{\varepsilon}^\mathrm{cell}(x,X_\varepsilon) =0\} + \varepsilon^d \#\{x \in X_\varepsilon \cap  \mathrm{R}^+_{\delta,\varepsilon} \mid E_{\varepsilon}^\mathrm{cell}(x,X_\varepsilon) >0\} \\&\leq c\left( |\{u_\varepsilon \neq {\bf 0}\} \cap R^+_{\delta,\varepsilon}|  + \varepsilon E_\varepsilon(X_\varepsilon, \mathrm{R}^+_{\delta,\varepsilon})\right) \leq C E_\varepsilon(X_\varepsilon, Q^\nu \setminus R^{\nu}_{1,\frac{\delta}{2}})  \,,
\end{align*}
where we used that for $x\in X_\varepsilon$ with $E_{\varepsilon}^{\mathrm{cell}}(x,X_\varepsilon)=0$ we have $u_\varepsilon(y)\neq\textbf{0}$ for all  $y \in B_{\frac{\varepsilon}{2}}(x)$ and for all $x \in \mathrm{R}^+_{\delta,\varepsilon}$ we have $|B_{\frac{\varepsilon}{2}}(x)\cap \mathrm{R}^+_{\delta,\varepsilon}| \geq c \varepsilon^d$ for some constant depending only on $d$. This concludes Step 3 in case (b). \\
\noindent {\bf Step 4 for case {\rm (b)}:} {\it Cut-off construction.}  We construct $Y_\varepsilon$ such that $Y_\varepsilon=\emptyset$ on $\p_{\lambda\varepsilon}^+Q^\nu$. Again, set $N_\varepsilon=\lfloor\frac{\delta}{9r_{\mathrm{int}}\varepsilon}\rfloor$ and define $S_k^\varepsilon$ as in \eqref{def:Sk-Step4} as well as $L_k^\varepsilon=S_{k-1}^\varepsilon\cup S_k^\varepsilon\cup S_{k+1}^\varepsilon$. By averaging over $k$ and using \eqref{ineq:Card-estimate-Step3b}, there exists $k_\varepsilon\in\{1,\dots,N_\varepsilon\}$ such that
\begin{align}\label{ineq:averaging-Step4b}
\#(X_\varepsilon\cap L_{k_\varepsilon}^\varepsilon)&\leq\frac{1}{N_\varepsilon}\sum_{k=1}^{N_\varepsilon}\#(X_\varepsilon\cap L_{k}^\varepsilon)\leq\frac{3}{N_\varepsilon}\#(X_\varepsilon\cap \mathrm{R}_{\delta,\varepsilon}^+)\leq\frac{C}{\varepsilon^{d-1}\delta}E_\varepsilon\left(X_\varepsilon,Q^\nu\setminus R_{1,\frac\delta2}^\nu\right)\,,
\end{align}
where we again use that each strip $S_k^\varepsilon$ is counted at most three times. We define
\begin{align}\label{eq:Yeps-Step3b}
Y_\varepsilon:=\begin{cases}\emptyset &\text{ in }(\mathrm{R}_{\delta,\varepsilon}^+\cup R^\nu_{1,\delta})\setminus Q^\nu_{r_{k_\varepsilon}-1} \cup \partial^+_{\lambda\varepsilon}Q^\nu\,,\\ X_\varepsilon & \text{ otherwise.}\end{cases}
\end{align}
Note that, due to $E_\varepsilon(X_\varepsilon)<+\infty$, we have $E_\varepsilon(Y_\varepsilon)<+\infty$. \\
\noindent {\bf Step 5 for case (b):} {\it Energy estimate.} 
We split the estimate into the three sets $A_1^\varepsilon, A_2^\varepsilon, A_3^\varepsilon$ defined in \eqref{def:different-regions}.\\
{\it Energy estimate on }$A_1^\varepsilon$:  We claim that there exists a $C>0$ such that
\begin{align}\label{ineq:A1-Step3b}
E_\varepsilon(Y_\varepsilon,A_1^\varepsilon)\leq C\delta\,.
\end{align}
By \eqref{eq:Yeps-Step3b} we have $Y_\varepsilon\cap(R_{1,\delta}^\nu\setminus Q^\nu_{r_{k_\varepsilon}})=\emptyset$. Furthermore, as $R^\nu_{1,\delta}\setminus Q^\nu_{r_{k_\varepsilon}-1}$ satisfies $\mc{H}^{d-1}(\p(R^\nu_{1,\delta}\setminus Q^\nu_{r_{k_\varepsilon}-1}))\leq C\delta$ and $Y_\varepsilon$ has finite energy, we obtain by Lemma~\ref{lem:elementary-properties}(v)
\begin{align*}
\#(A_1^\varepsilon\cap Y_\varepsilon)=\#\left((A_1^\varepsilon\setminus(R^\nu_{1,\delta}\setminus Q^\nu_{r_{k_\varepsilon}}))\cap Y_\varepsilon\right)&\leq C\varepsilon^{-d}\mc{L}^d\left((A_1^\varepsilon\setminus(R^\nu_{1,\delta}\setminus Q^\nu_{r_{k_\varepsilon}}))_{\varepsilon}\right)\\
&\leq C\varepsilon^{-(d-1)}\mc{H}^{d-1}\left(\p(R^\nu_{1,\delta}\setminus Q^\nu_{r_{k_\varepsilon}-1})\right)\leq C\delta \varepsilon^{1-d}\,.
\end{align*}
Then, \eqref{ineq:A1-Step3b} follows from \ref{H2}.\\
{\it Energy estimate on }$A_2^\varepsilon$:  We claim that there exists a $C>0$ such that
\begin{align}\label{ineq:A2-Step3b}
E_\varepsilon(Y_\varepsilon,A_2^\varepsilon)\leq\frac{C}{\delta}E_\varepsilon\left(X_\varepsilon,Q^\nu\setminus R^\nu_{1,\frac\delta2}\right)\,.
\end{align}
As $x\in Y_\varepsilon\cap A_2^\varepsilon$ implies $x\in X_\varepsilon\cap L_{k_\varepsilon}^\varepsilon$, \eqref{ineq:A2-Step3b} follows from \eqref{ineq:averaging-Step4b} and \ref{H2}.\\
{\it Energy estimate on }$A_3^\varepsilon$: We observe that
\begin{align}\label{ineq:A3-Step3b}
E_\varepsilon(Y_\varepsilon,A_3^\varepsilon)\leq E_\varepsilon(X_\varepsilon,Q^\nu)\,.
\end{align}
Indeed, if $x\in Y_\varepsilon\cap(Q^\nu\setminus(A_1^\varepsilon\cup A_2^\varepsilon))$,  then $X_\varepsilon\cap\overline{B}_{r_{\mathrm{int}}\varepsilon}(x)=Y_\varepsilon\cap \overline{B}_{r_{\mathrm{int}}\varepsilon}(x)$, which implies that $E^\varepsilon_{\mathrm{cell}}(x,X_\varepsilon)=E^\varepsilon_{\mathrm{cell}}(x,Y_\varepsilon)$. Thus, \eqref{ineq:A3-Step3b} follows by the definition of the energy and Lemma~\ref{lem:elementary-properties}(iii).\\
Using \eqref{ineq:A1-Step3b}-\eqref{ineq:A3-Step3b} together with \eqref{eq:Step1-concentration}, we obtain
\begin{align*}
\liminf_{\varepsilon\to0}E_\varepsilon(Y_\varepsilon,Q^\nu)\leq\liminf_{\varepsilon\to0}E_\varepsilon(X_\varepsilon,Q^\nu)+C\delta\,,
\end{align*}
which is the analogue to \eqref{ineq:energy-estimate-Step4}. As Step 6 remains unchanged, this concludes the proof.
\end{proof}
\section{Reduction of the Problem to Subsets of Two Lattices}\label{sec:reduction-two-lattices}
From now on, it will be convenient to rewrite the problem with lattice spacing $1$ and cubes of size $T$ with $T\to +\infty$. This is a rescaling argument using Lemma~\ref{lem:elementary-properties}(vii). See \eqref{def:Phi-equivalent} for the rescaled problem. The goal of this section is to restrict the set of competitors for the cell problem with prescribed boundary values in a $\lambda$-neighborhood of the boundary, see~\eqref{def:discrete-boundary}. This will be the consequence of Lemma~\ref{lem:separation-2-lattices} and Lemma~\ref{lem:independence-of-bdry-layer}. In the proof of Lemma~\ref{lem:separation-2-lattices} we use the concept of connectedness in a graph theoretical sense. To this end, we recall the definition of neighboring points, see \eqref{def:neighbourhood} for $\varepsilon=1$, i.e.,
\begin{align*}
\mathcal{N}(x) =\{y \in X\setminus\{x\}\mid |x-y|\leq r_{\mathrm{int}}\}\,.
\end{align*}
 Now, for a configuration $X$ we define a path $p=(x_0,\dots,x_N)$ as a collection of atoms with $x_i\in X$ for all $i=0,\dots,N$ and $x_i\in\mc{N}(x_{i-1})$ for $i=1,\dots, N$. Two atoms $x,y\in X$ are called connected if there exists a path $p=(x_0,\dots,x_N)$ in $X$ with $x_0=x,x_N=y$. A connected component of $\tilde{X}\subset X$ is a subset, maximal with respect to set inclusion, and such that for all $x,y\in\tilde{X}$ there is a path $p=(x_0,\dots,x_N)$ with $x_0=x,x_N=y$ and $x_i\in\tilde{X}$ for all $i=1,\dots,N$.
\begin{lem}\label{lem:separation-2-lattices}
Let $E_{\mathrm{cell}}$ satisfy {\rm \ref{H1}}--{\rm \ref{H10}}. Let $z^+,z^-\in\mc{Z},\nu\in\mathbb{S}^{d-1},x_0\in\R^d$, $\lambda> 8 r_{\mathrm{int}}$, and $T>0$.  Let $X\subset\R^d$ be a competitor of
\begin{align}\label{lem-eq:minimizer}
\inf\left\{E_1(X,Q_T^\nu(x_0))\mid X \in \mathrm{Adm}_{1,\lambda}^{z^+,z^-}(Q^\nu_T(x_0))\right\}\EEE
\end{align} 
and set $\hat \lambda = \lambda-2r_{\mathrm{int}}$. Then, there exist configurations $X^\pm\subset X$ with the following properties: 
\begin{enumerate}
\item[{\rm (i)}](Subset of two Lattices) It holds $X^\pm\subset\mc{L}(z^\pm)$.
\item[{\rm (ii)}](Separation) It holds $\dist(X^+,X^-)>r_{\mathrm{int}}$.
\item[{\rm (iii)}](Admissibility) It holds $X^+ \in \mathrm{Adm}^{(z^+,\textbf{0})}_{1,\hat\lambda}(Q^\nu_T(x_0))$ and $X^- \in \mathrm{Adm}^{(\textbf{0},z^-)}_{1,\hat\lambda}(Q^\nu_T(x_0))$. In particular,
\begin{align*}
\inf\EEE\big\{E_1(X,Q_T^\nu(x_0))\mid X \in \mathrm{Adm}_{1,\hat \lambda}^{(z^+,\textbf{0})}(Q^\nu_T(x_0))\big\} \leq E_1(X^+,Q^\nu_T) 
\end{align*}
and 
\begin{align*}
\inf\EEE\big\{E_1(X,Q_T^\nu(x_0))\mid X\in \mathrm{Adm}_{1,\hat \lambda}^{(\textbf{0},z^-)}(Q^\nu_T(x_0))\big\}\leq E_1(X^-,Q^\nu_T)\,.
\end{align*}
\item[{\rm (iv)}](Energy bound) It holds
\begin{align*}
E_1(X^+,Q^\nu_T(x_0))+E_1(X^-,Q^\nu_T(x_0))\leq E_1(X,Q^\nu_T(x_0))\,.
\end{align*}
\end{enumerate}
\end{lem}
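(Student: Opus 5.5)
Starting from the competitor $X$, the plan is to remove atoms one at a time, never increasing the energy, until what remains splits into the two pieces $X^\pm$. The removals are justified by \ref{H8}--\ref{H10}. Each of these gives $E_1(X\setminus\{x\})\le E_1(X)$. Since the removed atom will always lie at distance at least $\hat\lambda\ge 6r_{\mathrm{int}}$ from the outer boundary layer of $Q^\nu_T(x_0)$, both it and its neighbourhood lie in $Q^\nu_T(x_0)$, so Lemma~\ref{lem:elementary-properties}(ix) upgrades the global inequality to the local one on $Q^\nu_T(x_0)$ (ix is stated for strict inequality, but the same computation gives the non-strict version). Atoms in the boundary layers are never touched, so the prescribed data on $\partial^\pm_{\lambda}Q^\nu_T$ and the emptiness on $\partial^c_\lambda Q^\nu_T$ are preserved.

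The pruning is an iteration that terminates because $X$ is finite. Let $X_0:=X$. Given $X_k$, look for an atom $x\in X_k\cap Q^\nu_{T-2\hat\lambda}(x_0)$, away from the boundary layers, of one of three types: (a) $\#\mathcal N(x)<d_{\mathrm{unique}}$; (b) $\{x\}\cup\mathcal N(x)$ is not contained in any single lattice $\mathcal L(z)$; (c) $x$ has a neighbour $y$ with $\#\mathcal N(x)$ and $\#\mathcal N(y)$ both at least $d_{\mathrm{unique}}$ and $\mathcal L(z(x))\neq\mathcal L(z(y))$. If such an $x$ exists, set $X_{k+1}=X_k\setminus\{x\}$; this does not increase the energy by \ref{H8}, \ref{H9} or \ref{H10} respectively. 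When the process stops we obtain $\tilde X\subset X$ with $E_1(\tilde X,Q^\nu_T)\le E_1(X,Q^\nu_T)$ in which every interior atom is highly coordinated. Such an atom has a neighbourhood inside a unique lattice $\mathcal L(z(x))$ (uniqueness from \ref{H8}), and neighbouring interior atoms carry the same lattice. Hence along any path in $\tilde X$ the associated lattice is constant.

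For the boundary atoms, those in $\partial^\pm_\lambda Q^\nu_T$ that are at least $r_{\mathrm{int}}$ deep lie in $\mathcal L(z^\pm)$ together with all their neighbours. Because $\lambda>8r_{\mathrm{int}}$, there is enough room to propagate this to the interior atoms adjacent to the layer: their neighbourhoods contain at least $d_{\mathrm{unique}}$ lattice points of $\mathcal L(z^\pm)$, so they carry $z^\pm$. This is where the margin $\hat\lambda=\lambda-2r_{\mathrm{int}}$ is spent. Now let $X^+$ be the union of the connected components of $\tilde X$ meeting $\partial^+_{\hat\lambda}Q^\nu_T$, and set $X^-:=\tilde X\setminus X^+$, which contains the components meeting $\partial^-$. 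Components of $\tilde X$ touching neither boundary part can be placed in either set; their energy is nonnegative, so including them is harmless for the energy bound, or they can be dropped using Lemma~\ref{lem:elementary-properties}(viii) together with a separate check that their energy is nonnegative. The properties then follow:

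\begin{itemize}
\item[(i)] Each component of $\tilde X$ carries a single lattice, which must be $\mathcal L(z^\pm)$ for components touching $\partial^\pm$.
\item[(ii)] No component meets both boundary parts when $z^+\neq z^-$. If $z^+=z^-$ a component could meet both, and this case needs separate care: either the statement is interpreted with $X^-$ absorbed, or one argues that the component is a translate-different copy. Either way, distinct components are at distance greater than $r_{\mathrm{int}}$ by the definition of connectedness.
\item[(iii)] Admissibility follows from the boundary analysis above.
\item[(iv)] By Lemma~\ref{lem:elementary-properties}(viii), $E_1(X^+)+E_1(X^-)=E_1(\tilde X)\le E_1(X)$, localised to $Q^\nu_T(x_0)$.
\end{itemize}

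The main obstacle is the boundary interplay. Atoms in the thin region between the interior and the fixed layers may be non-removable, because they lie outside the allowed region, and yet not highly coordinated. Also, (iii) requires $X^+=\emptyset$ on $\partial^-_{\hat\lambda}$ and on the lateral part $\partial^c_{\hat\lambda}$, which shifts by $2r_{\mathrm{int}}$ when $\lambda$ is replaced by $\hat\lambda$. I would handle this by allowing removals in all of $Q^\nu_{T+\hat\lambda}$ except within $r_{\mathrm{int}}$ of the prescribed lattice regions. I would then check directly, using \ref{H3} and \ref{H7}, that atoms near the fixed lattice data coincide with that lattice, so the constancy of the lattice along paths extends up to the boundary.
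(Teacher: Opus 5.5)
Your skeleton is the paper's: prune with \ref{H8}--\ref{H10}, localized by Lemma~\ref{lem:elementary-properties}(ix); make the lattice constant along paths; keep the components attached to $\partial^\pm_{\hat\lambda}Q^\nu_T$; discard the rest via Lemma~\ref{lem:elementary-properties}(viii) and \ref{H2}. However, the boundary step, which carries the actual content of the lemma, has a genuine gap. Pruning only in $Q^\nu_{T-2\hat\lambda}$ leaves the band $Q^\nu_{T-\lambda}\setminus Q^\nu_{T-2\hat\lambda}$, of width $\lambda/2-2r_{\mathrm{int}}>2r_{\mathrm{int}}$. This band is neither prescribed by admissibility nor ever pruned. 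A competitor may contain there an atom $w\notin\mathcal{L}(z^+)$ within distance $r_{\mathrm{int}}$ of an atom of $\partial^+_\lambda Q^\nu_T$. That atom joins the component of $\partial^+_{\hat\lambda}Q^\nu_T$ and violates (i). A competitor may also contain a chain, or a patch of a third lattice, running along a lateral face from the part of the layer at height $\ge r_{\mathrm{int}}$ to the part at height $\le -r_{\mathrm{int}}$. Then the components attached to the two boundary parts coincide, and (i)--(iii) fail. Your repair goes in the wrong direction. Forbidding removals within $r_{\mathrm{int}}$ of the prescribed lattice regions still leaves a free band inside $Q^\nu_{T-\lambda}$ next to $\partial^\pm_\lambda Q^\nu_T$, so the atom $w$ survives. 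Moreover, \ref{H3} and \ref{H7} only constrain atoms of zero cell energy, so they cannot force such atoms to coincide with $\mathcal{L}(z^\pm)$. For the same reason, your claim that interior atoms next to the layer see at least $d_{\mathrm{unique}}$ points of $\mathcal{L}(z^\pm)$ is unjustified: an atom at the inner edge may see only one.

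The paper does the opposite. It prunes in all of $Q^\nu_T\setminus\partial_{\hat\lambda}Q^\nu_T=Q^\nu_{T-\hat\lambda}$, which reaches depth $r_{\mathrm{int}}$ \emph{into} the original layer $\partial_\lambda Q^\nu_T$. These atoms still have their neighbourhoods in $Q^\nu_T$ because $\lambda>4r_{\mathrm{int}}$, so (ix) applies. Afterwards every atom is either regular (none of your (a)--(c) applies) or lies in the untouched layer $\partial_{\hat\lambda}Q^\nu_T$. In particular, no unconstrained atom of $Q^\nu_{T-\lambda}$ is within $r_{\mathrm{int}}$ of an untouched one. The drop of admissibility from $\lambda$ to $\hat\lambda$ is exactly the price of pruning inside the original layer, not of a propagation argument. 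A minimal path from $\partial^+_{\hat\lambda}Q^\nu_T$ to $\partial^-_{\hat\lambda}Q^\nu_T$ then has its second atom in the pruned region but still inside $\partial^+_\lambda Q^\nu_T$ (since $\partial^c_\lambda Q^\nu_T$ is empty), hence in $\mathcal{L}(z^+)$. The paper uses this to get $z(x_2)=z^+$ and then propagates along the path to contradict $z^+\neq z^-$.

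Two smaller points. First, components meeting neither boundary part must be discarded, not placed in $X^-$, since they may lie in a third lattice and would break (i). Second, for $z^+=z^-\neq\mathbf{0}$ neither of your remedies can work, because the statement is then false for large $T$. The competitor equal to $\mathcal{L}(z)$ off the band $\partial^c_\lambda Q^\nu_T$ has energy $O(T^{d-2})$ in $Q^\nu_T$. By Lemma~\ref{lem:coercivity}, each of $X^\pm$ costs order $T^{d-1}$, since $u^{X^\pm}$ must jump between $z$ and $\mathbf{0}$ across the whole cube. As in the paper's separation step, one must assume $z^+\neq z^-$ (or $z^+=z^-=\mathbf{0}$, which is trivial), and nothing else is used later.
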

\begin{proof}
Let $X$ be a  competitor  of \eqref{lem-eq:minimizer}. In order to simplify notation, we omit the center and just write $Q^\nu_T$ instead of $Q^\nu_T(x_0)$. Without loss of generality, we assume that $X\subset(Q_T^\nu)_{2r_{\mathrm{int}}}$. The idea of the proof is to successively remove atoms of $X$ which lower the energy and eventually lead to configurations that fulfill the conditions stated in Lemma~\ref{lem:separation-2-lattices}. Notice that we cannot ensure that we modify our configuration $X$ at any step in the construction described below. Thus, we cannot guarantee strict inequality in (iv). \\ 
\noindent {\bf Step 1:} {\it Removing high energy inducing atoms.} We claim that, up to removing some atoms which only decreases the energy, we can assume that $X \in \mathrm{Adm}_{1,\hat \lambda}^{(z^+,z^-)}(Q^\nu_T)$ and for all $x \in X \cap Q^\nu_T \setminus\p_{\hat{\lambda}}Q^\nu_T$ the following hold true:
\begin{itemize}
\item[(1)] $\#\mc{N}(x)\geq d_{\mathrm{unique}}$,
\item[(2)] There exists a unique $z\in\mc{Z}$, denoted by $z(x)$, such that $\{x\}\cup\mc{N}(x)\subseteq\mc{L}(z)$,
\item[(3)] For all $x,y \in  X \cap Q^\nu_T \setminus\p_{\hat{\lambda}}Q^\nu_T$ such that $y \in \mathcal{N}(x)$ we have $z(x) =z(y)$.
\end{itemize}
First, assume that there exists $x \in X \cap Q^\nu_T \setminus\p_{\hat{\lambda}}Q^\nu_T $ such that $\#\mc{N}(x)< d_{\mathrm{unique}}$. Then, by \ref{H8}, noting that $x \cup \mathcal{N}(x) \subset \overline{B}_{r_{\mathrm{int}}}(x) \subset Q_T^\nu$, and using Lemma~\ref{lem:elementary-properties}(ix), we have that
\begin{align*}
E_1(X \setminus \{x\}, Q^\nu_T) \leq E_1(X, Q^\nu_T) \,.
\end{align*} 
Thus, we can assume that (1) holds true for all $x \in X \cap Q^\nu_T \setminus\p_{\hat{\lambda}}Q^\nu_T $. Now, if there is $x \in X \cap Q^\nu_T \setminus\p_{\hat{\lambda}}Q^\nu_T$ such that for all $z \in \mathcal{Z}$ we have $x \cup \mathcal{N}(x) \nsubseteq \mathcal{L}(z)$, then,  by using \ref{H9}, noting that $x \cup \mathcal{N}(x) \subset \overline{B}_{r_{\mathrm{int}}}(x) \subset Q_T^\nu$, and using Lemma~\ref{lem:elementary-properties}(ix), we obtain 
\begin{align*}
E_1(X \setminus \{x\}, Q^\nu_T) \leq E_1(X, Q^\nu_T) \,.
\end{align*} 
This shows that we can assume {\rm (2)}. Finally, now assume that there exist  $x,y \in  X \cap Q^\nu_T \setminus\p_{\hat{\lambda}}Q^\nu_T$ such that $y \in \mathcal{N}(x)$ and $z(x) \neq z(y)$. Then, by \ref{H10} and Lemma~\ref{lem:elementary-properties}(ix), it holds
 \begin{align*}
 E_1(X\setminus\{x\},Q^\nu_T)\leq E_1(X,Q^\nu_T)\,.
 \end{align*}
This shows that we can assume {\rm (3)}.
Note that, as $X \in \mathrm{Adm}_{1,\lambda}^{(z^+,z^-)}(Q^\nu_T)$ and we removed only atoms in $Q_T^\nu\setminus\p_{\hat \lambda}Q^\nu_T$ we have that the newly obtained configuration, again denoted by $X$, satisfies $X \in \mathrm{Adm}_{1,\hat\lambda}^{(z^+,z^-)}(Q^\nu_T)$. \\
\noindent {\bf Step 2:} {\it Separating $\p^+_{\hat{\lambda}}Q^\nu_T$ from $\p^-_{\hat{\lambda}}Q^\nu_T$.} Next, we show that we can assume that there does not exist a path connecting $\p^+_{\hat{\lambda}}Q^\nu_T$ with $\p^-_{\hat{\lambda}}Q^\nu_T$. Assume there is one such path $p$ of minimal length, i.e., $p=(x_1,\dots,x_N)$ in $X$ with
\begin{align}\label{eq:path-properties-S2}
\begin{split}
x_1 \in X\cap \p^+_{\hat{\lambda}}Q^\nu_T\,,\, & x_N \in X\cap \p^+_{\hat{\lambda}}Q^\nu_T \\&\quad \text{and } x_i \notin X \cap (\p^+_{\hat{\lambda}}Q^\nu_T\cup\p^-_{\hat{\lambda}}Q^\nu_T) \text{ for all } i\in\{2,\dots,N-1\}\,.
\end{split}
\end{align}
Since we started with $X\in\mathrm{Adm}_{1,\lambda}^{z^+,z^-}(Q^\nu_T)$ it holds $\overline{B}_{r_{\mathrm{int}}}(x_2)\cap X\subset\p^+_\lambda Q^\nu_T\cap X\subset\mc{L}(z^+)$ and $\overline{B}_{r_{\mathrm{int}}}(x_{N-1})\cap X\subset\p^-_\lambda Q^\nu_T\cap X\subset\mc{L}(z^-)$. By Step~1(2) this implies $z(x_2)=z^+, z(x_{N-1})=z^-$. Now succesively using Step~1(3) along the path this implies $z^+=z(x_2)=\dots=z(x_{N-1})=z^-$, which contradicts $z^+\neq z^-$ and shows that such a path cannot exist. In particular, this shows that any atom in $X$ is connected to at most one of the boundary regions.\\
\noindent {\bf Step 3:} {\it Reduction to a subset of $\mc{L}(z^+)\cup\mc{L}(z^-)$.} We claim that we can assume that $X\subset\mc{L}(z^+)\cup\mc{L}(z^-)$. To this end, assume there exists $x\in X\setminus(\mc{L}(z^+)\cup\mc{L}(z^-))$ and notice that, as $X \in\mathrm{Adm}_{1,\hat \lambda}^{(z^+,z^-)}(Q^\nu_T)$,  necessarily this implies $\overline{B}_{r_{\mathrm{int}}}(x)\subset Q^\nu_T$. By Step~1(1),(2) there exists $z(x)\in\mc{Z} \setminus \{z^+,z^-\}$ such that $\{x\}\cup\mc{N}(x)\subset\mc{L}(z(x))$. We claim that such an $x$ is not connected to $X\cap\p^\pm_{\hat{\lambda}}Q^\nu_T$. Indeed, assume the contrary, i.e., there exists a path $p$ of minimal length $p=(x,x_1,\dots,x_N) \subset X$ such that $x_N\in X \cap (\p^+_{\hat \lambda}Q^\nu_T\cup \p^-_{\hat \lambda}Q^\nu_T)$ and $x_i\notin\p^+_{\hat \lambda}Q^\nu_T\cup  \p^-_{\hat \lambda}Q^\nu_T$ for all $i\in \{1,\ldots,N-1\}$. By Step~1(1) we can assume that $\#\mc{N}(x_i)\geq d_{\mathrm{unique}}$ for all $i\in\{1,\dots,N-1\}$. But then, arguing as in Step~2, this yields a contradiction. Using \ref{H2} and Lemma~\ref{lem:elementary-properties}(viii),(ix), we can remove any connected component not  connected to $X \cap (\p^+_{\hat \lambda}Q^\nu_T\cup \p^-_{\hat \lambda}Q^\nu_T)$  without increasing the energy. Clearly, after removing these components, we still have $X\in\mathrm{Adm}_{1,\hat\lambda}^{(z^+,z^-)}(Q^\nu_T)$. \\
\noindent {\bf Step 4:} {\it Conclusion.} We define $X^\pm$  as the union of all maximal connected components containing atoms in $X \cap \p^\pm_{\hat \lambda}Q^\nu_T$. It remains to verify that $X^\pm$ satisfy {\rm (i)}-{\rm (iv)}. Statement {\rm (i)} is a consequence of Step~2 and Step~3. Indeed, due to Step~3, we have $X^\pm \subset \mathcal{L}(z^\pm)$. Clearly $X^\pm$ contains points in $\mathcal{L}(z^\pm)$ (namely all the points in $X \cap \partial^\pm_{\hat \lambda}Q^\nu_T$) and  due to Step~1,2, and the choice of $X^\pm$ {\rm (ii)} is satisfied. Statement {\rm (iii)} is satisfied due to the fact that $X \in \mathrm{Adm}_{1,\lambda}^{(z^+,z^-)}(Q^\nu_T) $ and $X^\pm$ where chosen as maximal connected components of $X$ containing  $X \cap \p^\pm_{\hat \lambda}Q^\nu_T$. Indeed,  we have $X^\pm \cap \partial^c_{\lambda}Q^\nu_T \subset X \cap \partial^c_{\lambda}Q^\nu_T =\emptyset$ and $\mathcal{L}(z^\pm)\cap \p^\pm_{\hat \lambda}Q^\nu_T =X \cap \p^\pm_{\hat \lambda}Q^\nu_T = X^\pm \cap \p^\pm_{\hat \lambda}Q^\nu_T$. Finally, {\rm (iv)} follows from Step~1-3, {\rm (ii)}, Lemma~\ref{lem:elementary-properties}(viii),(ix), and from the choice of $X^\pm$. 
\end{proof}
We will now show that the asymptotic cell problem is independent of the choice of the thickness $\lambda$ of the boundary layer, provided the thickness is big enough to ensure that the boundary values are attained and small enough in order to give an energy contribution that vanishes as the side-lengths of the cube tend to infinity.

\begin{lem}\label{lem:independence-of-bdry-layer}
Let $E_{\mathrm{cell}}$ satisfy {\rm \ref{H1}}--{\rm\ref{H5}}. Let $z^\pm\in\mc{Z}$, $\nu\in\mathbb{S}^{d-1}$, and $ 6r_{\mathrm{int}} <\lambda_1 <\lambda_2 $ be given. Then, it holds
\begin{align*}
\min\big\{\liminf_{T\to\infty}\frac{1}{T^{d-1}}\inf\big\{E_1(X,Q^\nu_T(y_T))&\mid y_T\in\R^d, X \in \mathrm{Adm}_{1,\lambda_1}^{(z^+_T,z^-_T)}(Q_T^\nu(y_T))\big\}\mid\\&\qquad\qquad \{z^\pm_T\}_T\subset\mc{Z}\text{ with }z^\pm_T\to z^\pm\big\}\\=\min\big\{\liminf_{T\to\infty}\frac{1}{T^{d-1}}\inf\big\{E_1(X,Q^\nu_T(y_T))&\mid y_T\in\R^d, X \in \mathrm{Adm}_{1,\lambda_2}^{(z^+_T,z^-_T)}(Q_T^\nu(y_T))\big\}\mid\\&\qquad\qquad  \{z^\pm_T\}_T\subset\mc{Z}\text{ with }z^\pm_T\to z^\pm\big\}\,.
\end{align*}
\end{lem}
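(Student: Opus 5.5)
We prove two inequalities, writing $\Phi_{\lambda}(z^+,z^-,\nu)$ for the left or right hand side with boundary thickness $\lambda$. The guiding observation is that the two admissible classes differ only in a boundary layer of width comparable to $\lambda_2$. Modifying a configuration inside this layer costs at most $C\lambda_2 T^{d-2}$ atoms, and hence at most $C\lambda_2 T^{d-2}$ energy by \ref{H2}. This is $o(T^{d-1})$, so it disappears after dividing by $T^{d-1}$. The sequences $z_T^\pm$ and centres $y_T$ will be kept unchanged in both directions, so the outer minimum over $z_T^\pm\to z^\pm$ passes through directly.

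\textbf{Inequality $\Phi_{\lambda_1}\le\Phi_{\lambda_2}$.} This direction is almost immediate: $\partial^\pm_{\lambda_1}Q^\nu_T\subset\partial^\pm_{\lambda_2}Q^\nu_T$ and $\partial^c_{\lambda_1}Q^\nu_T\subset\partial^c_{\lambda_2}Q^\nu_T$, so $\mathrm{Adm}_{1,\lambda_2}\subset\mathrm{Adm}_{1,\lambda_1}$. The inequality follows from the inclusion of the admissible classes.

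\textbf{Inequality $\Phi_{\lambda_2}\le\Phi_{\lambda_1}$.} This is the substantive direction. Take $X\in\mathrm{Adm}^{(z_T^+,z_T^-)}_{1,\lambda_1}(Q^\nu_T(y_T))$ nearly optimal. We cannot simply enlarge the prescribed region $\partial_{\lambda_2}Q^\nu_T$ by overwriting $X$ there, since that may destroy the minimal distance in \ref{H1}. Instead we work in the rescaled cube $Q^\nu_{T'}(y_T)$ with $T'=T-2(\lambda_2-\lambda_1)$, chosen so that the $\lambda_1$-layer of the original cube lies in the interior.

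Concretely, define $Y$ as follows:
\begin{itemize}
\item $Y=X$ on $Q^\nu_{T-\lambda_1}(y_T)$;
\item $Y=\mathcal{L}(z_T^\pm)$ on the part of $\partial_{\lambda_2}Q^\nu_T\setminus Q^\nu_{T-\lambda_1}$ at signed distance $\ge r_{\mathrm{int}}$ above, respectively below, the hyperplane;
\item $Y=\emptyset$ on the remaining central strip $\partial^c$.
\end{itemize}
Since $X$ already equals $\mathcal{L}(z_T^\pm)$ on $\partial^\pm_{\lambda_1}Q^\nu_T$, the lattice extension outward is consistent and $Y$ still satisfies \ref{H1}. The only problematic points lie near $\partial^c$. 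There we empty an $r_{\mathrm{int}}$-neighbourhood of the strip $\{|\langle x-y_T,\nu\rangle|\le r_{\mathrm{int}}\}$ within the layer, which also keeps distances $\ge1$.

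Then $Y\in\mathrm{Adm}_{1,\lambda_2}(Q^\nu_T)$ after a harmless relabelling; if one insists on the same cube, shrink as above and use $T'/T\to1$. The energy splits into three parts:
\begin{itemize}
\item points whose $r_{\mathrm{int}}$-ball lies in the unchanged region contribute $E_1(X,\cdot)$, by Lemma~\ref{lem:elementary-properties}(x);
\item points in the pure lattice region contribute zero, by \ref{H3};
\item the remaining points lie either in an $O(r_{\mathrm{int}})$-neighbourhood of the interface between old and new regions, or in the emptied strip near $\partial^c$. By Lemma~\ref{lem:elementary-properties}(v) there are at most $C(\lambda_2+r_{\mathrm{int}})T^{d-2}$ such points, and each contributes at most $C$ by \ref{H2}.
\end{itemize}
Altogether,
\[E_1(Y,Q^\nu_T)\le E_1(X,Q^\nu_T)+C\lambda_2T^{d-2}.\]
Dividing by $T^{d-1}$ and taking $\liminf$ gives the claim.

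The main obstacle is the gluing near the jump strip $\partial^c$ and near the lateral corners. There the two lattices $\mathcal{L}(z_T^+)$ and $\mathcal{L}(z_T^-)$, together with the unknown interior configuration, must coexist without violating \ref{H1}. The first remedy to try is to delete all atoms within distance $2r_{\mathrm{int}}$ of the lateral strip $\{|\langle x-y_T,\nu\rangle|\le \lambda_2\}\cap\partial_{\lambda_2}Q^\nu_T$. This region has volume $O(\lambda_2^2T^{d-2})$. Deleting points only increases pairwise distances, so finiteness of the energy is preserved, and the additional cost remains $o(T^{d-1})$.
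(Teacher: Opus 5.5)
Your plan is the paper's. The inclusion $\mathrm{Adm}_{1,\lambda_2}\subset\mathrm{Adm}_{1,\lambda_1}$ gives one inequality, and you identify the trivial direction correctly; the paper's displayed target inequality has the two sides swapped, but its construction proves exactly the direction you call substantive. For that direction one pads $X$ with the two half-lattices at cost $O(T^{d-2})$.

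\textbf{The key step fails as written.} Since $\lambda_2>\lambda_1$, the layer $\partial_{\lambda_2}Q^\nu_T(y_T)$ contains $Q^\nu_{T-\lambda_1}(y_T)\setminus Q^\nu_{T-\lambda_2}(y_T)$. There your $Y$ equals $X$, and $X$ is unconstrained, so $Y\notin\mathrm{Adm}^{(z_T^+,z_T^-)}_{1,\lambda_2}(Q^\nu_T(y_T))$. In fact $X$ already equals $\mathcal{L}(z^\pm_T)$ on $\partial^\pm_{\lambda_1}Q^\nu_T(y_T)$ and is empty on $\partial^c_{\lambda_1}Q^\nu_T(y_T)$. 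Hence your $Y$ agrees with $X$ on all of $Q^\nu_{T+\lambda_1}(y_T)$, and $E_1(Y,Q^\nu_T(y_T))=E_1(X,Q^\nu_T(y_T))$: nothing has been changed on that cube. Shrinking to $T'=T-2(\lambda_2-\lambda_1)$ goes the wrong way, because it pushes the $\lambda_2$-layer further into the region where $X$ is arbitrary.

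What is needed is to \emph{enlarge} the cube, as the paper does:
\begin{itemize}
\item set $Y:=X$ in $Q^\nu_T(y_T)$;
\item set $Y:=\mathcal{L}(z^\pm_T)$ in $\{\pm\langle x-y_T,\nu\rangle\geq r_{\mathrm{int}}\}\setminus Q^\nu_T(y_T)$;
\item set $Y:=\emptyset$ otherwise.
\end{itemize}
Test $Y$ on $Q^\nu_{T_2}(y_T)$ with $T_2\geq T+\lambda_2-\lambda_1$ (the paper takes $T_2=T+\lambda_2$); its $\lambda_2$-layer then lies outside $Q^\nu_{T-\lambda_1}(y_T)$. Lemma~\ref{lem:elementary-properties}(iv),(x) give $E_1(Y,Q^\nu_{T_2}(y_T))=E_1(X,Q^\nu_T(y_T))+E_1(Y,Q^\nu_{T_2}(y_T)\setminus Q^\nu_T(y_T))$. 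One concludes with $z^\pm_{T_2}:=z^\pm_T$ and $T_2/T\to1$. If your ``harmless relabelling'' meant placing a $\lambda_1$-competitor for a smaller cube inside the $\lambda_2$-cube, that is this construction. But the $Y$ and the estimate you actually write live on one and the same cube.

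\textbf{Counting.} In the added shell, every atom at distance more than $2r_{\mathrm{int}}$ from the hyperplane $\{\langle x-y_T,\nu\rangle=0\}$ satisfies $Y\cap\overline{B}_{r_{\mathrm{crys}}}(x)=\mathcal{L}(z^\pm_T)\cap\overline{B}_{r_{\mathrm{crys}}}(x)$. It therefore has zero energy by \ref{H3}, so only a slab of volume $O(r_{\mathrm{int}}\lambda_2T^{d-2})$ contributes. Counting all atoms in an $O(r_{\mathrm{int}})$-neighbourhood of the interface between old and new regions, as you do, gives order $T^{d-1}$ and would not suffice.

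\textbf{The deletions.} The deletions you propose near $\partial^c$ (an $r_{\mathrm{int}}$-neighbourhood of the central strip, or the strip $\{|\langle x-y_T,\nu\rangle|\leq\lambda_2\}$ in the layer) would in general violate condition (ii) of the admissible class. This is because $\partial^\pm_{\lambda_2}$ contains every layer point with $\pm\langle x-y_T,\nu\rangle\geq r_{\mathrm{int}}$. They are also unnecessary. The padded $Y$ coincides with $X$ on $Q^\nu_{T+\lambda_1}(y_T)$, and with the two half-lattices outside $Q^\nu_T(y_T)$. These half-lattices are at mutual distance at least $2r_{\mathrm{int}}\geq1$. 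Any pair at distance less than $1$ lies in one of these two overlapping regions, so \ref{H1} holds automatically.
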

\begin{proof} Note that, as $\mathrm{Adm}_{1,\lambda_2}^{(z^+,z^-)}(Q_T^\nu(y)) \subset  \mathrm{Adm}_{1,\lambda_1}^{(z^+,z^-)}(Q_T^\nu(y))$ for all $ y\in\R^d, z^\pm \in \mathcal{Z},T >0$, and $\lambda_1 <\lambda_2$, we only need to show 
\begin{align}\label{ineq-proof:bdry-layer}
\begin{split}
\min\big\{\liminf_{T\to\infty}\frac{1}{T^{d-1}}\inf\big\{E_1(X,Q^\nu_T(y_T))&\mid y_T\in\R^d, X \in \mathrm{Adm}_{1,\lambda_1}^{(z^+_T,z^-_T)}(Q_T^\nu(y_T))\big\}\mid\\&\qquad\qquad \{z^\pm_T\}_T\subset\mc{Z}\text{ with }z^\pm_T\to z^\pm\big\}\\\leq\min\big\{\liminf_{T\to\infty}\frac{1}{T^{d-1}}\inf\big\{E_1(X,Q^\nu_T(y_T))&\mid y_T\in\R^d, X \in \mathrm{Adm}_{1,\lambda_2}^{(z^+_T,z^-_T)}(Q_T^\nu(y_T))\big\}\mid\\&\qquad\qquad  \{z^\pm_T\}_T\subset\mc{Z}\text{ with }z^\pm_T\to z^\pm\big\}\,.
\end{split}
\end{align}
In order to prove this fix $z^\pm \in \mathcal{Z}$, $y \in \mathbb{R}^d$, and $T >0$ and let $X \in \mathrm{Adm}_{1,\lambda_1}^{(z^+,z^-)}(Q_T^\nu(y))$ be such that $E_1(X) <+\infty$. We define $T_{\lambda_2} = T+\lambda_2$ and define $X_{\lambda_2}$ by
\begin{align*}
X_{\lambda_2} = \begin{cases} X &\text{in } Q_T(y)\,,\\
\mathcal{L}(z^\pm) &\text{in } \{x \in \mathbb{R}^d \setminus Q_T(y) \mid \pm \langle x-y,\nu\rangle \geq r_{\mathrm{int}} \}\,, \\
\emptyset &\text{otherwise.}
\end{cases}
\end{align*}
Clearly, $X_{\lambda_2} \in \mathrm{Adm}_{1,\lambda_2}^{(z^+,z^-)}(Q_{T_{\lambda_2}}^\nu(y))$ and therefore
\begin{align}\label{ineq-proof:bdry-layer-2}
\inf\big\{E_1(X,Q^\nu_{T_{\lambda_2}}(y))\mid y\in\R^d, X \in \mathrm{Adm}_{1,\lambda_2}^{(z^+,z^-)}(Q_{T_{\lambda_2}}^\nu(y))\big\} \leq E_1(X_{\lambda_2},Q^\nu_{T_{\lambda_2}}(y))\,.
\end{align}
Using Lemma~\ref{lem:elementary-properties}(iv),(x), and $X \in \mathrm{Adm}_{1,\lambda_1}^{(z^+,z^-)}(Q_T^\nu(y))$ we have
\begin{align}\label{eq:additive-decomp-lem62}
\begin{split}
E_1(X_{{\lambda_2}},Q^\nu_{T_{\lambda_2}}(y))&=E_1(X_{{\lambda_2}},Q^\nu_{T}(y))+E_1(X_{{\lambda_2}},Q^\nu_{T_{\lambda_2}}(y)\setminus Q^\nu_T(y))\\& =E_1(X,Q^\nu_{T}(y))+E_1(X_{{\lambda_2}},Q^\nu_{T_{\lambda_2}}(y)\setminus Q^\nu_T(y))\,.
\end{split}
\end{align}
To estimate the second summand, define  $ A_T:= Q^\nu_{T_{\lambda_2}}(y)\setminus Q^\nu_T(y)\cap\{x\mid\vert \langle x-y,\nu\rangle\vert \leq 2 \mathrm{r}_{\mathrm{int}}\}$. By definition of $X_{\lambda_2}$ and \ref{H3} we have $E_1^{\mathrm{cell}}(x,X_{\lambda_2})=0$  for all $x \in (X_{\lambda_2} \cap  Q^\nu_{T_{\lambda_2}}(y)\setminus Q^\nu_T(y)) \setminus A_T$ and $\mc{L}^d((A_T)_1)\leq Cr_{\mathrm{int}}\lambda_2 T^{d-2}$ for some dimensional constant $C>0$. Thus, by Lemma~\ref{lem:elementary-properties}(v) and \ref{H2} there holds
\begin{align}\label{ineq:estimate-X-lamda2outside}
E_1(X_{{\lambda_2}},Q^\nu_{T_{\lambda_2}}(y)\setminus Q^\nu_T(y))\leq Cr_{\mathrm{int}}\lambda_2 T^{d-2}\,.
\end{align}
Using \eqref{ineq-proof:bdry-layer-2}, \eqref{eq:additive-decomp-lem62}, \eqref{ineq:estimate-X-lamda2outside}, dividing  by $T^{d-1}$ and taking $\liminf$ as $T$ tends to infinity yields \eqref{ineq-proof:bdry-layer}  by the arbitrariness of $z^\pm\in \mathcal{Z},y\in\R^d,T>0$ and $X \in \mathrm{Adm}_{1,\lambda_1}^{(z^+,z^-)}(Q_T^\nu(y))$.
\end{proof} 
\section[Cell Formula II]{Cell Formula II: Vacuum energy}\label{sec:vacuum-energy}
In this section, we show that the surface energy density decomposes into twice the vacuum energy. To this end, we prove some preparatory lemmas that will allow us to conclude the proof. In this section, it is convenient to rescale the problem in order to have a constant lattice spacing (normalized to be $1$), whereas the size of the cubes grows. Using our rescaling property in Lemma~\ref{lem:elementary-properties}(vii) for $T>0$ we can rewrite \eqref{def:Phi} as
\begin{align} \label{def:Phi-equivalent}
\begin{split}
\Phi(z^+,z^-,\nu):=\min\big\{&\liminf_{T\to\infty}\frac{1}{T^{d-1}}\inf\big\{E_1(X,Q^\nu_T(y_T))\mid \\& \quad y_T\in\R^d,X \in \mathrm{Adm}_{1,\lambda}^{(z^+_T,z^-_T)}(Q^\nu_T(y_T))\big\}\mid
\{z^\pm_T\}_T\subset\mc{Z}\text{ with }z^\pm_T\to z^\pm\big\}\,
\end{split}
\end{align} 
and we set
\begin{align}\label{def:Phi-vac}
\Phi_{\mathrm{vac}}(z,\nu):=\Phi(z,\textbf{0},\nu)\,.
\end{align}
Recall that for the vacuum energy, using that $z^-_T\to\textbf{0}\lra z^-_T=\textbf{0}$ for $T$ large enough, it suffices to consider converging sequences in the upper half cube, while on the lower half we can assume vacuum boundary conditions for $T$ large enough. 

 In the following, we derive several properties of the asymptotic cell formula determining the energy density for the solid-vacuum transition. More precisely, we show that the centers of the cubes defining the cell-problem can be chosen arbitrarily (Lemma~\ref{lem:center-independence}) and the asymptotic-cell problem is independent of the orientation of the cube provided that one side is normal to $\nu$ (Lemma~\ref{lem:independence-of-orientation}). This allows us to conclude that we can replace converging boundary values by fixed ones (Lemma~\ref{lem:fixed-bdryvalues}).

\subsection{Independence of cube centers}\label{subsec:centers}
The goal of this subsection is to show the following:
\begin{lem}\label{lem:center-independence}
Let $E_{\mathrm{cell}}$ satisfy {\rm \ref{H1}}--{\rm \ref{H10}}. Let $\nu\in\mathbb{S}^{d-1},z\in\mc{Z}, \lambda > 8r_{\mathrm{int}}$, and $\{x_T\}_T \subset \mathbb{R}^d$ be any sequence of centers. Then, there holds
\begin{align}\label{eq-lem:center-independence}
\begin{split}
\Phi_{\mathrm{vac}}(z,\nu)=\min\big\{\liminf_{T\to\infty}\frac{1}{T^{d-1}}\inf\big\{E_1(X,Q^\nu_T(x_T))\mid &X \in \mathrm{Adm}_{1,\lambda}^{(z_T,\textbf{0})}(Q_T^\nu(x_T))\big\}\mid\\&\{z_T\}_T\subset\mc{Z}\text{ with }z_T\to z\big\}\,.
\end{split}
\end{align}
Furthermore, the choice $z_T \to z$ as $T\to +\infty$ can be chosen independently of $\{x_{T}\}_T$ and the convergence is uniform with respect to $x_T$.
\end{lem}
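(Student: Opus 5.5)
The inequality ``$\leq$'' in \eqref{eq-lem:center-independence} is immediate: for any fixed sequence of centers $\{x_T\}_T$ and any $z_T\to z$, the quantity on the right is an admissible competitor in \eqref{def:Phi-equivalent} with $y_T=x_T$. It therefore suffices to prove ``$\geq$''. Fix a sequence $\{x_T\}_T$. By \eqref{def:Phi-equivalent} there are $z_T\to z$, centers $y_T$ and configurations $X_T\in\mathrm{Adm}_{1,\lambda}^{(z_T,\mathbf 0)}(Q^\nu_T(y_T))$ that are almost optimal. The task is to move such a competitor from the cube centered at $y_T$ to the cube centered at $x_T$, keeping the boundary datum close to $z_T$. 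Simply translating $X_T$ by $x_T-y_T$ does not work directly, because $\mathcal{L}(z_T)+(x_T-y_T)=\mathcal{L}(z_T')$ with $z_T'=(R_T,\tau_T+R_T^{-1}(x_T-y_T),1)$, and $z_T'$ need not converge to $z$. My plan is to use a translation $w_T$ that differs from $x_T-y_T$ only by a bounded error. I choose $w_T\in R_T L\mathbb{Z}^d$, which is a period of $\mathcal{L}(z_T)$ by \ref{L3}, with $|w_T-(x_T-y_T)|\leq C$, where $C$ depends only on $L$ and not on $R_T$. Then $X_T+w_T\in\mathrm{Adm}_{1,\lambda}^{(z_T,\mathbf 0)}(Q^\nu_T(y_T+w_T))$ has the same boundary label $z_T$. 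Its energy is unchanged by \ref{H5} together with Lemma~\ref{lem:elementary-properties}(ii), and its center lies within distance $C$ of $x_T$.

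The remaining step is to absorb the bounded shift of center. For this I plan to compare cubes of slightly different size, in the spirit of the proof of Lemma~\ref{lem:independence-of-bdry-layer}. Set $T'=T+2C+2\lambda$, so that $Q^\nu_{T'}(x_T)\supset Q^\nu_T(y_T+w_T)$ with a margin of at least $\lambda$. Define the configuration $\tilde X$ to be $X_T+w_T$ inside $Q^\nu_T(y_T+w_T)$, $\mathcal{L}(z_T)$ in $\{\langle x-x_T,\nu\rangle\geq r_{\mathrm{int}}\}$ outside that cube, and $\emptyset$ otherwise. In the upper frame the configuration coincides with a single lattice, and it is empty below, so the energy only increases in the slab $\{|\langle x-x_T,\nu\rangle|\lesssim r_{\mathrm{int}}+C\}$ intersected with the frame. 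By Lemma~\ref{lem:elementary-properties}(v) and \ref{H2} this extra energy is bounded by $C(\lambda+C)T^{d-2}$. Here I need to check carefully that the two half-boundary regions of the original cube and the new cube are compatible. This is the main obstacle. The shift of the reference hyperplane by up to $C$ in direction $\nu$ means that the boundary region $\partial^c$ of the new cube, which must be empty, and the lattice region $\partial^+$ may disagree with the original data near $H^\nu$. I would handle this by choosing the offset so that $\langle w_T-(x_T-y_T),\nu\rangle$ is essentially $0$, which cannot be done exactly with lattice vectors. Alternatively, I would fill the mismatch strip of height $\leq C+r_{\mathrm{int}}$ in the frame with $\mathcal{L}(z_T)$ or with vacuum, and estimate its cost by $CT^{d-2}$ as above. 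If necessary I would also allow a thicker layer and invoke Lemma~\ref{lem:independence-of-bdry-layer}, which shows that the value does not depend on $\lambda$. Finally, I rescale from $T'$ to $T$ using $T'/T\to1$, which is harmless after dividing by $T^{d-1}$ since we are taking a $\liminf$ along a sequence. This step should be done along a reparametrized family of side lengths.

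Uniformity and independence follow from the construction. The sequence $z_T$ is the one chosen for the optimal $y_T$; it does not depend on $\{x_T\}_T$. The error terms $C(\lambda+C)T^{d-2}/T^{d-1}$ depend only on $\mathcal{L}$, $r_{\mathrm{int}}$ and $\lambda$, so the convergence is uniform in $x_T$. The hypothesis \ref{H8}--\ref{H10} and $\lambda>8r_{\mathrm{int}}$ are only needed if I route the argument through Lemma~\ref{lem:separation-2-lattices}. That route would be useful to ensure that the competitor near the lower boundary is vacuum-separated, so that pasting the empty frame below creates no new interactions.
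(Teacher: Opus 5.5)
Your proposal is correct and is essentially the paper's proof. Shift the almost-optimal competitor by a period of $\mathcal{L}(z_T)$ (correctly taken in $R_TL\mathbb{Z}^d$) so that its center lies within a bounded distance of $x_T$, embed it in a slightly larger cube centered at $x_T$ with $\mathcal{L}(z_T)$ above and vacuum below, and pay only $CT^{d-2}$ for the mismatch near $H^\nu$, leaving the labels $z_T$ untouched. The paper additionally uses Lemmas~\ref{lem:separation-2-lattices} and~\ref{lem:independence-of-bdry-layer} to assume $X_T\subset\mathcal{L}(z_T)$, but your ``main obstacle'' does not actually arise and that reduction is not needed: with your margin the new boundary layer lies entirely outside the old cube, so admissibility is automatic, and the old boundary layer already forces $X_T+w_T\subset\mathcal{L}(z_T)$ near the interface, which yields \ref{H1} for the pasted configuration (state this explicitly before invoking Lemma~\ref{lem:elementary-properties}(v)).
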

\begin{proof}
Let $\{x_T\}_T \subset \mathbb{R}^d$ be any sequence of centers. Notice that we always have
\begin{align*}
\Phi_{\mathrm{vac}}(z,\nu)\leq\min\big\{\liminf_{T\to\infty}\frac{1}{T^{d-1}}\inf\big\{E_1(X,Q^\nu_T(x_T))\mid &X \in \mathrm{Adm}_{1,\lambda}^{(z_T,\textbf{0})}(Q_T^\nu(x_T))\big\}\mid\\&\{z_T\}_T\subset\mc{Z}\text{ with }z_T\to z\big\}\,.
\end{align*}
so that it suffices to prove
\begin{align}\label{ineq-lemproof:center-independence}
\begin{split}
\Phi_{\mathrm{vac}}(z,\nu)\geq\min\big\{\liminf_{T\to\infty}\frac{1}{T^{d-1}}\inf\big\{E_1(X,Q^\nu_T(x_T))\mid &X \in \mathrm{Adm}_{1,\lambda}^{(z_T,\textbf{0})}(Q_T^\nu(x_T))\big\}\mid\\&\{z_T\}_T\subset\mc{Z}\text{ with }z_T\to z\big\}\,.
\end{split}
\end{align}
To this end, let $\{X_T\}_T,\{y_T\}_T,\{z_T\}_T$ be an optimal sequence for $\Phi_{\mathrm{vac}}(z,\nu)$, namely a sequence such that for $X_T \in \mathrm{Adm}_{1,\lambda}^{(z_T,\textbf{0})}(Q_T^\nu(y_T))$, up to a subsequence (not relabeled), there holds
\begin{align}\label{ineq:optimal-sequence-centerlem}
\lim_{T\to\infty}\frac{1}{T^{d-1}}E_1(X_T,Q^\nu_T(y_T))=\Phi_{\mathrm{vac}}(z,\nu)\,.
\end{align} 
Due to Lemma~\ref{lem:separation-2-lattices} and Lemma~\ref{lem:independence-of-bdry-layer} the sequence $X_T$ can be chosen such that $X_T \subset \mathcal{L}(z_T)$ for all $T$. Using \ref{L3}, there exists $k_T \in  L\mathbb{Z}^d$ and a constant $C_{\mathcal{L}}>0$ (not depending on $T$) such that for $\overline{y}_T:=y_T+k_T$ we have  $|x_T-\overline{y}_T| \leq C_{\mathcal{L}}$. Now define
\begin{align}\label{def:Xhat-lem-independence}
\hat{X}_T := \begin{cases} X_T+k_T & \text{in } Q^\nu_T(\overline{y}_T)\,,\\
\mathcal{L}(z_T) &\text{in } \{x \in \mathbb{R}^d \mid \langle x-x_T,\nu\rangle \geq r_{\mathrm{int}}\}\setminus Q^\nu_T(\overline{y}_T) \,,\\
\emptyset &\text{otherwise.}
\end{cases} 
\end{align}
We observe that for $\hat{T} = T + \sqrt{d} C_{\mathcal{L}} + \lambda$ there holds $\hat{X}_T \in \mathrm{Adm}_{1,\lambda}^{(z_T,\textbf{0})}(Q_{\hat{T}}^\nu(x_T))$ and clearly, as $X_T \subset \mathcal{L}(z_T)$, we have $\hat{X}_T \subset \mathcal{L}(z_T)$. Therefore,
\begin{align*}
\inf\big\{E_1(X,Q^\nu_{\hat{T}}(x_T))\mid &X \in \mathrm{Adm}_{1,\lambda}^{(z_T,\textbf{0})}(Q_T^\nu(x_T))\big\} \leq E_1(\hat{X}_T,Q^\nu_{\hat{T}}(x_T))\,.
\end{align*}
We set
\begin{align*}
A_T:=&\{x \in \mathbb{R}^d \mid \vert\langle x-x_T,\nu\rangle\vert\leq \sqrt{d}C_\mathcal{L} +2r_{\mathrm{int}}\}\cap(Q^\nu_{\hat{T}}(x_T)\setminus Q^\nu_{T-2r_\mathrm{int}}(\bar{y}_T)) 
\end{align*}
and use Lemma~\ref{lem:elementary-properties}(iv) to obtain 
\begin{align}\label{eq:XhatT-energydecomp}
E_1(\hat{X}_T,Q^\nu_{\hat{T}}(x_T)) = E_1(\hat{X}_T,Q^\nu_{T}(\overline{y}_T) \setminus A_T) + E_1(\hat{X}_T,(Q^\nu_{\hat{T}}(x_T) \cup A_T)\setminus(Q^\nu_{T}(\overline{y}_T))
\end{align}
As $\hat{X}_T \cap \overline{(Q^\nu_T(\overline{y}_T) \setminus A_T)_{r_{\mathrm{int}}}}=(X_T+k_T) \cap \overline{(Q^\nu_T(\overline{y}_T) \setminus A_T)_{r_{\mathrm{int}}} }$, using Lemma~\ref{lem:elementary-properties}(ii),(iii),(x)
\begin{align}\label{ineq:estimateXhat-inside}
E_1(\hat{X}_T,Q^\nu_{T}(\overline{y}_T) \setminus A_T)  = E_1(X_T+k_T,Q^\nu_{T}(\overline{y}_T) \setminus A_T)  \leq E_1(X_T,Q^\nu_{T}(y_T) )\,.
\end{align}
Furthermore, as $\hat{X}_T \cap \overline{B}_{r_{\mathrm{crys}}}(x) = \mathcal{L}(z_T) \cap \overline{B}_{r_{\mathrm{crys}}}(x) $ for all $x \in Q^\nu_{\hat{T}}(x_T)\setminus (Q^\nu_{T}(\bar{x}_T) \cup A_T)$, by \ref{H2}, \ref{H3}, and Lemma~\ref{lem:elementary-properties}(v), we have
\begin{align*}
E_1(\hat{X}_T,(Q^\nu_{\hat{T}}(x_T) \cup A_T)\setminus(Q^\nu_{T}(\overline{y}_T)) \leq E_1(\hat{X}_T, A_T) \leq C \#(\hat{X}_T \cap A_T) \leq C\mathcal{L}^d((A_T)_1) \leq C T^{d-2}\,.
\end{align*} 
This together with \eqref{eq:XhatT-energydecomp}, and \eqref{ineq:estimateXhat-inside} yields
\begin{align}\label{ineq:Xhat-XT}
E_1(\hat{X}_T,Q^\nu_{\hat{T}}(x_T)) \leq E_1(X_T,Q^\nu_{T}(y_T))  + CT^{d-2}\,.
\end{align}
Dividing by $T^{d-1}$, noting that $\hat{T} \geq T$, recalling \eqref{ineq:optimal-sequence-centerlem}, $z_T \to z$ as $T \to +\infty$, and taking the limit as $T\to +\infty$ yields \eqref{ineq-lemproof:center-independence}. Note that this in particular shows that $z_T$ can be chosen independently of $x_T$ and that the convergence is uniform with respect to the choice of the centers $x_T$. This concludes the proof.
\end{proof}
The next corollary shows that the surface energy density does not depend on the translation, which can be chosen to be constant when considering the asymptotic cell problems with converging boundary data.

\begin{cor}\label{cor:translation-independence}
Let $E_{\mathrm{cell}}$ satisfy {\rm \ref{H1}}--{\rm\ref{H10}}. Let $z_1=(R,\tau_1,1),z_2=(R,\tau_2,1)$ and $\nu\in\mathbb{S}^{d-1}$ be given. It holds
\begin{align}\label{coreq:translation-independence}
\Phi_{\mathrm{vac}}(z_1,\nu)=\Phi_{\mathrm{vac}}(z_2,\nu)\,.
\end{align}
Furthermore, for all $z=(R,\tau,1)$ and $z_T\to z$  optimal sequences for $\Phi_{\mathrm{vac}}(z,\nu)$, we can assume $z_T=(R_T,\tau,1)$ for all $T>0$.
\end{cor}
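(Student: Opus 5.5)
The plan is to reduce the translational mismatch to a change of cube center and then invoke Lemma~\ref{lem:center-independence}. The basic observation is that for $z_T=(R_T,\tau_T,1)$ the lattice $\mc{L}(z_T)=R_T(\mc{L}+\tau_T)$ is the translate of $R_T(\mc{L}+\tau)$ by the vector $a_T:=R_T(\tau_T-\tau)$. Since $\mathbb{T}$ is compact, we may choose representatives so that $|a_T|\le C_{\mc{L}}$ uniformly in $T$ (representatives in the fundamental domain spanned by $v_1,\dots,v_d$). Translation invariance, Lemma~\ref{lem:elementary-properties}(ii), then maps admissible configurations for one lattice bijectively onto admissible configurations for the other, with the cube center shifted by a bounded amount.

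First I prove the ``furthermore'' part. Let $\{X_T\},\{y_T\},\{z_T\}$ be an optimal sequence for $\Phi_{\mathrm{vac}}(z,\nu)$ with $z_T=(R_T,\tau_T,1)\to z=(R,\tau,1)$ and $X_T\in\mathrm{Adm}_{1,\lambda}^{(z_T,\textbf{0})}(Q^\nu_T(y_T))$. I set $\tilde z_T:=(R_T,\tau,1)$, which still converges to $z$ because $R_T\to R$. Then
\begin{align*}
\tilde X_T:=X_T-a_T\in\mathrm{Adm}_{1,\lambda}^{(\tilde z_T,\textbf{0})}(Q^\nu_T(y_T-a_T)),
\end{align*}
and $E_1(\tilde X_T,Q^\nu_T(y_T-a_T))=E_1(X_T,Q^\nu_T(y_T))$ by Lemma~\ref{lem:elementary-properties}(ii) applied with $R=\id$. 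The admissibility conditions (ii)--(iii) translate consistently because both the lattice and the boundary strips are shifted by $a_T$. Consequently $(\tilde X_T,y_T-a_T,\tilde z_T)$ is again optimal. Note that here the centers are free, so Lemma~\ref{lem:center-independence} is not even needed for this part.

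For \eqref{coreq:translation-independence}, I take an optimal sequence for $\Phi_{\mathrm{vac}}(z_1,\nu)$, which by the previous step may be assumed to have the form $z_T=(R_T,\tau_1,1)$. I define $w_T:=(R_T,\tau_2,1)\to z_2$ and $b_T:=R_T(\tau_2-\tau_1)$, so that $\mc{L}(w_T)=\mc{L}(z_T)+b_T$ and $|b_T|=|\tau_2-\tau_1|$ is bounded. Translating by $b_T$ gives competitors $X_T+b_T\in\mathrm{Adm}_{1,\lambda}^{(w_T,\textbf{0})}(Q^\nu_T(y_T+b_T))$ with the same energy. Hence $\Phi_{\mathrm{vac}}(z_2,\nu)\le\Phi_{\mathrm{vac}}(z_1,\nu)$ directly from the definition \eqref{def:Phi-equivalent}, and the reverse inequality follows by symmetry. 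If one prefers to keep a prescribed center, Lemma~\ref{lem:center-independence}, with its uniformity in the centers, absorbs the bounded shift.

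The main obstacle is mostly bookkeeping. One has to check that the translated configuration satisfies the boundary conditions in Definition~\ref{def:admissible-set} exactly, not merely up to a bounded layer. This works because the conditions are expressed relative to the cube center, so a rigid shift of both configuration and center preserves them. One also has to handle the identification of $\tau$ as a class in $\mathbb{T}$: the relation $\mc{L}(R,\tau_T,1)=\mc{L}(R,\tau,1)+R(\tau_T-\tau)$ holds for any representatives, so no issue arises with the $G$-quotient in $\mathbb{A}$ once $R_T$ is fixed as a representative.
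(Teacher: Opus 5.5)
Your proof is correct and follows essentially the same route as the paper: translate an optimal sequence together with its cube center by $R_T(\tau'-\tau_T)$, then use Lemma~\ref{lem:elementary-properties}(ii) and the freedom of centers in \eqref{def:Phi-equivalent}, and conclude by symmetry. The only difference is ordering: the paper passes from $z_2$ to $\tilde z_T=(R_T,\tau_1,1)$ in a single step and reads off the ``furthermore'' claim from that construction, whereas you first fix the translation and then replace $\tau_1$ by $\tau_2$.
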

\begin{proof} Let $\{X_T\}_T,\{z_T\}_T$ be an optimal sequence for $\Phi_{\mathrm{vac}}(z_2,\nu)$, i.e., due to Lemma~\ref{lem:center-independence} applied with $x_T=0$ for all $T$, we have $X_T \in \mathrm{Adm}_{1,\lambda}^{(z_T,\textbf{0})}(Q^\nu_T)$, $z_T \to z_2$, and
\begin{align}\label{cor-proof-eq:optimal-sequence}
\liminf_{T\to\infty}\frac{1}{T^{d-1}}E_1(X_T,Q_T^\nu)=\Phi_{\mathrm{vac}}(z_2,\nu)\,.
\end{align}
Setting $\tilde{X}_T:=X_T-R_T\tau_T+R_T\tau_1$, $\tilde{z}_T=(R_T,\tau_1,1)$ and $y_T:=-R_T\tau_T+R_T\tau_1$ it holds $\tilde{X}_T \in \mathrm{Adm}_{1,\lambda}^{(\tilde{z}_T,\textbf{0})}(Q_T^\nu(y_T))$ and $\tilde{z}_T\to z_1$. By Lemma~\ref{lem:elementary-properties}(ii) it holds
\begin{align}
E_1(\tilde{X}_T,Q^\nu_T(y_T))=E_1(X_T,Q^\nu_T)\,.
\end{align}
Additionally, it holds
\begin{align} \label{ineq:phi-vac1-less-phi-vac2}
\Phi_{\mathrm{vac}}(z_1,\nu)\leq\liminf_{T\to\infty}\frac{1}{T^{d-1}}E_1(\tilde{X}_T,Q^\nu_T(y_T))=\liminf_{T\to\infty}\frac{1}{T^{d-1}}E_1(X_T,Q^\nu_T)=\Phi_{\mathrm{vac}}(z_2,\nu)\,.
\end{align}
Exchanging the roles of $z_1$ and $z_2$ in the above proof yields the claim. Now \eqref{coreq:translation-independence} together with \eqref{ineq:phi-vac1-less-phi-vac2} and the precise construction of $\tilde{z}_T$ show also that $\tau_T$ can be chosen to be fixed for all $T$. 
\end{proof}
\subsection{Independence of cube orientation}\label{subsec:orientation}
 The goal of this subsection is to show that the orientation of the cubes is irrelevant, as long as one face has the normal $\nu$. To make this precise, we need the following definitions: Given $\nu\in\mathbb{S}^{d-1}$ and $R\in SO(d)$ such that $Re_d=\nu$, we define the boundary region of $RQ_T$ in analogy to \eqref{def:discrete-boundary} as
\begin{align}\label{def:discrete-boundar-R}
\begin{split}
&\p_\lambda RQ_T(x_0):= RQ_{T+\lambda}(x_0) \setminus RQ_{T-\lambda}(x_0)\,,\\ & \p^\pm_\lambda RQ_T(x_0) := \partial_\lambda RQ_T(x_0) \cap \{ z \in \mathbb{R}^d \colon \pm \langle z-x_0,\nu\rangle \geq r_{\mathrm{int}}\}\,, \\
&\p^c_\lambda RQ_T(x_0) := \partial_\lambda RQ_T(x_0) \setminus (\p^+_\lambda RQ_T(x_0) \cup \p^-_\lambda RQ_T(x_0))\,,
\end{split}
\end{align}
where $RQ_T(x_0) = x_0 + RQ_T$.
Given $z\in \mathcal{Z}$, $ T >0$, $x_0 \in \mathbb{R}^d$ we say that  $X \in \mathrm{Adm}_{1,\lambda}^{(z^+,z^-)}(RQ_T(x_0))$ if it satisfies the following:
\begin{itemize}
\item[(i)] $E_1(X) <+\infty\,,$
\item[(ii)] $X = \mathcal{L}(z^\pm)$  on $\partial_{ \lambda}^\pm RQ_T(x_0)$
\item[(iii)] $X = \emptyset$  on $\partial_{\lambda}^c RQ_T(x_0)\,.$
\end{itemize}
and  we define the corresponding asymptotic cell formula by
\begin{align}\label{def:Phi-R}
\begin{split}
\Phi_{\mathrm{vac}}^R(\nu,z):=\min\big\{\liminf_{T\to\infty}\frac{1}{T^{d-1}}\inf\{&E_1(X,RQ_T(y_T))\mid \\&X \in \mathrm{Adm}_{1,\lambda}^{(z,\textbf{0})}(RQ_T(y_T))\}\mid\{y_T\}_T \subset \mathbb{R}^d\,, z_T\to z \big\}\,.
\end{split}
\end{align}
With this definition it holds $\Phi_{\mathrm{vac}}^{R_\nu}(\nu,z)=\Phi_{\mathrm{vac}}(\nu,z)$ and for $R=R_\nu$ \eqref{def:discrete-boundar-R}  coincides with \eqref{def:discrete-boundary}. With these definitions in mind, we observe that the conclusions of Section~\ref{sec:reduction-two-lattices} and Subsection~\ref{subsec:centers} still hold for this alternate definition. We now show the following:
\begin{lem}\label{lem:independence-of-orientation}
Let $E_{\mathrm{cell}}$ satisfy {\rm\ref{H1}}--{\rm\ref{H10}}. Let $z\in\mc{Z},\nu\in\mathbb{S}^{d-1}$ be given and let $R_1,R_2\in SO(d)$ be given, such that $R_1e_d=R_2e_d=\nu$. Then, it holds
\begin{align}\label{lem-eq:R1-R-2}
\Phi^{R_1}_{\mathrm{vac}}(z,\nu)=\Phi^{R_2}_{\mathrm{vac}}(z,\nu)\,.
\end{align}
In particular, $\Phi(\textbf{0},z,\nu) = \Phi_{\mathrm{vac}}(z,-\nu)$.
\end{lem}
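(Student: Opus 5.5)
By symmetry it suffices to prove $\Phi^{R_1}_{\mathrm{vac}}(z,\nu)\le\Phi^{R_2}_{\mathrm{vac}}(z,\nu)$. The plan is to tile a large cube $R_1Q_{T}$ by many smaller, optimally filled cubes $R_2Q_{T'}$ whose centers lie on the hyperplane $H^\nu$, and to let the ratio $T'/T\to0$ slowly. First fix an optimal sequence $\{X_{T'}\}_{T'}$, $\{z_{T'}\}_{T'}$ for $\Phi^{R_2}_{\mathrm{vac}}(z,\nu)$. By Lemma~\ref{lem:separation-2-lattices} and Lemma~\ref{lem:independence-of-bdry-layer} (which remain valid for the rotated cubes), we may assume $X_{T'}\subset\mathcal{L}(z_{T'})$. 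By Lemma~\ref{lem:center-independence} (also in its $R_2$-version), the same $z_{T'}$ is optimal for arbitrary centers, uniformly in the centers. Given $T$, choose $T'=T'(T)$ with $T'\to\infty$ and $T'/T\to0$. Next, cover $H^\nu\cap R_1Q_{T-2\sqrt d T'}(x_0)$ by the family of cubes $R_2Q_{T'}(x_i)$, with centers $x_i\in H^\nu$ on the grid $R_2(T'\mathbb{Z}^{d-1}\times\{0\})$. The number of these cubes is at most $(T/T')^{d-1}(1+o(1))$. In each cube we place the translated optimal configuration.

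Concretely, I would define $Y_T$ to be the optimal configuration for $\Phi^{R_2}$ at center $x_i$ inside $R_2Q_{T'}(x_i)$. On the part of $R_1Q_{T+\lambda}(x_0)$ lying above $\{\langle x-x_0,\nu\rangle\ge r_{\mathrm{int}}\}$ and not covered by these cubes, I would set $Y_T=\mathcal{L}(z_{T'})$, and everywhere else $Y_T=\emptyset$. This makes $Y_T\in\mathrm{Adm}^{(z_{T'},\mathbf 0)}_{1,\lambda}(R_1Q_T(x_0))$. Since every small configuration is a subset of the single lattice $\mathcal{L}(z_{T'})$ and agrees with it on the upper boundary layer $\partial^+_\lambda$, the gluing across neighbouring small cubes produces no overlaps and no violation of \ref{H1}. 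The gluing with the surrounding lattice is equally harmless. The energy then splits via Lemma~\ref{lem:elementary-properties}(iv),(x) into three parts. The first consists of the interior contributions of the small cubes, at most $\sum_i E_1(X_{T'},R_2Q_{T'})$. The second comes from points within $r_{\mathrm{int}}$ of the lateral faces of the small cubes near $H^\nu$, i.e.\ from $\partial^c$-regions. The third comes from the strip near $\partial R_1Q_T$ close to $H^\nu$, of width $\sim T'$.

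The main obstacle is controlling these gluing errors. Near a common face of two small cubes, at height $|\langle x,\nu\rangle|\le r_{\mathrm{int}}+\lambda$, the configuration is empty on $\partial^c$. Points above this strip see only lattice points, so their cell energy vanishes by \ref{H3}. The remaining points lie in a set of volume $\lesssim \lambda^2 T'^{d-2}$ per cube, so \ref{H2} and Lemma~\ref{lem:elementary-properties}(v) bound their contribution by $C T'^{d-2}$ per cube. Summing over all cubes gives $C(T/T')^{d-1}T'^{d-2}=o(T^{d-1})$. The uncovered frame near $\partial R_1Q_T\cap H^\nu$ has $(d-1)$-measure $\lesssim T^{d-2}T'$, and I would simply leave it empty there (and lattice above). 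Its cost is then $O(T^{d-2}T')=o(T^{d-1})$, as in the estimate \eqref{ineq:estimate-X-lamda2outside}.

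Dividing by $T^{d-1}$, using $z_{T'}\to z$ together with the uniformity in centers, and passing to the limit yields $\Phi^{R_1}_{\mathrm{vac}}\le\Phi^{R_2}_{\mathrm{vac}}$. For the final claim $\Phi(\mathbf 0,z,\nu)=\Phi_{\mathrm{vac}}(z,-\nu)$, I would apply a rotation $O\in SO(d)$ with $O\nu=-\nu$; in odd dimension this is composed with a reflection inside $H^\nu$, and any $O$ with $Oe_d=-e_d$ works. Such a rotation maps $R_\nu Q_T$ to $R'Q_T$ with $R'e_d=-\nu$, and it exchanges $\partial^+$ and $\partial^-$. Combined with the frame invariance \ref{H5}, the first part of the lemma then lets me replace $R'$ by $R_{-\nu}$.
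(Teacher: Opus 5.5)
Your proof of $\Phi^{R_1}_{\mathrm{vac}}\le\Phi^{R_2}_{\mathrm{vac}}$ is essentially the paper's argument. You tile $R_1Q_T$ along $H^\nu$ by $R_2$-cubes supplied uniformly in the centers by Lemma~\ref{lem:center-independence}, and the remaining frame costs only lower order. The only change is that you take a diagonal $T'=T'(T)$ instead of the paper's iterated limits $T\to\infty$, then $S\to\infty$. Using the single lattice $\mathcal{L}(z_{T'})$ both inside and around the small cubes is exactly what makes the glued configuration admissible with boundary data $z_{T'(T)}\to z$.

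The step that does not work is the final claim $\Phi(\mathbf{0},z,\nu)=\Phi_{\mathrm{vac}}(z,-\nu)$.

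\emph{Why rotating the configuration fails.} Suppose you rotate the configuration by $O$ with $O\nu=-\nu$ and invoke \ref{H5}. Then the prescribed lattice becomes $O\mathcal{L}(z_T)=\mathcal{L}(Oz_T)$, not $\mathcal{L}(z_T)$. Moreover, the vacuum region $\{\langle x,\nu\rangle\ge r_{\mathrm{int}}\}$ is carried to $\{\langle x,-\nu\rangle\ge r_{\mathrm{int}}\}$. This is precisely $\partial^+_\lambda R'Q_T$ for $R'=OR_\nu$, since $R'e_d=-\nu$. So relative to the new cube nothing is exchanged: you again have vacuum on the $+$-side and now the lattice $Oz$ on the $-$-side. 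That is only the frame-invariance identity. The first part of the lemma cannot turn it into $\Phi_{\mathrm{vac}}(z,-\nu)$, which requires the lattice $z$ on the $+$-side of a cube with normal $-\nu$.

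\emph{The missing idea.} The configuration should not be moved at all; only the cube is re-parametrized. With $J=\mathrm{diag}(-1,1,\dots,1,-1)\in SO(d)$ one has $JQ=Q$ up to the half-open faces. Hence $R_\nu Q_T$ and $R'Q_T$ with $R':=R_\nu J$ coincide up to a boundary set, while $R'e_d=-\nu$. With respect to $R'$ the layers $\partial^\pm_\lambda$ are swapped. Therefore competitors for $\Phi(\mathbf{0},z,\nu)$ are competitors for $\Phi^{R'}_{\mathrm{vac}}(z,-\nu)$, up to an $O(T^{d-2})$ correction from the half-open faces. This correction is the ``small approximation argument'' the paper mentions. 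The first part of the lemma with $R_1=R'$ and $R_2=R_{-\nu}$ then concludes, and \ref{H5} plays no role.

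For the same reason, your remark that any $O$ with $Oe_d=-e_d$ works is not justified. For such an $O$, the identification of the two cell problems requires $OQ=Q$ up to boundary. Otherwise $R_\nu OQ_T\neq R_\nu Q_T$, and competitors do not transfer directly.
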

\begin{proof} Our goal is to prove
\begin{align}\label{ineq:R1-R2}
\Phi^{R_1}_{\mathrm{vac}}(z,\nu)\leq\Phi^{R_2}_{\mathrm{vac}}(z,\nu)\,.
\end{align}
Clearly, by exchanging the roles of $R_1$ and $R_2$, this shows the first part of the Lemma. Let $1\ll S\ll T$ and let $\{y_T\}_T$ be any given sequence. We set  
\begin{align*}
\mathcal{Z}_{S,T} = \{ x \mid x= y_T + R_2 j\,, j \in S\mathbb{Z}^{d-1} \times \{0\}\,, R_2Q_S(x) \subset R_1Q_{T-\lambda}(y_T)\}\,.
\end{align*}
By Lemma~\ref{lem:center-independence} up to a (non relabeled) subsequence in $S$ and a  sequence $z_S\to z$  as $S\to+\infty$ such that for all $k \in \mathcal{Z}_{S,T}$ we find $X_{k,S} \in \mathrm{Adm}_{1,\lambda}^{(z,\textbf{0})}(R_2Q_S(k))$ such that
\begin{align}\label{ineq:almost-optimalXjS}
\frac{1}{S^{d-1}}E_1(X_{k,S},R_1Q_S(k))\leq\Phi_{\mathrm{vac}}^{R_2}(z,\nu)+\eta_S\,,
\end{align}
where $\eta_S$ is a null sequence independent of $x_{z,S}$.  In the following, we construct a competitor for $\Phi_{\mathrm{vac}}^{R_1}(z,\nu)$ on the cube $R_1Q_T(y_T)$ with the use of the competitors defined on the cubes of side-length $S$.   We define
\begin{align*}
\mathcal{Q}_{S,T}:=\bigcup_{j \in \mathcal{Z}_{S,T}}R_2Q_S(z)
\end{align*}
and
\begin{align}\label{def:XT-orientation}
X_T:=\begin{cases}
X_{k,S}&\text{in }R_2Q_S(k)\,, k \in \mathcal{Z}_{S,T}\,,\\
\mc{L}(z_T)&\text{in }\{x\mid\langle x-y_T,\nu\rangle\geq r_{\mathrm{int}}\}\setminus \mathcal{Q}_{S,T}\,, \\
\emptyset &\text{otherwise.}
\end{cases}
\end{align}
By definition $X_T \in \mathrm{Adm}_{1,\lambda}^{(z,\textbf{0})}(R_1Q_T)$ and therefore
\begin{align}\label{ineq:XT-adm-orientation}
\inf\{ E_1(X,R_1Q_T(y_T))\mid X \in \mathrm{Adm}_{1,\lambda}^{(z,\textbf{0})}(R_1Q_T(y_T))\} \leq E_1(X_T,R_1Q_T(y_T))\,.
\end{align}
We claim that
\begin{align}\label{ineq:XT-XS-orientation}
E_1(X_T,R_1Q_T(y_T))\leq T^{d-1}(\Phi_{\mathrm{\mathrm{vac}}}^{R_2}(z,\nu)+\eta_S)+CS^{2}T^{d-2}\,.
\end{align}
Once this is proven the inequality \eqref{ineq:R1-R2} follows by using \eqref{ineq:XT-adm-orientation}, dividing by $T^{d-1}$, sending $T\to +\infty$, and finally sending $S\to +\infty$. We now prove \eqref{ineq:XT-XS-orientation}. First note that, as $E_1(X_T,R_1Q_T(y_T))<+\infty$, using Lemma~\ref{lem:elementary-properties}(iv), we have
\begin{align}\label{eq:decomposition-XTR1}
E_1(X_T,R_1Q_T(y_T))=\sum_{k\in \mathcal{Z}_{S,T}} E_1(X_T,R_2Q_S(k))+E_1(X_T,R_1Q_T(y_T)\setminus \mathcal{Q}_{S,T})\,.
\end{align}
Now for each $k \in \mathcal{Z}_{S,T}$ as $ X_T \cap \overline{(R_2Q_S(k))_{r_\mathrm{int}}} = X_{k,S} \cap \overline{(R_2Q_S(k))_{r_\mathrm{int}}}$, due to Lemma~\ref{lem:elementary-properties}(x), we have
\begin{align*}
E_1(X_T,R_2Q_S(k))= E_1(X_{k,S},R_2Q_S(k))\,.
\end{align*}
Thus, using \eqref{ineq:almost-optimalXjS} and the fact that $\#\mathcal{Z}_{S,T} \leq \frac{T^{d-1}}{S^{d-1}}$, we obtain
\begin{align*}
\sum_{k\in \mathcal{Z}_{S,T}} E_1(X_T,R_2Q_S(k)) =\sum_{k\in \mathcal{Z}_{S,T}}E_1(X_{k,S},R_2Q_S(k)) \leq T^{d-1}\left(\Phi_{\mathrm{vac}}^{R_2}(z,\nu) +\eta_S\right)\,.
\end{align*}
Therefore, in order to show \eqref{ineq:XT-XS-orientation}, recalling \eqref{eq:decomposition-XTR1}, it suffices to show 
\begin{align}\label{ineq:XTR1outsidecubes}
E_1(X_T,R_1Q_T(y_T)\setminus \mathcal{Q}_{S,T}) \leq CS^{2}T^{d-2}\,.
\end{align}
In order to obtain this estimate, we define
\begin{align*}
A_T:=(R_1Q_T(y_T)\setminus R_1Q_{T-2S}(y_T))\cap\{y \in \mathbb{R}^d\mid\vert\langle\nu,y-y_T\rangle\vert\leq S\}.
\end{align*}
Observe that for all $x \in R_1Q_T(y_T)\setminus (\mathcal{Q}_{S,T} \cup A_T)$ we have $X_T \cap \overline{B}_{r_\mathrm{crys}}(x) = \mathcal{L}(z) \cap \overline{B}_{r_\mathrm{crys}}(x)$ and therefore, due to \ref{H3}, $E_{\mathrm{cell}}(x,X_T)=0$. Now for $S$ and thus $T$ large enough it holds $\vert (A_T)_1\vert\leq CT^{d-2}S^{2}$, and therefore, using Lemma~\ref{lem:elementary-properties}(v) and \ref{H2}, we obtain \eqref{ineq:XTR1outsidecubes}. Now, $\Phi(\textbf{0},z,\nu) = \Phi_{\mathrm{vac}}(z,-\nu)$ follows from the fact that 
$R_{-\nu}Je_d=R_{\nu}e_d=\nu$  for $J \in SO(d)$ given by $J=\mathrm{diag}(-1,1,\ldots,1,-1)$ and a small approximation argument, since our cubes are half-open. This concludes the proof.
\end{proof}
\begin{remark}\label{rem:limit} The construction in the proof of Lemma~\ref{lem:independence-of-orientation} shows that the $\liminf$ in the definition of $\Phi_{\mathrm{vac}}$ (and similarly $\varphi_{\mathrm{vac}}$) is actually a limit. 
\end{remark}

\subsection{Converging to fixed boundary values}\label{sec:7.3}
We are now in a position to show that we can replace converging boundary values by fixed ones in the definition of $\Phi_{\mathrm{vac}}$. Namely, we show the following lemma.
\begin{lem}\label{lem:fixed-bdryvalues}
Let $E_{\mathrm{cell}}$ satisfy {\rm\ref{H1}}--{\rm\ref{H10}}. For all $z\in\mc{Z}$ and $\nu\in\mathbb{S}^{d-1}$ and every sequence $\{x_T\}_T$ it holds
\begin{align*}
\Phi_{\mathrm{vac}}(z,\nu)=\varphi_{\mathrm{vac}}(z,\nu)\,.
\end{align*}
\end{lem}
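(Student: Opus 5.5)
The inequality $\Phi_{\mathrm{vac}}(z,\nu)\le \liminf$ of the fixed-boundary problem is immediate: the constant sequence $z_T\equiv z$ is admissible in \eqref{def:Phi-equivalent}, and by Lemma~\ref{lem:center-independence} the centers $x_T$ may be prescribed. After the rescaling of Lemma~\ref{lem:elementary-properties}(vii) with $T=\rho/\varepsilon$, this gives $\Phi_{\mathrm{vac}}(z,\nu)\le\liminf_{\varepsilon\to0}\rho^{1-d}\inf\{\dots\}$ in \eqref{eq:propdensity} for every $\rho$ and $x_0$. The whole work is the converse, namely
\[
\limsup_{T\to\infty}\frac{1}{T^{d-1}}\inf\big\{E_1(X,Q^\nu_T(x_T))\mid X\in\mathrm{Adm}^{(z,\mathbf 0)}_{1,\lambda}(Q^\nu_T(x_T))\big\}\le\Phi_{\mathrm{vac}}(z,\nu)\,.
\]
This would show at the same time that the limit defining $\varphi_{\mathrm{vac}}$ exists, is independent of $x_0,\rho$, and equals $\Phi_{\mathrm{vac}}$. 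The case $z=\mathbf 0$ is trivial, since $z_T=\mathbf 0$ eventually.

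\textbf{Step 1: preparing the optimal sequence.} Let $z=(R,\tau,1)$. Take an optimal sequence for $\Phi_{\mathrm{vac}}(z,\nu)$ and apply Lemma~\ref{lem:separation-2-lattices}, Lemma~\ref{lem:independence-of-bdry-layer} and Corollary~\ref{cor:translation-independence}. This gives $X_S\subset\mathcal L(z_S)$ with $z_S=(R_S,\tau,1)$ and $R_S\to R$ in $\mathbb A$. Using that $SO(d)\to\mathbb A$ is a local diffeomorphism, we lift to representatives $R_S=O_SR$ with $O_S\in SO(d)$ and $O_S\to\mathrm{Id}$. Define $\tilde X_S:=O_S^{-1}X_S$. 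Since $O_S^{-1}R_S(\mathcal L+\tau)=R(\mathcal L+\tau)$, we get $\tilde X_S\in\mathrm{Adm}^{(z,\mathbf 0)}_{1,\lambda}(O_S^{-1}R_\nu Q_S(\cdot))$ with the \emph{fixed} lattice $\mathcal L(z)$ as boundary datum. Its energy is unchanged by \ref{H5}. The price is that the cube is now tilted: its normal is $\nu_S:=O_S^{-1}\nu$, with $|\nu_S-\nu|\to0$.

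\textbf{Step 2: staircase tiling.} Fix $S$ large and $T\gg S$. Place translated copies of $\tilde X_S$ in the big cube $Q^\nu_T(x_T)$, with every translation taken from the periodicity group $RL\mathbb Z^d$ of $\mathcal L(z)$ (property \ref{L3}). This keeps each copy compatible with the surrounding $\mathcal L(z)$, at the cost of moving centers by at most $C_{\mathcal L}$. The tilted cubes are arranged in patches of $M^{d-1}$ cubes lying along a hyperplane $H^{\nu_S}$, with $M\sim |\nu_S-\nu|^{-1}$. Along one patch the cubes deviate from $H^\nu$ by at most $O(S)$; between patches the height is reset so that the patches follow $H^\nu$ in a staircase fashion. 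Outside the union of the small cubes we put $\mathcal L(z)$ on the side $\langle x-x_T,\nu\rangle\ge r_{\mathrm{int}}$ and vacuum elsewhere. Since all boundary data are now the same lattice $\mathcal L(z)$, the configuration is admissible and every atom away from the seams is crystallized by \ref{H3}.

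\textbf{Step 3: energy estimate.} As in Lemma~\ref{lem:independence-of-orientation}, Lemma~\ref{lem:elementary-properties}(iv),(v),(x) and \ref{H2} bound the energy by three contributions.
\begin{itemize}
\item The cubes contribute at most $(T/S)^{d-1}\,\cdot S^{d-1}(\Phi_{\mathrm{vac}}(z,\nu)+\eta_S)$, up to the factor $(1+O(|\nu_S-\nu|))$ coming from covering $H^\nu$ by cubes transverse to $\nu_S$.
\item The seams between patches contribute at most $C\,(T/(MS))\,T^{d-2}S^{2}$.
\item The lateral boundary contributes at most $CS^2T^{d-2}$.
\end{itemize}
Dividing by $T^{d-1}$ and letting $T\to\infty$ leaves an error of order $S/M\cdot S^{0}\sim|\nu_S-\nu|\,S$ from the seams. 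To make this vanish, the patch size must be tuned: I would choose $S$ along a subsequence with $|\nu_S-\nu|S\to0$. This is possible by slowing down the index, i.e.\ using cubes of side $S'\le S$ with the same orientation defect: take $S'=S'(S)$ with $S'|\nu_S-\nu|\to0$, and use $X_{S'}$ together with $O_S$. Alternatively one can bound the seam cost by $C S^{d-1}$ per step, independently of the misalignment. Sending $S\to\infty$ then gives the claim.

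\textbf{Main obstacle.} The main obstacle is Step 2–3: controlling the misalignment between the tilted normal $\nu_S$ and $\nu$ without any continuity of $\Phi_{\mathrm{vac}}$ in $\nu$, which is not yet available. The staircase construction, with seam costs balanced against the rate of $O_S\to\mathrm{Id}$, is where care is needed.
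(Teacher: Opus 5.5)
Your tiling strategy can be repaired, but the decisive estimate in Step 3 does not close as written, and your main proposed remedy is not available.

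\textbf{Why the seam estimate fails.} With the volume-type bound $C\,(T/(MS))\,T^{d-2}S^2$, the seam error after dividing by $T^{d-1}$ is of order $S|\nu_S-\nu|$. This is independent of $M$: a height jump $h\sim|\nu_S-\nu|MS$ over a column of width $S$, summed over $T/(MS)$ seams, gives $T^{d-1}S|\nu_S-\nu|$. So tuning the patch size cannot help. The rate of $R_S\to R$ comes from an arbitrary optimal sequence and is uncontrolled, so $S|\nu_S-\nu|$ need not tend to zero along any subsequence.

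\textbf{Why \emph{slowing down the index} fails.} $X_{S'}$ carries the boundary datum $\mathcal L(z_{S'})$ with $R_{S'}=O_{S'}R$. Hence $O_S^{-1}X_{S'}$ has boundary lattice $O_S^{-1}O_{S'}R(\mathcal L+\tau)\neq\mathcal L(z)$, and the copies no longer glue to the surrounding $\mathcal L(z)$. The mismatch along the faces of each cube costs order $S^{d-1}$, hence order $T^{d-1}$ in total. The orientation defect is attached to the scale of the optimal configuration and cannot be transplanted to another scale.

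\textbf{The correct repair.} Your alternative remark is the right fix, but it needs an argument. At a jump of height $h\lesssim S$ between two cubes, or between a cube's boundary layer and the filler, both sides are the same $\mathcal L(z)$ above and vacuum below. So the non-crystallized atoms lie within $r_{\mathrm{int}}$ of a wall of area $\lesssim hS^{d-2}$. Summing gives seam costs $\lesssim(|\nu_S-\nu|+S^{-1})T^{d-1}$ for every $M$, which vanishes as $S\to\infty$ regardless of the rate. In particular the patches and staircase are superfluous: $M=1$ already works.

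\textbf{Comparison with the paper.} The paper avoids the issue by not decoupling the two scales. It takes the near-optimal $X_T$ at the same scale $T$ and rotates it by $M_T=RR_T^{-1}\to\mathrm{Id}$, so the boundary datum becomes the fixed $\mathcal L(z)$ on the tilted cube $M_TQ^\nu_T$. It then embeds this single cube in $Q^\nu_{T_\delta}$ with $T_\delta=(1+C\delta)T$, placing $\mathcal L(z)$ on $\{\langle x,\nu\rangle\ge r_{\mathrm{int}}\}$ and vacuum elsewhere outside it. The extra energy comes from the flat interface in the $O(\delta T)$-annulus and the step of height $O(\delta T)$ along the lateral boundary. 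It is bounded by $C\delta T^{d-1}$, and $\delta\to0$ because the tilt at scale $T$ itself vanishes, so no rates need balancing.

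What your repaired tiling buys is a $\limsup$ bound valid for every $T$, i.e.\ existence of the limit defining $\varphi_{\mathrm{vac}}$. The paper's argument yields only the $\liminf$ along the optimal subsequence and delegates the existence of the limit to Remark~\ref{rem:limit}.
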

\begin{proof}Recalling \eqref{eq:propdensity}, \eqref{def:phivac}, and Lemma~\ref{lem:elementary-properties}(vii), we have that 
\begin{align*}
\varphi_{\mathrm{vac}}(z,\nu) = \liminf_{T\to\infty}\frac{1}{T^{d-1}}\inf\{&E_1(X,Q_T^\nu(x_T))\mid X \in \mathrm{Adm}_{1,\lambda}^{(z,\textbf{0})}(Q^\nu_T(x_T))\}\,.
\end{align*}
 Obviously, we have that
\begin{align*}
\Phi_{\mathrm{vac}}(z,\nu)\leq\liminf_{T\to\infty}\frac{1}{T^{d-1}}\inf\{&E_1(X,Q_T^\nu(x_T))\mid X \in \mathrm{Adm}_{1,\lambda}^{(z,\textbf{0})}(Q^\nu_T(x_T))\} =\varphi_{\mathrm{vac}}(z,\nu)\,.
\end{align*}
In order to conclude the proof, it remains to show
\begin{align}\label{ineq:Phi-vac-fixed-cond}
\Phi_{\mathrm{vac}}(z,\nu)\geq\liminf_{T\to\infty}\frac{1}{T^{d-1}}\inf\{&E_1(X,Q_T^\nu(x_T))\mid X \in \mathrm{Adm}_{1,\lambda}^{(z,\textbf{0})}(Q^\nu_T(x_T))\}\,.
\end{align}
To this end, let $z=(R,\tau,1)\in\mc{Z}$ and $\nu\in\mathbb{S}^{d-1}$ be given. By Lemma~\ref{lem:center-independence}, we can assume that $x_T=0$ for all $T$. Now let $\{z_T\},\{X_T\}$ be an optimal sequence for $\Phi_{\mathrm{vac}}(z,\nu)$.  By Lemma~\ref{lem:center-independence} and Corollary~\ref{cor:translation-independence}, we can assume that $z_T=(R_T,\tau,1)$ and up to a (not relabeled) subsequence we have
\begin{align}\label{ineq:almost-optimal-phivac}
\frac{1}{T^{d-1}}E_1(X_T,Q^\nu_T)\leq\Phi_{\mathrm{vac}}(z,\nu)+\eta_T\,,
\end{align}
where $\eta_T\to 0$ as $T \to +\infty$. We define $X_T^{\mathrm{rot}}=M_TX_T$, where $M_T=RR_T^{-1}$ with $M_T\to\id$.  Here, $M_T\to\id$ in $SO(d)$ follows as the quotient map is a local diffeomorphism, see above \eqref{def:isometryspace}. Then, as $X_T \in \mathrm{Adm}_{1,\lambda}^{(z_T,\textbf{0})}(Q_T^\nu)$ we have $X_T^{\mathrm{rot}} \in \mathrm{Adm}_{1,\lambda}^{(z,\textbf{0})}(M_TQ_T^\nu)$
Using Lemma~\ref{lem:elementary-properties}(iv) we have
\begin{align} \label{ineq:XT-Xrot}
\begin{split}
E_1(X_T,Q_T^\nu)=E_1(X_T^{\mathrm{rot}},M_TQ_T^\nu)\geq\inf\{E_1(X,M_TQ_T^\nu)\mid X \in \mathrm{Adm}_{1,\lambda}^{(z,\textbf{0})}(M_TQ_T^\nu)\}\,.
\end{split}
\end{align}
Notice that the asymptotic cell problem is a problem with fixed boundary values, but it is defined on a sequence of rotating cubes. Now, using \eqref{ineq:almost-optimal-phivac} and \eqref{ineq:XT-Xrot}, it suffices to show
\begin{align}\label{ineq:fixed-cubes-rotated-cubes}
\begin{split}
\liminf_{T\to\infty}\frac{1}{T^{d-1}}\inf\{E_1(X,M_TQ_T^\nu)&\mid X \in \mathrm{Adm}_{1,\lambda}^{(z,\textbf{0})}(M_TQ_T^\nu)\}\\&
\geq\liminf_{T\to\infty}\frac{1}{T^{d-1}}\inf\{E_1(X,Q^\nu_T)\mid X \in \mathrm{Adm}_{1,\lambda}^{(z,\textbf{0})}(Q_T^\nu) \}\,.
\end{split}
\end{align} 
As $M_T \to \mathrm{Id}$, for every $\delta>0$ there exists $T(\delta)$ such that for $T\geq T(\delta)$ it holds $\vert M_T\nu-\nu\vert\leq\delta$ and $\vert M_TR_\nu e_i-R_\nu e_i\vert\leq\delta$ for $i=1,\dots,d-1$. This implies
$M_TQ_{T}^\nu\subset Q^\nu_{(1+2\sqrt{d}\delta)T}$. Indeed, if $x\in M_TQ_T^\nu$ we have $\vert x\vert\leq\frac{\sqrt{d}}{2}{T}$ and therefore
\begin{align*}
\vert\langle\nu,x\rangle\vert\leq\vert\langle M_T\nu,x\rangle\vert+\vert \langle M_T\nu-\nu,x\rangle\vert\leq\frac{T}{2}+\vert M_T\nu-\nu\vert\,\vert x\vert\leq(1+\sqrt{d}\delta)\frac{T}{2}\,.
\end{align*}
Similarly, one can argue to estimate $|\langle R_\nu e_i, x\rangle|$ for $i=1,\ldots,d-1$, which implies the aforementioned inclusion. We now set $T_{\delta}:=(1+2\max(\sqrt{d},\lambda)\delta)T$ so that $M_TQ_T^\nu\subset Q^\nu_{T_\delta}$.  Moreover, for $T$ large enough, we have 
\begin{align}\label{eq:intersection-bdry}
\p_{\lambda} Q^\nu_{T_\delta}\cap M_TQ_{T+\lambda}^\nu=\emptyset\,.
\end{align}
For each $T\geq T(\delta)$ we choose a configuration $X_T \in \mathrm{Adm}_{1,\lambda}^{(z,\textbf{0})}(M_T Q^\nu_T)$, $X_T \subset \mathcal{L}(z)$ (this is possible due to  Lemma~\ref{lem:separation-2-lattices} and Lemma~\ref{lem:independence-of-bdry-layer}), and
\begin{align}\label{eq:XToptforMT}
E_1(X_T,M_TQ_T^\nu)\leq\inf\{E_1(X,M_TQ_T^\nu)\mid  X\in \mathrm{Adm}_{1,\lambda}^{(z,\textbf{0})}(M_T Q^\nu_T)\}+\frac1T\,.
\end{align}
 We define
\begin{align}\label{def:XTdelta}
X_{T_\delta}:=\begin{cases}
X_T&\text{in }M_TQ_{T+\lambda}^\nu\,,\\
\mc{L}(z)&\text{in } \{x\mid\langle\nu,x\rangle\geq r_{\mathrm{int}}\}\setminus M_TQ_{T+\lambda}^\nu\,,\\
\emptyset&\text{else}.
\end{cases}
\end{align}
Due to \eqref{eq:intersection-bdry} we have that $X_{T_\delta} \in \mathrm{Adm}_{1,\lambda}^{(z,\textbf{0})}(Q_{T_\delta}^\nu)$ and, as $X_{T_\delta} \subset \mathcal{L}(z)$, $E_1(X_{T_\delta},Q^\nu_{T_\delta})<+\infty$. We claim that
\begin{align}\label{ineq:final-estimate-fixedbdry}
E_1(X_{T_\delta},Q^\nu_{T_\delta})\leq E_1(X_T,M_TQ_T^\nu)+C\delta T^{d-1}
\end{align}
for some dimensional constant $C>0$. We postpone the proof of \eqref{ineq:final-estimate-fixedbdry} and show how we can conclude once it is proven. Noting that $T\leq T_\delta \leq CT$ and  $X_{T_\delta} \in \mathrm{Adm}_{1,\lambda}^{(z,\textbf{0})}(Q_{T_\delta}^\nu)$ we have
\begin{align*}
\liminf_{T_\delta \to +\infty}  \frac{1}{T_\delta^{d-1}} \inf\{E_1(X,Q_{T_\delta}^\nu)\mid  X\in \mathrm{Adm}_{1,\lambda}^{(z,\textbf{0})}(Q^\nu_{T_\delta})\}&\leq \liminf_{T_\delta \to +\infty}  \frac{1}{T_\delta^{d-1}} E_1(X_{T_\delta},Q^\nu_{T_\delta}) \\&\leq \liminf_{T \to +\infty}  \frac{1}{T^{d-1}} E_1(X_T,M_TQ_T^\nu) +\delta\,.
\end{align*} 
Recalling \eqref{eq:XToptforMT} concludes the proof by letting $\delta \to 0$. It now remains to prove \eqref{ineq:final-estimate-fixedbdry}. To this end, we set $A_T=\{x\mid\vert\langle x,\nu_T\rangle\vert\leq 2r_{\mathrm{int}}\} \cap Q_{T_\delta}^\nu \setminus M_TQ^\nu_{T-2r_{\mathrm{int}}}$. By Lemma~\ref{lem:elementary-properties}(iv) we can write
\begin{align}\label{eq:decompXTdelta}
E_1(X_{T_\delta},Q^\nu_{T_\delta})=E_1(X_{T_\delta},M_TQ_T^\nu \setminus A_T)+E_1(X_{T_\delta},(Q^\nu_{T_\delta}\cup A_T)\setminus M_TQ_T^\nu)\,.
\end{align} 
As $ X_{T_\delta} \cap  \overline{(M_TQ_T^\nu \setminus A_T)_{r_{\mathrm{int}}}} = X_{T} \cap  \overline{(M_TQ_T^\nu \setminus A_T)_{r_{\mathrm{int}}}}$, by Lemma~\ref{lem:elementary-properties}(iii),(x) we have
\begin{align}\label{ineq:XTdelta-inside}
E_1(X_{T_\delta},M_TQ_T^\nu \setminus A_T) = E_1(X_{T},M_TQ_T^\nu \setminus A_T) \leq E_1(X_{T},M_TQ_T^\nu)\,.
\end{align} 
Furthermore, note that for all $x \in X_{T_\delta} \cap Q^\nu_{T_\delta}\setminus (M_TQ_T^\nu \cup A_T)$ we have $X_{T_\delta} \cap \overline{B}_{r_{\mathrm{crys}}}(x) = \mathcal{L}(z) \cap \overline{B}_{r_{\mathrm{crys}}}(x)$ and therefore, due to \ref{H3}, we have $E_\mathrm{cell}(x,X_{T_\delta})=0$. Additionally, $\mathcal{L}^d((A_T)_1) \leq C\delta T^{d-1}$ and therefore, using \ref{H2}, we have
\begin{align*}
E_1(X_{T_\delta},(Q^\nu_{T_\delta}\cup A_T)\setminus M_TQ_T^\nu) \leq C\#(A_T\cap X_{T_\delta}) \leq C\delta T^{d-1}\,.
\end{align*} 
This, together with \eqref{eq:decompXTdelta} and \eqref{ineq:XTdelta-inside}, shows \eqref{ineq:final-estimate-fixedbdry}. This concludes the proof.
\end{proof}
\begin{remark}\label{rem:continuity} The same proof as the proof of Lemma~\ref{lem:fixed-bdryvalues} shows that for every $z\in \mathcal{Z}\setminus \{{\bf 0}\}$ the function $\nu \mapsto \varphi_{\mathrm{vac}}(z,\nu)$ is continuous.
\end{remark}
\section[Cell Formula III]{Cell Formula III: Relation of converging and fixed boundary values}\label{sec:converging-fixed-bdryvalues}
\subsection{Converging to fixed boundary values}\label{subsec:8.1}
In this section we show that for $z^+,z^-\in\mc{Z}\setminus\{\textbf{0}\}, z^+\neq z^-$ the energy density of a transition from $z^+$ to $z^-$ is given as the sum of the energies of the transitions from the lattice $\mathcal{L}(z^+)$ to vacuum and from vacuum to $\mathcal{L}(z^-)$. In order to prove this, we will use the results of Section~\ref{sec:reduction-two-lattices} and Section~\ref{sec:vacuum-energy}. The main result of this section is
\begin{lem}\label{lem:Phi=varphi}
Let $E_{\mathrm{cell}}$ satisfy {\rm\ref{H1}}--{\rm\ref{H10}}. Let $\nu\in\mathbb{S}^{d-1}$ and let $z^+,z^-\in\mc{Z}\setminus\{\bf{0}\}$ such that $z^+\neq z^-$. Then it holds
\begin{align}\label{lem-eq:Phi=varphi}
\Phi(z^+,z^-,\nu)=\Phi_{\mathrm{vac}}(z^+,\nu)+\Phi_{\mathrm{vac}}(z^-,-\nu)=\varphi_{\mathrm{vac}}(z^+,\nu)+\varphi_{\mathrm{vac}}(z^-,-\nu)= \varphi(z^+,z^-,\nu)\,.
\end{align}
\end{lem}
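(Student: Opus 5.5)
The chain of equalities in \eqref{lem-eq:Phi=varphi} will follow from two inequalities, a lower bound by separation and an upper bound by gluing, with the middle equality coming for free from results of Section~\ref{sec:vacuum-energy}. We first dispose of the middle equality. Lemma~\ref{lem:fixed-bdryvalues} gives $\Phi_{\mathrm{vac}}(z^+,\nu)=\varphi_{\mathrm{vac}}(z^+,\nu)$, and likewise $\Phi_{\mathrm{vac}}(z^-,-\nu)=\varphi_{\mathrm{vac}}(z^-,-\nu)$. Lemma~\ref{lem:independence-of-orientation} identifies $\Phi(\mathbf{0},z^-,\nu)=\Phi_{\mathrm{vac}}(z^-,-\nu)$. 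It therefore suffices to prove
\[
\Phi_{\mathrm{vac}}(z^+,\nu)+\Phi_{\mathrm{vac}}(z^-,-\nu)\le \Phi(z^+,z^-,\nu)\le \liminf_{T\to\infty}\tfrac{1}{T^{d-1}}\inf_{\mathrm{fixed}}\le \limsup_{T\to\infty}\tfrac{1}{T^{d-1}}\inf_{\mathrm{fixed}}\le \varphi_{\mathrm{vac}}(z^+,\nu)+\varphi_{\mathrm{vac}}(z^-,-\nu).
\]
Here $\inf_{\mathrm{fixed}}$ denotes the infimum over $\mathrm{Adm}^{(z^+,z^-)}_{1,\lambda}(Q^\nu_T)$, i.e.\ the rescaled quantity defining $\varphi(z^+,z^-,\nu)$ in \eqref{eq:propdensity}. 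The sandwich then shows that the limit defining $\varphi(z^+,z^-,\nu)$ exists and equals the common value. The second inequality is trivial, since the constant sequences $z^\pm_T=z^\pm$ are admissible in \eqref{def:Phi-equivalent}.

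\textbf{Lower bound.} Take an optimal sequence $X_T\in\mathrm{Adm}^{(z^+_T,z^-_T)}_{1,\lambda}(Q^\nu_T(y_T))$ for $\Phi(z^+,z^-,\nu)$ with $z_T^\pm\to z^\pm$. Since $z^+\neq z^-$, we have $z^+_T\neq z^-_T$ for $T$ large, and both are different from $\mathbf{0}$. By Lemma~\ref{lem:independence-of-bdry-layer} we may assume $\lambda>8r_{\mathrm{int}}$. We then apply Lemma~\ref{lem:separation-2-lattices} to obtain $X^\pm\subset X_T$ with $X^+\in\mathrm{Adm}^{(z_T^+,\mathbf{0})}_{1,\hat\lambda}$, $X^-\in\mathrm{Adm}^{(\mathbf{0},z_T^-)}_{1,\hat\lambda}$ and $E_1(X^+,Q^\nu_T)+E_1(X^-,Q^\nu_T)\le E_1(X_T,Q^\nu_T)$. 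We divide by $T^{d-1}$, take the $\liminf$, and use superadditivity of $\liminf$ together with Lemma~\ref{lem:independence-of-bdry-layer} once more to pass from $\hat\lambda$ back to $\lambda$. This yields $\Phi(z^+,z^-,\nu)\ge\Phi(z^+,\mathbf{0},\nu)+\Phi(\mathbf{0},z^-,\nu)$, and the orientation lemma converts the second term into $\Phi_{\mathrm{vac}}(z^-,-\nu)$.

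\textbf{Upper bound (main obstacle).} The difficulty is to glue fixed-boundary vacuum competitors for $z^+$ and for $z^-$ so that they do not interact. Simply superimposing them on one cube fails, because both may occupy the region near the interface. Following the tiling construction of Lemma~\ref{lem:independence-of-orientation}, I fix $1\ll S\ll T$. On the lattice of centers $k\in R_\nu(S\mathbb{Z}^{d-1}\times\{0\})$ with $Q^\nu_S(k)$ deep inside $Q^\nu_T$, I place almost optimal fixed-boundary competitors for $(z^+,\mathbf{0})$ on the cubes $Q^\nu_S(k+(S/2+r_{\mathrm{int}})\nu)$. Their existence, with fixed $z^+$ and arbitrary centers and an error $\eta_S$ uniform in the center, follows from Lemma~\ref{lem:fixed-bdryvalues} and Lemma~\ref{lem:center-independence}. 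In the same way I place competitors for $(\mathbf{0},z^-)$ on the cubes $Q^\nu_S(k-(S/2+r_{\mathrm{int}})\nu)$. The remaining space is filled with $\mathcal{L}(z^+)$ above the upper row, with $\mathcal{L}(z^-)$ below the lower row, and with vacuum in between and near the lateral faces, so that the resulting configuration lies in $\mathrm{Adm}^{(z^+,z^-)}_{1,\lambda}(Q^\nu_T)$. The two rows of cubes are separated by a slab of vacuum of width $2r_{\mathrm{int}}$, and the lower boundary layer of each upper cube is vacuum. Hence, by Lemma~\ref{lem:elementary-properties}(viii),(x), the energy in each cube equals that of the corresponding competitor.

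It remains to control the leftover energy. The atoms with nonzero cell energy outside the cubes lie near the lateral boundary, within distance of order $S$ of the hyperplanes through $\pm(S/2+r_{\mathrm{int}})\nu$. By Lemma~\ref{lem:elementary-properties}(v) and \ref{H2} they contribute at most $CS^2T^{d-2}$. This gives
\[
\tfrac{1}{T^{d-1}}\inf_{\mathrm{fixed}}\le \varphi_{\mathrm{vac}}(z^+,\nu)+\varphi_{\mathrm{vac}}(z^-,-\nu)+2\eta_S+CS^2/T .
\]
Letting $T\to\infty$ and then $S\to\infty$ proves the upper bound. Here Remark~\ref{rem:limit} guarantees that the vacuum densities are genuine limits, so almost optimal cubes exist for every large $S$. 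The points requiring care are the interaction between adjacent cubes, which is harmless because neighbouring competitors agree with $\mathcal{L}(z^+)$ or are empty in their boundary layers, and the lateral mismatch, which is absorbed in the $CS^2T^{d-2}$ error.
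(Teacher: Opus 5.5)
Your proposal is correct and follows essentially the same route as the paper. The lower bound uses Lemma~\ref{lem:separation-2-lattices} together with Lemma~\ref{lem:independence-of-bdry-layer} and Lemma~\ref{lem:independence-of-orientation}, and the upper bound glues two rows of almost-optimal fixed-boundary $S$-cubes with the lattices $\mathcal{L}(z^\pm)$, at a cost of $O(S^2T^{d-2})$. The differences from the paper are cosmetic: you place the cubes against a $2r_{\mathrm{int}}$ vacuum slab instead of centering them at $\pm S\nu$, and you use the vacuum boundary layers rather than $X^\pm_{j,S}\subset\mathcal{L}(z^\pm)$ to rule out interaction. You also state explicitly that $z_T^+\neq z_T^-$ for large $T$ and that the sandwich gives existence of the limit defining $\varphi$. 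One small correction: in the outermost cubes of each row the energy equals the competitor's energy only up to a lower-order lateral-mismatch term, not exactly, but your error bound absorbs this.
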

\begin{proof} We claim that the statement of the Lemma follows once we show 
\begin{align}\label{ineq:phi-vacgreaterphi}
\varphi_{\mathrm{vac}}(z^+,\nu)+\varphi_{\mathrm{vac}}(z^-,-\nu)\geq \varphi(z^+,z^-,\nu)\,.
\end{align}
First of all note that, as $z_T^\pm =z^\pm$ can be chosen in the definition of $\Phi$ we have
\begin{align*}
\varphi(z^+,z^-,\nu) \geq \Phi(z^+,z^-,\nu)\,. 
\end{align*}
Additionally, due to Lemma~\ref{lem:separation-2-lattices}, Lemma~\ref{lem:independence-of-bdry-layer},Lemma~\ref{lem:independence-of-orientation}, Lemma~\ref{lem:fixed-bdryvalues}, and \eqref{ineq:phi-vacgreaterphi} we have
\begin{align*}
\Phi(z^+,z^-,\nu) &\geq \Phi(z^+,{\bf 0},\nu) +\Phi({\bf 0},z^-,\nu)  = \Phi_{\mathrm{vac}}(z^+,\nu) +\Phi_{\mathrm{vac}}(z^-,-\nu) \\&=\varphi_{\mathrm{vac}}(z^+,\nu) +\varphi_{\mathrm{vac}}(z^-,-\nu) \geq \varphi(z^+,z^-,\nu)\,.
\end{align*}
This shows that all the inequalities are actually equalities and concludes the proof once \eqref{ineq:phi-vacgreaterphi} is proven. We prove this in the following. Let $1\ll S\ll T$.  Let 
\begin{align}
\mathcal{Z}_{S,T}^\nu : = \{ x \mid x=R_\nu k \,, k \in S\mathbb{Z}^{d-1} \times \{0\}\,, Q^\nu_S(x) \subset Q_{T-\lambda}^\nu\}\,.
\end{align} 
Using  Lemma~\ref{lem:center-independence}, Lemma~\ref{lem:fixed-bdryvalues}, Lemma~\ref{lem:separation-2-lattices}, and Lemma~\ref{lem:independence-of-bdry-layer} for every $j \in \mathcal{Z}_{S,T}^\nu$ we define $x_{j,S}^\pm := j \pm S \nu$ and we choose a sequence of configurations $X_{j,S}^+ \in \mathrm{Adm}_{1,\lambda}^{(z^+,\textbf{0})}(Q^\nu_S(x_{j,S}^+))$ such that $X_{j,S}^+ \subset \mathcal{L}(z^+)$ and
\begin{align}\label{ineq:XjSplus}
E_1(X_{j,S}^+,Q_S^\nu(x_{j,S}^+))\leq S^{d-1}(\varphi_{\mathrm{vac}}(z^+,\nu)+\eta_S)
\end{align}
and $ X_{j,S}^- \in \mathrm{Adm}_{1,\lambda}^{(\textbf{0},z^-)}(Q^{\nu}_S(x_{j,S}^-))$ such that $X_{j,S}^- \subset \mathcal{L}(z^-)$ and
\begin{align}\label{ineq:XjSminus}
E_1(X_{j,S}^-,Q_S^\nu(x^-_{j,S}))\leq S^{d-1}(\varphi_{\mathrm{vac}}(z^-,-\nu)+\eta_S)\,,
\end{align}
where $\eta_S \to 0$ as $S\to +\infty$. By construction we have that for any $j_1,j_2\in \mathcal{Z}^\nu_{S,T}$ it holds 
\begin{align}\label{ineq:distance-cubes}
\dist(Q_S^\nu(x_{j_1,S}^+),Q_S^\nu(x_{j_2,S}^-))\geq 2r_{\mathrm{int}}\,.
\end{align}
We set 
\begin{align*}
\mathcal{Q}_{S,T}:= \bigcup_{j\in \mathcal{Z}^\nu_{S,T}}\left(Q^\nu_S(x_{j,S}^+)\cup Q^\nu_S(x_{j,S}^-) \right)
\end{align*}
and define, see Figure~\ref{Fig:defXT},
\begin{align}\label{def:XT}
X_T:=\begin{cases}
X_{j,S}^\pm&\text{in }Q^\nu_S(x_{j,S}^\pm),\\
\mc{L}(z^\pm)&\text{in }(\{x \in \mathbb{R}^d\mid\pm\langle x,\nu\rangle\geq S \}\setminus \mathcal{Q}_{S,T})\cup\p^\pm_{\lambda} Q^\nu_T\,,\\
\emptyset &\text{otherwise.}
\end{cases}
\end{align}
\begin{figure}
\begin{tikzpicture}[scale=.6]

\tikzset{>={Latex[width=1mm,length=1mm]}};

\fill[gray!50](-5.4,-5.4) rectangle++(10.8,4.8);
\fill[gray!80](-5.4,0.6) rectangle++(10.8,4.8);
\fill[white](-4.5,1.4)--(4.5,1.4)--(4.5,-1.4)--(-4.5,-1.4)--cycle;
\draw[thick](-5,-5)--(-5,5)--(5,5)--(5,-5)--cycle;
\draw[dashed](-5,0)--(6,0);
\draw[->](6,0)--(6,1) node[right]{$\nu$};
\draw(-4.5,0.4)--(4.5,0.4);
\draw(-4.5,1.4)--(4.5,1.4);
\draw(-4.5,-0.4)--(4.5,-0.4);
\draw(-4.5,-1.4)--(4.5,-1.4);
\draw[dashed](-4.6,-0.9)--(4.6,-0.9);
\draw[dashed](-4.6,0.9)--(4.6,0.9);

\foreach \i in {-4,...,5}{
\draw(-0.5+1*\i,0.4)--(-0.5+1*\i,1.4);
\draw(-0.5+1*\i,-0.4)--(-0.5+1*\i,-1.4);
}
\draw[<->](-5.5,-5)--(-5.5,0) node[left]{$T$}--(-5.5,5);
\draw[<->](-5.1,-0.6)--(-5.1,0.2) node[right]{\small $\sim r_{\mathrm{int}}$}--(-5.1,0.6);

\draw[<->](0.5,0.3)--(1.0,0.3) node[above]{$S$}--(1.5,0.3);
\draw(0,3) node[above]{$\mathcal{L}(z^+)$};
\draw(0,1.3)--(0.2,1.8) node[above]{$X_{j,S}^+$};
\draw(0,-3) node[below]{$\mathcal{L}(z^-)$};
\draw(0,-1.3)--(-0.2,-1.8) node[below]{$X_{j,S}^-$};

\end{tikzpicture}
\caption{The definition of $X_T$}
\label{Fig:defXT}
\end{figure}
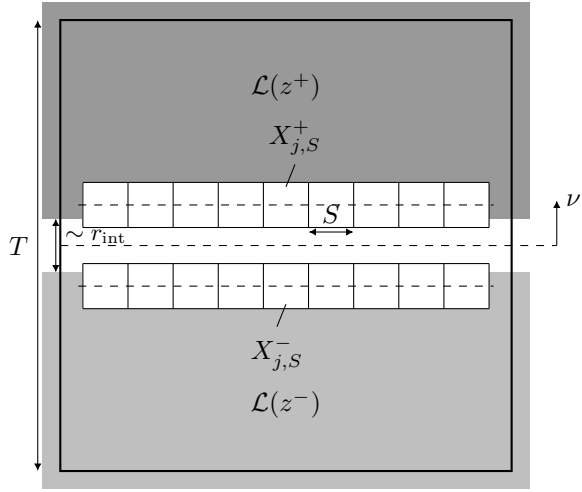
Note that by \eqref{ineq:distance-cubes}, the properties of $X_{j,S}^\pm$ we have that $ X_T \subset \mathcal{L}(z^+) \cup \mathcal{L}(z^-)$, $X_T \in \mathrm{Adm}_{1,\lambda}^{(z^+,z^-)}(Q_T^\nu)$, and $E_1(X_T,Q_T^\nu)<+\infty$. We now show  
\begin{align}\label{ineq:varphi-vac}
E_1(X_T,Q_T^\nu)\leq  T^{d-1}(\varphi_{\mathrm{vac}}(z^+,\nu)+\varphi_{\mathrm{vac}}(z^-,-\nu)+2\eta_S+CS^{-1}) + CST^{d-2}\,.
\end{align}
Clearly, once this is proven \eqref{ineq:phi-vacgreaterphi} follows by dividing by $T^{d-1}$ and sending first $T \to +\infty$ and then $S \to +\infty$. In order to prove \eqref{ineq:varphi-vac}, we notice that by Lemma~\ref{lem:elementary-properties}(iv) we have
\begin{align}\label{eq:energy-decomp-XT}
E_1(X_T,Q^\nu_T)=E_1(X_T,Q^\nu_T\setminus \mathcal{Q}_{S,T})+\sum_{j\in \mathcal{Z}^\nu_{S,T}}\left( E_1(X_T,Q^\nu_S(x_{j,S}^+))+E_1(X_T,Q^\nu_S(x_{j,S}^-))\right)\,.
\end{align}
Note that for all $j \in \mathcal{Z}^\nu_{S,T}$ and all $x \in X_T \cap Q^\nu_S(x_{j,S}^\pm)$ we have $X_T \cap \overline{B}_{r_\mathrm{int}}(x) = X_{j,S}^\pm \cap \overline{B}_{r_\mathrm{int}}(x)$, whenever $x\notin A_j^{\pm}:=(Q^\nu_S(x_{j,S}^\pm)\setminus Q^\nu_{S-2r_{\mathrm{int}}}(x_{j,S}^\pm))\cap\{y\mid\vert\langle y-x_{j,S}^\pm,\nu\rangle\vert\leq 2r_{\mathrm{int}}\}$. However, we have $\#(X_T\cap A_j^\pm)\leq\mc{L}^{d}((A_j^\pm)_1)\leq CS^{d-2}$ by Lemma~\ref{lem:elementary-properties}(v) and therefore, using \eqref{ineq:XjSplus} and \eqref{ineq:XjSminus} respectively, we obtain
\begin{align*}
E_1(X_T,Q^\nu_S(x_{j,S}^+)) \leq E_1(X_{j,S}^+,Q_S^\nu(x_{j,S}^+))+CS^{d-2} \leq  S^{d-1}(\varphi_{\mathrm{vac}}(z^+,\nu)+\eta_S+CS^{-1})
\end{align*}
and 
\begin{align*}
E_1(X_T,Q^\nu_S(x_{j,S}^-)) \leq E_1(X_{j,S}^-,Q_S^\nu(x_{j,S}^-))+CS^{d-2}\leq  S^{d-1}(\varphi_{\mathrm{vac}}(z^-,-\nu)+\eta_S+CS^{-1})\,.
\end{align*}
Note that $\#\mathcal{Z}^\nu_{S,T} \leq \frac{T^{d-1}}{S^{d-1}}$ and therefore \eqref{ineq:varphi-vac} follows from \eqref{eq:energy-decomp-XT} once we prove that 
\begin{align}\label{ineq:XT-outsideQST}
E_1(X_T,Q^\nu_T\setminus \mathcal{Q}_{S,T}) \leq CS^2T^{d-2}\,.
\end{align}
To prove this estimate, we set  
\begin{align*}
A_T:=(Q^\nu_T\setminus Q^\nu_{T-2S})\cap\{y\mid\vert\langle y,\nu\rangle\vert\leq 2S\}\,.
\end{align*}
We note that by \eqref{def:XT} for all $x \in Q^\nu_T \setminus (\mathcal{Q}_{S,T} \cup A_T) $ we have $X_T \cap \overline{B}_{r_\mathrm{crys}}(x) = \mathcal{L}(z^+) \cap \overline{B}_{r_\mathrm{crys}}(x)$ or $X_T \cap \overline{B}_{r_\mathrm{crys}}(x) = \mathcal{L}(z^-) \cap \overline{B}_{r_\mathrm{crys}}(x)$ and therefore $E_{\mathrm{cell}}(x,X_T) =0$. As $ \mathcal{L}^d((A_T)_1)\leq CS^{2}T^{d-2}$ using Lemma~\ref{lem:elementary-properties}(v) and \ref{H2} there holds
\begin{align*}
E_1(X_T,Q_T^\nu\setminus \mathcal{Q}_{S,T})\leq C\#(A_T \cap X_T)  \leq C \mathcal{L}^d((A_T)_1) \leq  CS^{2}T^{d-2}\,.
\end{align*}
This is \eqref{ineq:XT-outsideQST} and concludes the proof.
\end{proof}

We are now in position to prove Proposition~\ref{prop:relationpsiphi}.

\begin{proof}[Proof of Proposition~\ref{prop:relationpsiphi}] This is a consequence of Lemma~\ref{lem:psi-phi} and Lemma~\ref{lem:Phi=varphi} together with Lemma~\ref{lem:elementary-properties}(vii).
\end{proof}

\subsection{Properties of the energy density}
This subsection is dedicated to the proof of Theorem~\ref{thm:properties-of-phi}. We observe that Theorem~\ref{thm:properties-of-phi}(i) is a consequence of Lemma~\ref{lem:independence-of-orientation} and Lemma~\ref{lem:fixed-bdryvalues}. Theorem~\ref{thm:properties-of-phi}(ii) is a consequence of Lemma~\ref{lem:Phi=varphi}, Theorem~\ref{thm:properties-of-phi}(v) is a consequence of Lemma~\ref{cor:translation-independence}, and Theorem~\ref{thm:properties-of-phi}(vi) is a consequence of Lemma~\ref{lem:elementary-properties}(ii) and Lemma~\ref{lem:independence-of-orientation}. It therefore remains to prove Theorem~\ref{thm:properties-of-phi}(iii) and (iv). 
\begin{proof}[Proof of Theorem~\ref{thm:properties-of-phi}{\rm (iii)}] It is a classical fact that the (pos.~$1$-homogenous) function $\nu \mapsto \varphi_{\mathrm{vac}}(z,\nu)$ is necessarily convex for $L^1_{\mathrm{loc}}$-lower semicontinous functionals defined on partitions, see for example \cite[Theorem~5.11(ii)]{Ambrosio-Fusco-Pallara:2000}. The functional is necessarily lower-semicontinuous with respect to the $L^1_{\mathrm{loc}}$-convergence, it being a $\Gamma$-limit (w.r.t~the strong $L^1_{\mathrm{loc}}$-convergence) of a sequence of functionals, see Theorem~\ref{thm:Gamma-convergence}.
\end{proof}
\begin{proof}[Proof of Theorem~\ref{thm:properties-of-phi}{\rm (iv)}] Fix $z \in \mathcal{Z} \setminus \{{\bf 0}\}$ and $T>0$. We define
\begin{align}\label{def:plane-min}
X_T:=\begin{cases}
\mc{L}(z)&\text{in }\{x\mid\langle x,\nu\rangle\geq r_{\mathrm{int}}\}\,,\\
\emptyset&\text{otherwise.}
\end{cases}
\end{align}
We then have $X_T \in \mathrm{Adm}_{1,\lambda}^{(z,\textbf{0})}(Q^\nu_T)$ and therefore
\begin{align*}
\inf\EEE\{E_1(X,Q_T^\nu)\mid X\in \mathrm{Adm}_{1,\lambda}^{(z,\textbf{0})}(Q^\nu_T)\}\leq E_1(X_T,Q_T^\nu)\,.
\end{align*}
In order to conclude the proof, it suffices to show that
\begin{align}\label{ineq:boundedness}
E_1(X_T,Q_T^\nu)\leq CT^{d-1}
\end{align}
for a universal constant $C$ independent of $\nu$ and $z$. To this end, note that for all atoms in $X_T\cap Q_T^\nu\cap\{x\mid|\langle x,\nu\rangle|\geq 2r_{\mathrm{int}}\}$ there holds $X_T \cap \overline{B}_{r_{\mathrm{crys}}}(x) = \mathcal{L}(z) \cap \overline{B}_{r_{\mathrm{crys}}}(x)$ and therefore, due to \ref{H3}, we have $E_{\mathrm{cell}}(x,X)=0$. On the other hand, $\mathcal{L}^d((Q_T^\nu\cap\{x\mid|\langle x,\nu\rangle|\leq 2r_{\mathrm{int}}\})_1) \leq CT^{d-1}$ and therefore, due to \ref{H2} and Lemma~\ref{lem:elementary-properties}(v), we have
\begin{align*}
E_1(X_T,Q_T^\nu) \leq  C\#\left(Q_T^\nu\cap\{x\mid|\langle x,\nu\rangle|\leq 2r_{\mathrm{int}}\}\right) \leq \mathcal{L}^d((Q_T^\nu\cap\{x\mid|\langle x,\nu\rangle|\leq 2r_{\mathrm{int}}\})_1)\leq CT^{d-1}\,.
\end{align*}
This shows \eqref{ineq:boundedness} and concludes the proof due to Theorem~\ref{thm:properties-of-phi}(iii).
\end{proof}
\appendix

\section{Examples} \label{sec:examples}
In this section, we give examples of interactions that satisfy assumptions \ref{H1}-\ref{H10}.  
\subsection{Integer lattice $\Z^d$}
We set $\mc{L}=\Z^d$. Note that $\mc{L}$ clearly fulfills the conditions \ref{L1}-\ref{L4} with $R_V=\frac{\sqrt{d}}{2}, S_V=1$.  Next, we set
\begin{align}\label{def:unscaled-energy-Z2}
E(X) = \sum_{x\in X} E_{\mathrm{cell}}(x,X)\,,
\end{align}
where, setting $\mathcal{N}_1(x)= \{y \in X\setminus \{x\} \mid |x-y|\leq 1\}$, we define
\begin{align}\label{def:single-interaction}
E_{\mathrm{cell}}(x,X):=\begin{cases}
\frac{1}{2}\left(2d-\#\mathcal{N}_1(x)) +\sum_{y,z\in\mc{N}(x), y\neq z}V_3(\theta_{yxz})\right)&\text{if } \dist(\{x\},X\setminus\{x\})\geq1\,,\\
+\infty&\text{otherwise.}
\end{cases}
\end{align}
Here $\theta_{yxz}$ is the angle spanned by the vectors $y-x$ and $z-x$ in anti-clockwise orientation and $V_3:[0,2\pi]\to\R$ is an angle potential given by
\begin{align}\label{def:angle-potential}
V_3(\theta):=\begin{cases}
0& \text{if }\theta=0\mod\frac{\pi}{2}\,,\\
C_{V_3}&\text{otherwise.}
\end{cases}
\end{align}
where $C_{V_3}>0$ is a large enough constant. Here, the atoms are modeled to interact via the sticky-disc potential, see e.g.~\cite{FriedrichKreutzSchmidt,HeitmannRadin:80}. The $2d$ is a normalization constant corresponding to the number of nearest neighbors in a perfect lattice. 

Also by choosing $C_{V_3}>0$ large enough the cell energy readily fulfills \ref{H1}-\ref{H7} for $d\leq4$, with $r_{\mathrm{int}}=r_{\mathrm{crys}}=1$, some $C$ large enough and some $c$ small enough depending on $C_{V_3}$. Furthermore, for this type of energy, we have $d_{\mathrm{unique}}=2d-3$.

Now for $d=2$ the energy also fulfills \ref{H8}-\ref{H10}, if $C_{V_3}>8$. This can be seen as follows:\newline
\ref{H8} follows readily by non-negativity of the cell energy and $d_{\mathrm{unique}}=1$. \\
Now assume that there is a $x\in X$ such that $\forall z\in\mc{Z}$ it holds $\{x\}\cup\mc{N}(x)\nsubseteq\mc{L}(z)$, then this implies that there are $y,w\in\mc{N}(x)$ such that $\theta_{yxw}\neq0\mod\frac{\pi}{2}$. Without loss of generality, we can assume finite energy for $X$ as otherwise, this is trivial. Then it holds $\#\mc{N}(x)\leq 6$. So removing $x$ yields an increase of cell energy by $1$ in each of its neighbors. However, as $C_{V_3}>8$ and $4-\#\mc{N}(x)\geq-2$ this implies that $E_{\mathrm{cell}}(x,X)>6$, which yields the validity of \ref{H9}.\\
To show the validity of \ref{H10}, assume you have two neighbors $x,y\in X$, where $X$ is a finite energy configuration. As we can associate a lattice to $x$ and $y$ this implies $\max(\#\mc{N}(x),\#\mc{N}(y))\leq 4$ and all angles are multiples of $\frac{\pi}{2}$. In particular, by this condition we have $x\in\mc{L}(z(y))$. We claim that this already implies $z(x)=z(y)$ and so \ref{H10} is an empty condition. Let $z(y)=(R_y,\tau_y,1),z(x)=(R_x,\tau_x,1)$. Then by assumption $\vert x-y\vert=1$, it holds 
\begin{align*}
y=R_x(\tilde{y}+\tau_x)=R_y(\bar{y}+\tau_y)\,,\\
x=R_x(\tilde{x}+\tau_x)=R_y(\bar{x}+\tau_y)\,.
\end{align*}
for some $\tilde{x},\bar{x},\tilde{y},\bar{y}\in\Z^2$. Furthermore, this implies $\tilde{w}=\tilde{y}-\tilde{x}\in\{\pm e_1,\pm e_2\},\bar{w}=\bar{y}-\bar{x}\in\{\pm e_1,\pm e_2\}$
But now this shows
\begin{align*}
R_x\tilde{w}=y-x=R_y\bar{w}\,,
\end{align*}
which implies as $\tilde{w},\bar{w}\in\{\pm e_1,\pm e_2\}$ that $R_y^{-1}R_x$ is a symmetry of $\Z^2$, so by construction of $\mc{Z}$ this shows $R_y=R_x$. But then the image of $x$ under $\mc{L}(z(x))$ and $\mc{L}(z(y))$ differs only by  a vector in $\Z^2$, which implies $\tau_x=\tau_y$ and so finally $z(x)=z(y)$.
Notice that this argumentation only works in dimension $2$. \\
Indeed, for $d=3$, i.e. $\Z^3$ \ref{H10} does not hold. Here it holds $d_{unique}=3$. Now setting $X=\{0,e_3,2e_3,e_3+e_1,e_3-e_1,Re_1,-Re_1\}$, where $R$ is a $\frac{\pi}{4}$-rotation in the $xy$-plane, then we can associate to $e_3$ $\Z^3$ as its lattice and to $0$ $R\Z^3$, but removing either $e_3$ or $0$ increases the energy. A similar argument can be adapted for higher dimensions. Via classical slicing techniques, it can be shown that
\begin{align*}
\varphi_{\mathrm{vac}}(\nu) = \frac{1}{2}\|\nu\|_1\,.
\end{align*}
\subsection{Honeycomb lattice}
We set $d=2$ and
\begin{align}\label{def:Honeycomb}
\mc{L}=\{\lambda_1v_1+\lambda_2e_2+\delta e_1\mid \lambda_1,\lambda_2\in\Z,\delta\in\{\pm1\}\}\,,
\end{align}
where $v_1=\begin{pmatrix}0 &\sqrt{3}\end{pmatrix}^T,v_2=\begin{pmatrix}\frac32&\frac{\sqrt{3}}{2}\end{pmatrix}^T$. Clearly,  $\mathcal{L}$ fulfills \ref{L1}-\ref{L4}. We define
\begin{align}\label{\theequation}
 E_{\mathrm{cell}}(x,X):=\begin{cases}
\frac{1}{2}\left(3-\#\mc{N}(x)\right)+\frac{1}{2}\sum_{y,z\in\mc{N}(x), y\neq z}V_3(\theta_{yxz})&\text{if } \dist(\{x\},X\setminus\{x\})\geq1,\\
+\infty&\text{else.}
\end{cases} 
\end{align}
Here $\theta_{yxz}$ is as before and $V_3:[0,2\pi]\to\R$ is again an angle potential given by
\begin{align}\label{def:angularpotential-honeycomb}
V_3(\theta):=\begin{cases}
0& \text{if }\theta=0\mod\frac{2\pi}{3}\,,\\
C_{V_3}&\text{otherwise,}
\end{cases}
\end{align}
where $C_{V_3}>0$ is a large enough constant. Using similar arguments as for $\Z^2$, one can again show the validity of \ref{H1}-\ref{H10}. Using \cite[Proposition~2.6]{Chambolle-Kreutz}, it can be shown that
\begin{align*}
\varphi_{\mathrm{vac}}(\nu) = \frac{1}{6}\min\left\{|\sqrt{3}\nu_1+\nu_2|+|\sqrt{3}\nu_1-\nu_2|,|\sqrt{3}\nu_1+\nu_2|+2|\nu_2|,|\sqrt{3}\nu_1-\nu_2|+2|\nu_2|\right\}\,.
\end{align*}

\end{document}